\title{Characteristic Formulae for Session Types (extended version)}
\author{Julien Lange \and Nobuko Yoshida}
\institute{Imperial College London}
\DeclareMathAlphabet{\mathpzc}{OT1}{pzc}{m}{it}
\newcommand{\bigo}[1]{\mathcal{O}(#1)}
\newcommand{\xqedhere}[2]{  \rlap{\hbox to#1{\hfil\llap{\ensuremath{#2}}}}}
\newenvironment{mydef}[1][\unskip]{
  \begin{definition}[#1]
    \rm
    \pushQED{\qed}}  {\popQED\end{definition}}
\newcommand{\bnfsep}{\;\mid\;}
\newcommand{\qst}{\, : \,}
\newcommand{\st}{\, \mid \,}
\newcommand{\defi}{\overset{{\text{def}}}{=}}
\newcommand{\truek}{\top} \newcommand{\falsek}{\bot} 
\newcommand{\VARS}{\mathcal{V}}
\newcommand{\var}[1]{\mathbf{#1}}
\newcommand{\varX}{\var{x}}
\newcommand{\varY}{\var{y}}
\newcommand{\subs}[2]{[\nicefrac{#1}{#2}]}
\newcommand{\compset}[1]{\neg {#1}}
\newcommand{\FORMULA}{\mathcal{F}}
\newcommand{\FORMULAC}{\FORMULA_{\mathit{c}}}
\newcommand{\mmbox}[2]{[ #1 ] #2}
\newcommand{\mmdiamond}[2]{\langle #1 \rangle #2}
\newcommand{\mmnu}[1]{\nu \var{#1} .\,}
\newcommand{\mmnuND}[1]{\nu \var{#1} }
\newcommand{\approxi}[2]{(#1)^{#2}}
\newcommand{\fora}{\phi}
\newcommand{\forb}{\psi}
\newcommand{\ekmodels}{\models_{e,k}}
\newcommand{\msg}[1]{\mathit{#1}}
\newcommand{\Actions}{\mathcal{A}}
\newcommand{\NDActions}{\mathbb{A}}
\newcommand{\any}{{\Op \! a}}
\newcommand{\anydir}[1]{{\Op \msg{#1}}}
\newcommand{\coanydir}[1]{{\dual{\Op} \msg{#1}}}
\newcommand{\snd}[1]{\sendop \! \msg{#1}}
\newcommand{\rcv}[1]{\rcvop \! \msg{#1}}
\newcommand{\TYPES}{\mathcal{T}}
\newcommand{\TYPESC}{\TYPES_{\mathit{c}}}
\DeclareMathOperator{\mysum}{\with}
\newcommand{\inchoicetop}{\oplus} \newcommand{\outchoicetop}{\mysum}
\newcommand{\echoice}{\outchoicetop}
\newcommand{\ichoice}{\inchoicetop}
\newcommand{\inchoiceop}{\bigoplus} \newcommand{\outchoiceop}{\bigwith}
\newcommand{\inchoiceSet}[3]{\inchoiceop_{#1 \in #2} {\snd{\msg{a}}_{#1}} . \, {#3}_{#1}}
\newcommand{\outchoiceSet}[3]{\outchoiceop_{#1 \in #2} {\rcv{\msg{a}}_{#1}} . \, {#3}_{#1}}
\newcommand{\choiceSet}[3]{\bigOp_{#1 \in #2}  {\anydir{\! a}_{#1}} . \, {#3}_{#1}}
\newcommand{\cochoiceSet}[3]{\dual{\bigOp}_{#1 \in #2} \; {\anydir{\!a}_{#1}} . \, {#3}_{#1}}
\newcommand{\choiceSetNoIdx}[3]{\bigOp_{#1 \in #2} \; {\anydir{\!a}_{#1}} . \, {#3}}
\newcommand{\cochoiceSetNoIdx}[3]{\dual{\bigOp}_{#1 \in #2} \; {\anydir{\!a}_{#1}} . \, {#3}}
\newcommand{\cocochoiceSetNoIdx}[3]{\dual{\bigOp}_{#1 \in #2} \; {\coanydir{a}_{#1}} . \, {#3}}
\newcommand{\choice}{\choiceSet{i}{I}{T}} 
\newcommand{\cochoice}{\cochoiceSet{i}{I}{T}} 
\newcommand{\inchoice}{\inchoiceSet{i}{I}{T}} 
\newcommand{\outchoice}{\outchoiceSet{i}{I}{T}}
\newcommand{\rec}[1]{\mathtt{rec}\, \var{#1} . }
\newcommand{\recND}[1]{\mathtt{rec}\, \var{#1} }
\newcommand{\tend}{\mathtt{end}}
\DeclareMathOperator{\spar}{\, | \,}
\newcommand{\inferrule}[1]{\textsc{\scriptsize [#1]}}
\newcommand{\inference}[3]{\infer[\ifthenelse{\equal{#1}{}}{}{\inferrule{#1}}]{#3}{#2}}
\newcommand{\coinference}[3]{\infer=[\ifthenelse{\equal{#1}{}}{}{\inferrule{#1}}]{#3}{#2}}
\DeclareMathOperator{\absubOp}{\unlhd}
\newcommand{\absub}[1]{\absubOp^{{\small #1}}}
\DeclareMathOperator{\subtype}{\leq}
\DeclareMathOperator{\supertype}{\geq}
\newcommand{\ekabsub}{\absub{\bigOp}_{e,k}}
\newcommand{\semarrow}[1]{\xrightarrow{#1}}
\newcommand{\nsemarrow}{\!\!\nrightarrow}
\newcommand{\semarrowC}{\semarrow{}^{\ast}}
\newcommand{\opfun}[1]{
  \ifthenelse{\equal{#1}{\bigcoOp}}
  {\dual{\Op}}
  {\Op}
}
\newcommand{\subform}[1]{\Anyform{#1}{\inchoicetop}} \newcommand{\SUPform}[1] {\Anyform{#1}{\outchoicetop}} \newcommand{\charforname}{\mbox{\boldmath{$F$}}}
\newcommand{\Anyform}[2]{\,{\charforname}(#1,#2)}
\newcommand{\bigtOp}{\maltese} \DeclareMathOperator*{\bigOp}{\scalerel*{\bigtOp}{\Sigma}} \newcommand{\bigcoOp}{\dual{\bigtOp}}
\DeclareMathOperator{\sendop}{!}
\DeclareMathOperator{\rcvop}{?}
\DeclareMathOperator{\Op}{\dagger} \newcommand{\dual}[1]{\overline{#1}}
\newcommand{\nummsg}[1]{\mathit{num}(#1)}
\newcommand{\unfold}[1]{\mathit{unf}(#1)}
\newcommand{\unfoldP}[2]{\mathit{unf}(#1)}
\newcommand{\varocc}[2]{\lvert #1 \rvert_{#2}}
\newcommand{\cons}{\mathfrak{C}}
\newcommand{\choicecons}[2]{#1_{#2}}
\newcommand{\automaton}{\mathcal{M}}
\newcommand{\autoa}{\automaton}
\newcommand{\autob}{\mathcal{N}}
\newcommand{\autof}[1]{\autoa(#1)}
\newcommand{\state}{q}
\newcommand{\States}{Q}
\newcommand{\labs}{\ell}
\newcommand{\edge}{\delta}
\newcommand{\ksub}{\sqsubseteq}
\newcommand{\nksub}{\nsqsubseteq}
\newcommand{\autoprodop}{\blacktriangleleft}
\newcommand{\autoprod}[2]{#1 \autoprodop #2}
\newcommand{\Pstate}{p}
\newcommand{\PStates}{P}
\newcommand{\Pedge}{\Delta}
\newcommand{\PFinals}{F}
\tikzset{
    every state/.style={minimum size=7pt,inner sep=2pt, initial text={},rectangle,rounded corners=3pt},
  mycfsm/.style={
    font=\footnotesize,
    initial where=left,
    ->,>=stealth,auto, node distance=1cm and 1cm,
    scale=1, every node/.style={transform shape},
    baseline=(current  bounding  box.center)
  }
}
\newcommand{\GHsubalgo}{\subtype_{c}}
\newcommand{\GHenvi}{\Gamma}
\newcommand{\judge}[3]{#1 \vdash #2  \GHsubalgo #3}
\newcommand{\myplotsize}{0.482}
\newcommand{\titlelabel}[2]{(#1) #2}
\begin{document}

\maketitle

\begin{abstract}
  Subtyping is a crucial ingredient of session type theory and its
  applications, notably to programming language implementations.
    In this paper, we study effective ways to check whether a session
  type is a subtype of another by applying a characteristic formulae
  approach to the problem.
    Our core contribution is an algorithm to generate a modal
  $\mu$-calculus formula that characterises all the supertypes (or
  subtypes) of a given type.
    Subtyping checks can then be off-loaded to model checkers, thus
  incidentally yielding an efficient algorithm to check safety of
  session types, soundly and completely.
      We have implemented our theory and compared its cost with other
  classical subtyping algorithms.
\end{abstract}

\section{Introduction}
\paragraph{\bf Motivations}
Session types~\cite{THK94,HVK98,betty-survey} have emerged as a
fundamental theory to reason about concurrent programs, whereby not
only the data aspects of programs are typed, but also their
\emph{behaviours} wrt.\ communication.
Recent applications of session types to the
reverse-engineering of large and complex distributed
systems~\cite{zdlc,LTY15} have led to the need of handling potentially
large and complex session types.
Analogously to the current trend of modern compilers to rely on
external tools such as SMT-solvers to solve complex constraints and
offer strong guarantees~\cite{haskell-smt,haskell-measures,LY12,Leino10},
state-of-the-art model checkers can be used to off-load expensive
tasks from session type tools
such as~\cite{scribble,LTY15,YHNN2013}.

A typical use case for session types in software (reverse-)
engineering is to compare the type of an existing program with a
candidate replacement, so to ensure that both are ``compatible''.
In this context, a crucial ingredient of session type theory is the
notion of \emph{subtyping}~\cite{GH99,DemangeonH11,CDY2014} which
plays a key role to guarantee safety of concurrent programs while
allowing for the refinement of specifications and implementations.
Subtyping for session types relates to many classical theories such as
simulations and pre-orders in automata and process algebra theories;
but also to subtyping for recursive types in the
$\lambda$-calculus~\cite{AC93}.
The characteristic formulae
approach~\cite{GS86,si94,steffen89,ails12,ai97,ai07,cs91}, which has
been studied since the late eighties as a method to compute
simulation-like relations in process algebra and automata, appears
then as an evident link between subtyping in session type theory and
model checking theories.
In this paper, we make the first formal connection between session
type and model checking theories, to the best of our knowledge.
We introduce a novel approach to session types subtyping based on
characteristic formulae; and
thus establish that subtyping for session types can be decided in
quadratic time wrt.\ the size of the types.
This improves significantly on the classical algorithm~\cite{GH05}.
Subtyping can then be reduced to a model checking problem and
thus be discharged to powerful model checkers.
Consequently, any advance in model checking technology has an impact
on subtyping.

\paragraph{\bf Example}
Let us illustrate what session types are and what subtyping covers.
Consider a simple protocol between a server and a client, from the
point of view of the server.
The client sends a message of type $\msg{request}$ to the server who
decides whether or not the request can be processed by replying
$\msg{ok}$ or $\msg{ko}$, respectively.
If the request is rejected, the client is offered another chance to
send another request, and so on.
This may be described by the \emph{session type} below
\begin{equation}\label{ex:intro-example-u-1}
  U_1 =
  \rec{\varX}
  \rcv{request} .
  \{ 
  \snd{ok}  . \tend
  \;\,
  \inchoicetop 
  \;
  \snd{ko} . \varX
  \,
  \}
\end{equation}
where $\recND{\varX}$ binds variable $\varX$ in the rest of the
type, 
$\rcv{msg}$ (resp.\ $\snd{msg}$) specifies the reception (resp.\
emission) of a message $\msg{msg}$, $\inchoicetop$ indicates
an \emph{internal choice} between two behaviours, and $\tend$
signifies the termination of the conversation.
An implementation of a server can then be \emph{type-checked} against
$U_1$.

The client's perspective of the protocol may be specified by the
\emph{dual} of $U_1$: \begin{equation}\label{ex:intro-example-u-2}
  \dual{U}_1 = 
  U_2 = 
  \rec{\varX}
  \snd{request} .
  \{ 
  \rcv{ok}  . \tend
  \;\,
  \outchoicetop
  \;
  \rcv{ko} . \varX
  \,
  \}
\end{equation}
where $\outchoicetop$ indicates an \emph{external choice}, i.e., the
client expects two possible behaviours from the server.
A classical result in session type theory essentially says that if the
types of two programs are \emph{dual} of each other, then their
parallel composition is free of errors (e.g., deadlock).

Generally, when we say that $\mathtt{integer}$ is a subtype of
$\mathtt{float}$, we mean that one can safely use an $\mathtt{integer}$
when a $\mathtt{float}$ is required.
Similarly, in session type theory, if $T$ is a \emph{subtype} of a
type $U$ (written $T \subtype U$), then $T$ can be used whenever $U$
is required.
Intuitively, a type $T$ is a \emph{subtype} of a type $U$ if $T$ is
ready to receive no fewer messages than $U$, and $T$ may
not send more messages than $U$~\cite{DemangeonH11,CDY2014}.
For instance, we have
\begin{equation}\label{ex:intro-example-subs}
  \begin{array}{l}
    T_1
    =
    \rcv{request} .  \snd{ok}  . \tend
    \; \subtype  \;
    U_1
    \\
    T_2
    =
    \rec{\varX}
    \snd{request} .
    \{ 
    \rcv{ok}  . \tend
    \,
    \outchoicetop
    \rcv{ko} . \varX
    \,
    \outchoicetop
    \rcv{error} . \tend
    \,
    \}
    \; \subtype  \;
    U_2
  \end{array}
\end{equation}
A server of type $T_1$ can be used whenever a server of type
$U_1$~\eqref{ex:intro-example-u-1} is required ($T_1$ is a more
refined version of $U_1$, which always accepts the request).
A client of type $T_2$ can be used whenever a client of type
$U_2$~\eqref{ex:intro-example-u-2} is required since $T_2$ is a type
that can deal with (strictly) more messages than $U_2$.

In Section~\ref{subsec:CF}, we will see that a session type can be
naturally transformed into a $\mu$-calculus formula that characterises
all its subtypes.
The transformation notably relies on the diamond modality to make some
branches mandatory, and the box modality to allow some branches to be
made optional; see Example~\ref{ex:char-formula}.

\paragraph{\bf Contribution \& synopsis}
In \S~\ref{sec:session-type-theory} we recall session types and give a
new abstract presentation of subtyping.
In \S~\ref{sec:mucal-char-formu} we present a fragment of the
modal $\mu$-calculus and, following~\cite{steffen89}, 
we give a simple algorithm to generate a $\mu$-calculus formula from a
session type that characterises either all its subtypes or all its
supertypes.
In \S~\ref{sec:safety}, 
building on results from~\cite{CDY2014}, we give a sound and
complete model-checking characterisation of safety for session types.
In \S~\ref{sec:algos}, we present two other subtyping algorithms 
for session types:
Gay and Hole's classical algorithm~\cite{GH05} based on inference
rules that unfold types explicitly; and
an adaptation of Kozen et al.'s automata-theoretic
algorithm~\cite{KPS95}.
In \S~\ref{sec:tool},
we evaluate the cost of our approach by comparing its
performances against the two algorithms from \S~\ref{sec:algos}.
Our performance analysis is notably based on a tool that generates
arbitrary well-formed session types.
We conclude and discuss related works in \S~\ref{sec:related}.
Due to lack of space, full proofs are relegated to
Appendix~\ref{app:proofs} (also available online~\cite{appendix}).
Our tool and detailed benchmark results are available
online~\cite{tool}.

\section{Session types and subtyping}\label{sec:session-type-theory}
Session types are abstractions of the behaviour of a program wrt.\ the
communication of this program on a given \emph{session} (or
conversation), through which it interacts with another program (or
component).

\subsection{Session types}\label{sub:session-types}
We use a two-party version of the multiparty session types in~\cite{DY13}.
For the sake of simplicity, we focus on first order session types
(that is, types that carry only simple types (sorts) or values and not
other session types). We discuss how to lift this restriction in
Section~\ref{sec:conc}.
Let $\VARS$ be a countable set of variables (ranged over by $\varX,
\varY$, etc.);
let $\NDActions$ be a (finite) alphabet, ranged over by $a$, $b$,
etc.; and $\Actions$ be the set defined as $\{ \snd{a} \st a \in
\NDActions \} \cup \{ \rcv{a} \st a \in \NDActions \}$.
We let $\Op$ range over elements of $\{ !, ? \}$, so that $\any$
ranges over $\Actions$.
The syntax of session types is given by 
\[
T \coloneqq 
\tend
\bnfsep
\inchoice
\bnfsep
\outchoice
\bnfsep
\rec{\varX} T
\bnfsep
\varX
\]
where $I \neq \emptyset$ is finite,
$a_i \in \NDActions$ for all $i \in I$,
$\msg{a}_i \neq \msg{a}_j$ for $i \neq j$,
and $\varX \in \VARS$.
Type $\tend$ indicates the end of a session.
Type $\inchoice$ specifies an \emph{internal} choice, indicating that
the program chooses to send one of the $\msg{a}_i$ messages, then
behaves as $T_i$.
Type $\outchoice$ specifies an \emph{external} choice, saying that the
program waits to receive one of the $\msg{a}_i$ messages, then
behaves as $T_i$.
Types $\rec{\varX} T$ and $\varX$ are used to specify recursive
behaviours.
We often write, e.g., $\{\snd{a}_1. T_1 \ichoice {\ldots} \ichoice
\snd{a}_k . T_k \}$ for $\inchoiceop_{1 \leq i \leq k} { \snd{a}_i
  . T_i}$, write $\snd{a_1} . T_1$ when $k =1$, similarly for $\outchoice$,
and omit trailing
occurrences of $\tend$.

The sets of free and bound variables of a type $T$ are defined as
usual (the unique binder is the recursion operator $\rec{\varX} T$).
For each type $T$, we assume that two distinct occurrences of a
recursion operator bind different variables, and that no variable has
both free and bound occurrences.
In coinductive definitions, we take an equi-recursive view of types,
not distinguishing between a type $\rec{\varX} T$ and its unfolding $T
\subs{\rec{\varX} T}{\varX}$.
We assume that each type $T$ is
\emph{contractive}~\cite{piercebook02}, e.g., $\rec{\varX} \varX$ is
not a type.
Let $\TYPES$ be the set of all (contractive) session types and
$\TYPESC \subseteq \TYPES$ the set of all closed session types (i.e.,
which do not contain free variables).

\begin{figure}[t]
\[
  \inference{T-out}
  {j \in I}
  {\inchoice \semarrow{\snd{a_j}} T_j }
  \qquad
  \inference{T-in}
  {j \in I}
  {\outchoice \semarrow{\rcv{a_j}} T_j }
  \qquad
  \inference
  {T-rec}
  {T \subs{\rec{\varX} T}{\varX} \semarrow{\any} T'}
  {\rec{\varX} T \semarrow{\any} T'}
  \]
\caption{LTS for session types in $\TYPESC$}\label{fig:lts-types}
\end{figure}
A session type $T \in \TYPESC$ induces a (finite) \emph{labelled
  transition system} (LTS) according to the rules in
Figure~\ref{fig:lts-types}.
We write $T \semarrow{\any}$ if there is $T' \in \TYPES$ such that $T
\semarrow{\any} T'$
and
write $T \nsemarrow$ if $\forall \any \in \Actions \qst \neg (T
\semarrow{\any} )$.
\subsection{Subtyping for session types}
Subtyping for session types was first studied in~\cite{GH99} and
further studied in~\cite{DemangeonH11,CDY2014}.
It is a crucial notion for practical applications of session
types, as it allows for programs to be \emph{refined} while preserving
safety.

We give a definition of subtyping which is parameterised wrt.\
operators $\inchoicetop$ and $\outchoicetop$, so to allow us to
give a common characteristic formula construction for both the subtype
and the supertype relations, cf.\ Section~\ref{subsec:CF}.
Below, we let $\bigtOp$ range over $\{\inchoicetop, \outchoicetop \}$.
When writing $\choice$, we take the convention that $\Op$ refers to
$!$ iff $\bigtOp$ refers to $\inchoicetop$ (and vice-versa for $?$ and
$\outchoicetop$).
We define the (idempotent) duality operator
$\dual{\phantom{\outchoiceop}}$ as follows:
$\dual{\vphantom{\outchoiceop}\inchoicetop} \defi \outchoicetop$,
$\dual{\outchoicetop} \defi \inchoicetop$, $\dual{!} \defi ?$, and $\dual{?} \defi
!$.

\begin{mydef}[Subtyping]\label{def:ab-subtype}
    Fix $\bigtOp \in \{\inchoicetop, \outchoicetop \}$,
  $\absub{\bigOp} \subseteq \TYPESC \times \TYPESC$ is the
  \emph{largest} relation that contains the rules:
  \[
  \resizebox{\textwidth}{!}{$
        \coinference{S-\ensuremath{{\bigtOp}}}
    {
      I \subseteq J
      &
      \forall i \in I \qst T_i \absub{\bigOp}  U_i
    }
    {
      \choice \absub{\bigOp} \choiceSet{j}{J}{U}
    }
        \quad
    \coinference{S-end}
    {}
    {\tend \absub{\bigOp} \tend} 
    \quad
    \coinference{S-\ensuremath{\dual{\bigtOp}}}
    {
      J \subseteq I
      &
      \forall j \in J \qst T_j \absub{\bigOp}  U_j
    }
    {
      \cochoice \absub{\bigOp}    \cochoiceSet{j}{J}{U}
    }
    $}
  \]
  The double line in the rules indicates that the rules
  should be interpreted \emph{coinductively}.
    Recall that we are
  assuming an equi-recursive view of types.
\end{mydef}
We comment Definition~\ref{def:ab-subtype} assuming that $\bigtOp$ is
set to $\inchoicetop$.
Rule $\inferrule{S-\ensuremath{{\bigOp}}}$ says that a type
$\inchoiceSet{j}{J}{U}$ can be replaced by a type that offers no more
messages, e.g., 
$\snd{a} \absub{\,\inchoicetop} \snd{a} \inchoicetop \snd{b}$.
Rule $\inferrule{S-\ensuremath{\dual{\bigOp}}}$ says that a type
$\outchoiceSet{j}{J}{U}$ can be replaced by a type that is ready to
receive at least the same messages, e.g., 
$\rcv{a} \outchoicetop \rcv{b} \absub{\,\inchoicetop} \rcv{a}$.
Rule $\inferrule{S-end}$ is trivial.
It is easy to see that $\absub{\,\inchoicetop} = (
\absub{\,\outchoicetop} )^{-1}$.
In fact, we can recover the subtyping of~\cite{DemangeonH11,CDY2014}
(resp.~\cite{GH99,GH05}) from $\absub{\bigOp}$, by instantiating
$\bigtOp$ to $\inchoicetop$ (resp.\ $\outchoicetop$).
\begin{example}
  Consider the session types from~\eqref{ex:intro-example-subs},
  we have
  $T_1 \absub{\,\inchoicetop} U_1$, $U_1 \absub{\,\outchoicetop} T_1$,
    $T_2 \absub{\,\inchoicetop} U_2$, and $U_2 \absub{\,\outchoicetop} T_2$.
\end{example}
Hereafter, we will write $\subtype$ (resp.\
$\supertype$) for the pre-order $\absub{\,\inchoicetop}$ (resp.\
$\absub{\,\outchoicetop}$).

\section{Characteristic formulae for subtyping}\label{sec:mucal-char-formu}
We give the core construction of this paper:
a function that given a (closed) session type $T$ returns a modal
$\mu$-calculus formula~\cite{Kozen83} that characterises either all
the supertypes of $T$ or all its subtypes.
Technically, we ``translate'' a session type $T$ into a modal
$\mu$-calculus formula $\fora$, so that $\fora$ characterises
all the supertypes of $T$ (resp.\ all its subtypes).
Doing so, checking whether $T$ is a subtype (resp.\ supertype) of $U$
can be reduced to checking whether $U$ is a model of $\fora$, i.e.,
whether $U \models \fora$ holds.

The constructions presented here follow the theory first established
in~\cite{steffen89}; which gives a characteristic formulae approach
for (bi-)simulation-like relations over finite-state processes,
notably for CCS processes.
\subsection{Modal $\mu$-calculus}\label{sub:mucal}
In order to encode subtyping for session types as a model checking
problem it is enough to consider the fragment of the modal $\mu$
calculus below:
\[
\fora \; \coloneqq  \;
\truek 
\bnfsep
\falsek 
\bnfsep
\fora \land \fora 
\bnfsep
\fora \lor \fora 
\bnfsep
\mmbox{\any}{\fora}
\bnfsep
\mmdiamond{\any}{\fora}
\bnfsep
\mmnu{\varX} \fora
\bnfsep
\varX
\]
Modal operators $\mmbox{\any}{}$ and $\mmdiamond{\any}{}$ have
precedence over Boolean binary operators $\land$ and $\lor$; the
greatest fixpoint point operator $\mmnuND{\varX}$ has the lowest
precedence (and its scope extends as far to the right as possible).
Let $\FORMULA$ be the set of all (contractive) modal $\mu$-calculus formulae
and $\FORMULAC \subseteq \FORMULA$ be the set of all closed formulae.
Given a set of actions $A \subseteq \Actions$, we write $\compset{A}$
for $\Actions \setminus A$, and $\mmbox{A}{\fora}$ for $\bigwedge_{\any
  \in A}\mmbox{\any}{\fora}$.

The $n^{th}$ approximation of a fixpoint formula is defined as follows:
\[
\approxi{\mmnu{\varX} \fora}{0}
\; \defi  \;
\truek 
\qquad\qquad\quad
\approxi{\mmnu{\varX} \fora}{n} 
\; \defi \;
\fora \subs{\approxi{\mmnu{\varX} \fora}{n-1}}{\varX} \qquad \text{if } n >0
\]
A \emph{closed} formula $\fora$ is interpreted on the labelled
transition system induced by a session type $T$. 
The satisfaction relation $\models$ between session types and formulae
is inductively defined as follows:
\newcommand{\musemanticdist}{\quad}
\[
\begin{array}
  {
    l@{\musemanticdist \mathit{iff} \musemanticdist}
    l@{\quad}
    l@{\musemanticdist \mathit{iff} \musemanticdist}
    l}
      \multicolumn{2}{l}{T \models \truek
        \vphantom{  T  \semarrow{\any} }
  }
  \\
  T \models \fora_1 \!\land\! \fora_2 
  & T \models \fora_1 \text{ and } T \models \fora_2
  \\
  T \models \fora_1 \!\lor\! \fora_2 & T \models \fora_1 \text{ or } T \models \fora_2
    \vphantom{  T  \semarrow{\any} }
    \\
  T \models \mmbox{\any}{\fora}
  & \forall T' \in \TYPESC \qst \text{if } 
  T  \semarrow{\any} T' \text{ then } T' \models \fora
  \\
  T \models \mmdiamond{\any}{\fora}
  & \exists T' \in \TYPESC \qst
  T  \semarrow{\any} T' \text{ and } T' \models \fora
  \\
  T \models \mmnu{\varX} \fora 
  &
  \forall n \geq 0 \qst T \models \approxi{ \mmnu{\varX} \fora }{n}
    \vphantom{  T  \semarrow{\any} }
  \\
\end{array}
\]
Intuitively, 
$\truek$ holds for every $T$ (while $\falsek$ never holds).
Formula $\fora_1 \land \fora_2$ (resp.\ $\fora_1 \lor \fora_2$) holds if both
components (resp.\ at least one component) of the formula hold in $T$.
The construct $\mmbox{\any}{\fora}$ is a \emph{modal} operator that is satisfied
if for each $\msg{\any}$-derivative $T'$ of $T$, the formula $\fora$ holds in $T'$.
The dual modality is $\mmdiamond{\any}{\fora}$ which holds if there is an
$\msg{\any}$-derivative $T'$ of $T$ such that $\fora$ holds in $T'$.
Construct $\mmnu{\varX} \fora$ is the \emph{greatest} fixpoint
operator (binding $\varX$ in $\fora$).

\subsection{Characteristic formulae}
\label{subsec:CF}

We now construct a $\mu$-calculus formula from a (closed) session
types, parameterised wrt.\ a constructor $\bigtOp$.
This construction is somewhat reminiscent of the \emph{characteristic
  functional} of~\cite{steffen89}.
\begin{mydef}[Characteristic formulae]
  \label{def:char-formula}
    The characteristic formulae of $T \in \TYPESC$ on $\bigtOp$ is given
  by function
    $\charforname : \TYPESC \times \{ \inchoicetop , \outchoicetop \}
  \rightarrow \FORMULAC$, defined as:
    \[
  \Anyform{T}{\bigtOp} \defi 
  \begin{cases}
    \bigwedge_{i \in I} \, \mmdiamond{\opfun{\bigOp}{a_i}}{\Anyform{T_i}{\bigtOp}}
    &
    \text{if } T = \choice
    \\
        \bigwedge_{i \in I} \, \mmbox{\opfun{\bigOp}{a_i}}{\Anyform{T_i}{\bigtOp}}
    &
    \text{if } T = \cochoice
    \\
    \quad
    \land \,
    \bigvee_{i \in I} \, \mmdiamond{\opfun{\bigOp}{a_i}}{\truek}
    \, \land \,
    \mmbox{ \compset{ \{ \opfun{\bigOp}{a_i} \st i \in I\} }}{\falsek}
        \\
        \mmbox{\Actions}{\falsek}
    &
    \text{if } T = \tend
    \\
    \mmnu{\varX} \Anyform{T'}{\bigtOp}
    &
    \text{if } T =  \rec{\varX} T'
    \\
        \varX
    &
    \text{if } T = \varX
                \xqedhere{59pt}{\qed}
          \end{cases}
    \]
  \renewcommand{\qed}{}
\end{mydef}
Given $T \in \TYPESC$, $\Anyform{T}{\ichoice}$ is a $\mu$-calculus
formula that characterises all the \emph{supertypes} of $T$; while
$\Anyform{T}{\echoice}$ characterises all its \emph{subtypes}.
For the sake of clarity, we comment on
Definition~\ref{def:char-formula} assuming that $\bigtOp$ is set to
$\ichoice$.
The first case of the definition makes every branch \emph{mandatory}. 
If $T = \inchoice$, then every internal choice branch that $T$ can
select must also be offered by a supertype, and the relation must hold
after each selection.
The second case makes every branch \emph{optional} but requires at
least one branch to be implemented.
If $T = \outchoice$, then 
($i$) for each of the $\rcv{a_i}$-branch offered by a supertype, the
relation must hold in its $\rcv{a_i}$-derivative,
($ii$) a supertype must offer at least one of the $\rcv{a_i}$
branches, and
($iii$) a supertype cannot offer anything else but the $\rcv{a_i}$
branches.
If $T = \tend$, then a supertype cannot offer any behaviour (recall
that $\falsek$ does not hold for any type).
Recursive types are mapped to greatest fixpoint constructions.

Lemma~\ref{lem:compo} below states the compositionality of the
construction, while Theorem~\ref{thm:main-theorem}, our main result,
reduces subtyping checking to a model checking problem.
A consequence of Theorem~\ref{thm:main-theorem} is that the
characteristic formula of a session type precisely specifies the set
of its subtypes or supertypes.
 
\begin{restatable}{lemma}{lemcompo}
  \label{lem:compo}
  $\Anyform{T \subs{U}{\varX}}{\bigtOp} = \Anyform{T}{\bigtOp} \subs{ \Anyform{U}{\bigtOp} }{\varX}$
\end{restatable}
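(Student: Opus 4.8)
The plan is to prove the identity by structural induction on $T$, treating $\charforname$ as a purely syntactic translation that commutes with substitution because it maps the type variable $\varX$ to the homonymous formula variable $\varX$ and sends each type constructor to a corresponding formula context. Note that the $=$ in the statement is syntactic equality of formulae, so we may reason directly from the defining clauses of Definition~\ref{def:char-formula}. Throughout, type variables and formula variables are drawn from the same set $\VARS$, so a single notion of capture-avoiding substitution and of free variable is in play on both sides; the observation driving every inductive step is that substitution on formulae distributes through $\land$, $\lor$, and the modalities $\mmbox{\any}{}$ and $\mmdiamond{\any}{}$, reaching only the variable occurrences buried in the sub-formulae.

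For the base cases, $T = \tend$ gives $\Anyform{\tend}{\bigtOp} = \mmbox{\Actions}{\falsek}$ on both sides, since this formula is closed and $\tend\subs{U}{\varX} = \tend$; $T = \varX$ reduces to $\varX\subs{\Anyform{U}{\bigtOp}}{\varX} = \Anyform{U}{\bigtOp} = \Anyform{\varX\subs{U}{\varX}}{\bigtOp}$; and $T = \varY$ with $\varY \neq \varX$ is fixed by both substitutions. For the choice cases $T = \choice$ and $T = \cochoice$, the type substitution $\subs{U}{\varX}$ acts on each continuation $T_i$ independently, while on the formula side $\subs{\Anyform{U}{\bigtOp}}{\varX}$ distributes over the outer $\bigwedge$ and $\bigvee$ and the modal prefixes and leaves the variable-free components ($\truek$, $\falsek$, and the guard $\mmbox{\compset{\{\cdots\}}}{\falsek}$) untouched; applying the induction hypothesis to each $\Anyform{T_i}{\bigtOp}$ then closes these cases. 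None of these steps inspects whether $\bigtOp$ is $\inchoicetop$ or $\outchoicetop$, so the result holds uniformly for both instantiations.

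The main obstacle is the recursion case $T = \rec{\varY} T'$, where one must push the substitution under the greatest-fixpoint binder $\mmnuND{\varY}$ without capturing $\varY$. Here I would invoke the paper's variable convention---distinct recursion operators bind distinct variables, and no variable occurs both free and bound---to guarantee that $\varY \neq \varX$ and that $\varY$ does not occur free in $U$, so that $(\rec{\varY} T')\subs{U}{\varX} = \rec{\varY}(T'\subs{U}{\varX})$. To transfer the second condition to the formula side, I would record the auxiliary fact, itself a one-line induction over the same case analysis, that $\charforname$ preserves free variables, i.e.\ $\mathit{fv}(\Anyform{U}{\bigtOp}) = \mathit{fv}(U)$; hence $\varY \notin \mathit{fv}(\Anyform{U}{\bigtOp})$ and the substitution commutes with $\mmnuND{\varY}$. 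The induction hypothesis on $T'$ then yields $\Anyform{(\rec{\varY} T')\subs{U}{\varX}}{\bigtOp} = \mmnu{\varY}(\Anyform{T'}{\bigtOp}\subs{\Anyform{U}{\bigtOp}}{\varX}) = (\mmnu{\varY}\Anyform{T'}{\bigtOp})\subs{\Anyform{U}{\bigtOp}}{\varX}$, which is exactly the required identity.
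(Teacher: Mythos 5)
Your proof is correct and follows essentially the same route as the paper's: a structural induction on $T$ with the same case split ($\tend$, $\varX$, $\varY \neq \varX$, the two choice forms, and recursion), using distribution of formula substitution over the connectives and modalities and the closedness of the $\truek$/$\falsek$ guard conjuncts. The only difference is that you make explicit the variable-convention and free-variable-preservation argument needed to push the substitution under the $\nu$-binder, which the paper leaves implicit.
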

\begin{proof}
  By structural induction, see Appendix~\ref{proof:lemcompo}.
\end{proof}

\begin{restatable}{theorem}{thmmaintheorem}
  \label{thm:main-theorem}
  $\forall T, U \in \TYPESC  \qst 
  T \absub{\bigOp} U
  \iff
  U \models \Anyform{T}{\bigtOp}$
\end{restatable}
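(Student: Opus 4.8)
The plan is to prove the set identity $\absub{\bigOp} = R$, where $R \defi \{(T,U) \st U \models \Anyform{T}{\bigtOp}\}$, by establishing the two inclusions separately: $R \subseteq \absub{\bigOp}$ by coinduction and $\absub{\bigOp} \subseteq R$ by a dual (Knaster--Tarski) fixpoint argument. First I would record two preliminaries. By Lemma~\ref{lem:compo}, when $T = \rec{\varX}T'$ we have $\Anyform{T}{\bigtOp} = \mmnu{\varX}\Anyform{T'}{\bigtOp}$ and $\Anyform{\unfold{T}}{\bigtOp} = \Anyform{T'}{\bigtOp}\subs{\Anyform{T}{\bigtOp}}{\varX}$; combined with the fixpoint--unfolding property of $\mmnu{\varX}$ --- which holds here because a closed type induces a \emph{finite} LTS, so the decreasing chain $\approxi{\mmnu{\varX}\fora}{n}$ stabilises at the greatest fixpoint --- this yields $U \models \Anyform{T}{\bigtOp}$ iff $U \models \Anyform{\unfold{T}}{\bigtOp}$. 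Since the equi-recursive reading identifies a type with its unfolding on \emph{both} sides of the equivalence, I may always assume $T$ and $U$ are in head normal form, i.e.\ a choice or $\tend$. Finiteness of the LTS also lets me pass freely between the approximant definition of $\models$ and the equivalent greatest--post--fixpoint (Knaster--Tarski) reading, which I use for the forward direction.

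For $R \subseteq \absub{\bigOp}$ it suffices, by coinduction, to show $R$ is closed under the rules of Definition~\ref{def:ab-subtype}. Take $(T,U) \in R$, put both types in head normal form, and case split on $T$. If $T = \choice$, then $U$ satisfies $\bigwedge_{i \in I}\mmdiamond{\opfun{\bigOp}{a_i}}{\Anyform{T_i}{\bigtOp}}$, so $U$ admits an $\opfun{\bigOp}{a_i}$-transition for every $i \in I$; as only a $\bigOp$-choice fires such labels, $U = \choiceSet{j}{J}{U}$ with $I \subseteq J$, and the (unique) $\opfun{\bigOp}{a_i}$-derivative $U_i$ satisfies $\Anyform{T_i}{\bigtOp}$, i.e.\ $(T_i,U_i) \in R$; this is precisely the premise of $\inferrule{S-\ensuremath{{\bigOp}}}$. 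If $T = \cochoice$, the conjunct $\mmbox{\compset{\{\opfun{\bigOp}{a_i} \st i \in I\}}}{\falsek}$ forbids any other label, the disjunction $\bigvee_{i \in I}\mmdiamond{\opfun{\bigOp}{a_i}}{\truek}$ forces at least one branch, and the boxes constrain the derivatives, so $U = \cochoiceSet{j}{J}{U}$ with $J \subseteq I$ and $(T_j,U_j) \in R$ for all $j \in J$, matching $\inferrule{S-\ensuremath{\dual{\bigOp}}}$. If $T = \tend$, then $U \models \mmbox{\Actions}{\falsek}$ forces $U \nsemarrow$, hence $U = \tend$ in head normal form, matching $\inferrule{S-end}$. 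This proves closure, so $R \subseteq \absub{\bigOp}$.

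For $\absub{\bigOp} \subseteq R$ I argue dually, using $\absub{\bigOp}$ itself as the fixpoint invariant. With the Knaster--Tarski reading of $\models$ in hand, to establish $U \models \Anyform{T}{\bigtOp}$ it suffices, for each $\mmnu{\varX}$ coming from a recursion $\rec{\varX}T'$, to exhibit a post-fixpoint of the corresponding semantic functional that contains the relevant derivative of $U$. The natural candidate is the set $\{U' \st \rec{\varX}T' \absub{\bigOp} U'\}$ of all supertypes of that recursive type. Verifying that this set is a post-fixpoint is the \emph{same} head-normal-form case analysis as above, now read from the subtyping side: if $T \absub{\bigOp} U$ with $T$ a $\bigOp$-choice, then $\inferrule{S-\ensuremath{{\bigOp}}}$ forces $U$ to offer at least the branches of $T$ with $\absub{\bigOp}$-related derivatives, discharging every diamond; the $\cochoice$ case discharges the boxes, the disjunction and the complement box $\mmbox{\compset{\cdot}}{\falsek}$ via $\inferrule{S-\ensuremath{\dual{\bigOp}}}$; and the $\tend$ case discharges $\mmbox{\Actions}{\falsek}$ via $\inferrule{S-end}$. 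Knaster--Tarski then places $U$ inside the denotation of each greatest fixpoint, so $U \models \Anyform{T}{\bigtOp}$.

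The delicate step is the forward inclusion $\absub{\bigOp} \subseteq R$: $\absub{\bigOp}$ is a \emph{greatest} relation whose witnessing derivations may be non-well-founded, whereas $\models$ is defined through the \emph{descending} chain of finite approximants, so the two notions of ``largest'' must be reconciled. The bridge is finiteness of the induced LTS (ensuring the approximants stabilise and agree with the Knaster--Tarski semantics) together with Lemma~\ref{lem:compo}, which guarantees that unfolding a recursive type matches exactly the unfolding of the associated $\mmnu{\varX}$; this is what makes the invariant $\{U' \st \rec{\varX}T' \absub{\bigOp} U'\}$ stable under transitions and lets nested fixpoints be treated uniformly. Once these are in place the two directions are mirror images, and the Boolean and modal cases are routine.
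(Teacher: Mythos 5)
Your proposal is correct in outline, but it follows a genuinely different route from the paper's. The paper (Lemma~\ref{lem:main-lemma} in Appendix~\ref{app:proofs}) does not split the equivalence into two inclusions handled by two different fixpoint principles; instead it stratifies \emph{both} sides into $k$-indexed approximants --- an extended $k$-limited subtyping $\absub{\bigOp}_{e,k}$ and a $k$-limited satisfaction $\ekmodels$, each generalised to open types and formulae via universally quantified substitution vectors --- proves $T \absub{\bigOp}_{e,k} U \iff U \ekmodels \Anyform{T}{\bigtOp}$ by a single induction on $k$ (whose case analysis on $T$ being $\tend$, a $\bigtOp$-choice or a $\dual{\bigtOp}$-choice is essentially the one you carry out), and then recovers the unstratified statement from Lemmas~\ref{lem:ext-subtype} and~\ref{lem:ext-models}, both resting on finiteness of the induced LTS. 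Your two halves are the coinductive shadow of that induction: the argument that $R$ is closed under the rules of Definition~\ref{def:ab-subtype} is the paper's $(\Leftarrow)$ step at level $k$, and your Knaster--Tarski invariant $\{U' \st \rec{\varX}T' \absub{\bigOp} U'\}$ does the work of its $(\Rightarrow)$ step. What the stratification buys is exactly the bookkeeping you compress into ``lets nested fixpoints be treated uniformly'': since $\models$ is \emph{defined} through descending approximants and $\Anyform{T}{\bigtOp}$ may contain nested $\mmnu{\varX}$'s with open bodies, a rigorous Knaster--Tarski argument needs a semantics for open formulae (environments or substitution vectors) and a simultaneous generalisation of the claim to open types; for inner recursions the invariant must also be the supertypes of the subterm \emph{with the outer recursions already substituted in}. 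That is precisely what the extended relations $\emodels$ and $\absub{\bigOp}_e$, together with Lemmas~\ref{lem:compo} and~\ref{lem:fix-point-prop}, formalise --- so a fully detailed version of your proof would end up re-deriving them. In exchange, your route is more conceptually transparent (satisfaction as a post-fixpoint of the subtyping functional, the supertype set as a post-fixpoint of the $\nu$-functional) and avoids introducing the $k$-indexed families; both arguments ultimately lean on the same two pillars you correctly identify, namely finiteness of the LTS and compositionality under substitution.
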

\begin{proof}
  The proof essentially follows the techniques of~\cite{steffen89}, see
  Appendix~\ref{proof:thmmaintheorem}.
\end{proof}

\begin{corollary}\label{cor:TsubU} The following holds:
  \[
  \begin{array}{c@{\qquad\quad}c}
    \begin{array}{l@{\;\,}l}
      (a) & T \subtype U \iff U \models \subform{T}
      \\
      (b)  & U \supertype T \iff T \models \SUPform{U}
    \end{array}
    &
    \begin{array}{l@{\;\,}l}
      (c) & U \models \subform{T}  \iff T \models \SUPform{U}
    \end{array}
  \end{array}
  \]
                  \end{corollary}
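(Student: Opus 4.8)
The plan is to derive Corollary~\ref{cor:TsubU} directly from Theorem~\ref{thm:main-theorem} by instantiating the parameter $\bigtOp$ and unfolding the notational conventions introduced in Section~\ref{sec:session-type-theory}. Recall from the end of Definition~\ref{def:ab-subtype} that $\subtype$ is defined to be $\absub{\,\inchoicetop}$ and $\supertype$ to be $\absub{\,\outchoicetop}$, and from the remark following Definition~\ref{def:char-formula} that $\subform{T} = \Anyform{T}{\inchoicetop}$ characterises the supertypes of $T$ while $\SUPform{U} = \Anyform{U}{\outchoicetop}$ characterises the subtypes of $U$. So parts $(a)$ and $(b)$ are essentially Theorem~\ref{thm:main-theorem} read off at the two instances of $\bigtOp$, and $(c)$ is a bridge between them.

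For part $(a)$, I would instantiate Theorem~\ref{thm:main-theorem} with $\bigtOp = \inchoicetop$, obtaining $T \absub{\,\inchoicetop} U \iff U \models \Anyform{T}{\inchoicetop}$ for all $T, U \in \TYPESC$. Substituting $\absub{\,\inchoicetop} = \subtype$ and $\Anyform{T}{\inchoicetop} = \subform{T}$ yields $(a)$ verbatim. For part $(b)$, I would instead instantiate Theorem~\ref{thm:main-theorem} with $\bigtOp = \outchoicetop$, which gives $U \absub{\,\outchoicetop} T \iff T \models \Anyform{U}{\outchoicetop}$. Rewriting $\absub{\,\outchoicetop}$ as $\supertype$ and $\Anyform{U}{\outchoicetop}$ as $\SUPform{U}$ produces $(b)$ directly.

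For part $(c)$, the key auxiliary fact is the inversion identity noted in the text after Definition~\ref{def:ab-subtype}, namely $\absub{\,\inchoicetop} = (\absub{\,\outchoicetop})^{-1}$, equivalently $T \subtype U \iff U \supertype T$. Chaining this with parts $(a)$ and $(b)$ gives the desired equivalence: starting from $(a)$, $U \models \subform{T} \iff T \subtype U$; by the inversion identity $T \subtype U \iff U \supertype T$; and by $(b)$, $U \supertype T \iff T \models \SUPform{U}$. Composing these three biconditionals yields $U \models \subform{T} \iff T \models \SUPform{U}$, which is $(c)$.

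I do not expect a genuine obstacle here, since every ingredient is already available: Theorem~\ref{thm:main-theorem} supplies both directions after instantiation, and the inversion identity is asserted in the excerpt. The only point requiring care is bookkeeping the parameter convention — in particular ensuring that when writing $\choice$ the operator $\Op$ is tied to $!$ precisely when $\bigtOp$ is $\inchoicetop$ — so that the roles of subtype versus supertype and of $\subform{\,\cdot\,}$ versus $\SUPform{\,\cdot\,}$ are matched correctly to the right instance of $\bigtOp$. Once the two instantiations of Theorem~\ref{thm:main-theorem} are lined up with the $\subtype/\supertype$ abbreviations, all three parts follow by pure rewriting and one application of the inversion identity.
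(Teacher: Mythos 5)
Your proposal is correct and matches the paper's own proof, which likewise obtains $(a)$ and $(b)$ by instantiating Theorem~\ref{thm:main-theorem} at $\bigtOp = \inchoicetop$ and $\bigtOp = \outchoicetop$ respectively, and derives $(c)$ by chaining these with the identity $\absub{\,\inchoicetop} = (\absub{\,\outchoicetop})^{-1}$ (equivalently $\subtype = \supertype^{-1}$). No gaps; the bookkeeping of the parameter convention is exactly the point the paper relies on as well.
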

\begin{proof}
  By
  Theorem~\ref{thm:main-theorem} and $\subtype =
  \absub{\,\inchoicetop}$, $\supertype = \absub{\,\outchoicetop}$,
  $\subtype = \supertype^{-1}$, and $\absub{\,\inchoicetop} =
  (\absub{\,\outchoicetop} )^{-1}$
\end{proof}
\begin{proposition}\label{prop:char-complexity}
  For all $T, U \in \TYPESC$, deciding whether or not $U \models
  \Anyform{T}{\bigtOp}$ holds can be done in time complexity of
  $\bigo{\lvert T \rvert \times \lvert U \rvert}$, in the worst case;
    where $\lvert T \rvert$ stands for the number of states in the LTS
  induced by $T$.
\end{proposition}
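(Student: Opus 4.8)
The plan is to reduce the problem to the standard complexity bound for model checking a formula in this $\mu$-calculus fragment against a finite-state transition system, and then to bound the two relevant size parameters in terms of $T$ and $U$. First I would observe that, by Definition~\ref{def:char-formula}, the characteristic formula $\Anyform{T}{\bigtOp}$ can be built by a single traversal of $T$, and that its size is linear in $\lvert T \rvert$: each subterm of $T$ contributes a bounded number of modalities, Boolean connectives, and at most one fixpoint binder, and the number of distinct subformulae (equivalently, the number of distinct fixpoint variables) is proportional to the number of states of the LTS induced by $T$. In particular the \emph{alternation depth} of $\Anyform{T}{\bigtOp}$ is one, since the only fixpoint used is the greatest fixpoint $\mmnuND{\varX}$; there is no alternation between least and greatest fixpoints.

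Next I would invoke the classical result that, for an alternation-free (here purely $\nu$) $\mu$-calculus formula $\fora$ and a finite labelled transition system $M$, whether $M \models \fora$ can be decided in time $\bigo{\lvert \fora \rvert \times \lvert M \rvert}$, where $\lvert M \rvert$ counts states and transitions. The semantics given in Section~\ref{sub:mucal} coincides with the usual denotational reading of this fragment, so I would identify the model $M$ with the LTS induced by $U$, whose number of states is $\lvert U \rvert$ by the notational convention fixed in the statement, and whose number of transitions is also $\bigo{\lvert U \rvert}$ because each state of a session-type LTS has out-degree bounded by its (finite) branching. Combining this with the linear bound $\lvert \Anyform{T}{\bigtOp} \rvert = \bigo{\lvert T \rvert}$ from the previous paragraph yields the desired $\bigo{\lvert T \rvert \times \lvert U \rvert}$.

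Concretely, the algorithm I have in mind evaluates each fixpoint subformula of $\Anyform{T}{\bigtOp}$ by the usual iterative approximation over the finite state space of $U$'s LTS, memoising, for every subformula and every state of $U$, whether the state satisfies that subformula. There are $\bigo{\lvert T \rvert}$ subformulae and $\bigo{\lvert U \rvert}$ states, giving $\bigo{\lvert T \rvert \times \lvert U \rvert}$ (subformula, state) pairs; because there is no fixpoint alternation, each pair is recomputed only a bounded number of times during the fixpoint iteration, so no extra multiplicative factor in $\lvert U \rvert$ arises. Handling the modalities $\mmbox{\any}{}$ and $\mmdiamond{\any}{}$, and the special action sets such as $\compset{\{\opfun{\bigOp}{a_i} \mid i \in I\}}$ appearing in Definition~\ref{def:char-formula}, costs only time proportional to the out-transitions of the current state, which is already subsumed by the $\bigo{\lvert U \rvert}$ transition count.

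The main obstacle I anticipate is justifying that no hidden factor creeps in from fixpoint iteration. For a general $\mu$-calculus formula, naive bottom-up iteration of a fixpoint costs an extra factor equal to the height of the lattice, i.e.\ $\bigo{\lvert U \rvert}$, which would worsen the bound to cubic. The key point to argue carefully is therefore that $\Anyform{T}{\bigtOp}$ is alternation-free (indeed single-greatest-fixpoint), so that the Cleaveland--Steffen style linear-time model-checking algorithm for the alternation-free fragment applies and delivers the sharp $\bigo{\lvert T \rvert \times \lvert U \rvert}$ bound; this is exactly where the restriction to the $\nu$-only fragment introduced in Section~\ref{sub:mucal} pays off. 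I would make this explicit and cite the corresponding linear-time result, leaving the routine verification of the linear size bound on $\Anyform{T}{\bigtOp}$ to a direct induction on $T$.
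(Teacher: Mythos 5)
Your proposal is correct and follows essentially the same route as the paper: the paper's proof is precisely the observation that $\lvert \Anyform{T}{\bigtOp} \rvert$ grows linearly with $\lvert T \rvert$ combined with an appeal to the Cleaveland--Steffen linear-time model-checking result for the alternation-free (here purely $\nu$) fragment, which yields the $\bigo{\lvert T \rvert \times \lvert U \rvert}$ bound. Your additional care about the absence of fixpoint alternation and the memoisation over (subformula, state) pairs is exactly the content implicit in that citation.
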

\begin{proof}
  Follows from~\cite{cs91}, since the size of $\Anyform{T}{\bigtOp}$
  increases linearly with $\lvert T \rvert$.
\end{proof}
\begin{example}\label{ex:char-formula}
  Consider session types $T_1$ and $U_1$
  from~\eqref{ex:intro-example-u-1} and~\eqref{ex:intro-example-subs}
    and fix $\Actions = \{ \rcv{request},  \snd{ok},  \snd{ko} \}$.
    Following Definition~\ref{def:char-formula}, we obtain:
        \[
  \begin{array}{rcll}
        \subform{T_1} & = &
    \multicolumn{2}{l}{
      \mmbox{\rcv{request}}{
        \mmdiamond{\snd{ok}}{ \mmbox{\Actions}{\falsek} }
      }
      \;\,
      \land
      \;\,
      \mmdiamond{\rcv{request}}{\truek}
      \;\,
      \land 
      \;\,
      \mmbox{ \neg \{ \rcv{request} \}}{\falsek}
    }
    \\
                                        \SUPform{U_1} & = &
    \mmnu{\varX} 
    \mmdiamond{\rcv{request}}{}     \big(
    &
    \left(
      \mmbox{\snd{ok}}{
        \mmbox{\Actions}{\falsek} 
      }
      \;
      \land
      \;\,
      \mmbox{\snd{ko}}{
        \varX
      }
    \right)
    \\
    &&
    &
    \land
    \;
    \,
    \left(
      \mmdiamond{\snd{ok}}{
        \truek
      }
      \lor
      \mmdiamond{\snd{ko}}{
        \truek
      }
    \right)
    \;\,
    \land 
    \;\,
        \mmbox{ \neg \{ \snd{ok}, \snd{ok} \}}{\falsek}
        \big)
      \end{array}
  \]
    We have $U_1 \models \subform{T_1}$ and $T_1 \models
  \SUPform{U_1}$, as expected (recall tat $T_1 \subtype U_1$).
\end{example}

\section{Safety and duality in session types}\label{sec:safety}
A key ingredient of session type theory is the notion of
\emph{duality} between types.
In this section, we study the relation between duality of session types,
characteristic formulae, and safety (i.e., error freedom).
In particular, building on recent work~\cite{CDY2014} which studies the
preciseness of subtyping for session types, we show how characteristic
formulae can be used to guarantee safety.
\label{subsec:safety}
A system (of session types) is a pair of session types $T$ and $U$
that interact with each other by synchronising over messages. 
We write $T \spar U$ for a system consisting of
$T$ and $U$ and let $S$ range over systems of session types.

\begin{mydef}[Synchronous semantics]\label{def:synch-semantics}
The \emph{synchronous} semantics of a \emph{system} of session types
$T \spar U$ is given by the rule below, in conjunction with the rules
of Figure~\ref{fig:lts-types}.
\[
\inference
{s-com}
{
  T \semarrow{\anydir{a}} T'
  &
  U \semarrow{\coanydir{a}} U'
}
{
  T  \spar U \semarrow{} T' \spar U'
}
\]
We write $\semarrowC$ for the reflexive transitive closure of
$\semarrow{}$.
\end{mydef}
Definition~\ref{def:synch-semantics}
says that two types interact whenever they fire
dual operations.

\begin{example}
  Consider the following execution of system $T_1 \spar U_2$,
  from~\eqref{ex:intro-example-subs}:
    \begin{equation}\label{eq:good-exec}
    \begin{array}{ccll}
                                                                        T_1 \spar U_2
      &
      \; = \;
      &
      \rcv{request} .  \snd{ok}  . \tend
      \; \spar \;
      \rec{\varX}
      \snd{request} . \{ \ldots \}
      \\
      &
      \; \semarrow{\; \; \;} \;
      &
      \snd{ok} . \tend
      \; \spar \;
      \{
      \rcv{ok} . \tend
      \; \echoice \,
      \rcv{ok} . \rec{\varX} \rcv{request} \{ \ldots \}
      \}
                        \; \semarrow{\; \; \;} \;
      &
      \tend \; \spar \; \tend
    \end{array}
  \end{equation}
  \end{example}

\begin{mydef}[Error~\cite{CDY2014} and safety]
  A system $T_1 \spar T_2$ is an \emph{error} if, either:
    \begin{enumerate}[label=(\emph{\alph*})]
  \item \label{en:error-same}
    $T_1 = \choice$ and $T_2 = \choiceSet{j}{J}{U}$, with $\bigtOp$ fixed;
  \item \label{en:error-miss}
    $T_h = \inchoice$ and $T_g = \outchoiceSet{j}{J}{U}$;
    and $\exists i \in I \qst \forall j \in J \qst a_i \neq a_j$, with $h \neq g \in \{1,2\}$; or
  \item \label{en:error-end}
    $T_h = \tend$ and $T_g = \choice$, with $h \neq g \in \{1,2\}$.
  \end{enumerate}
  We say that $S = T \spar U$ is \emph{safe} if for all $S' \qst S
  \semarrowC S'$, $S'$ is not an error.
\end{mydef}
A system of the form~\ref{en:error-same} is an error since both
types are either attempting to send (resp.\ receive) messages.
An error of type~\ref{en:error-miss} indicates that some of the
messages cannot be received by one of the types.
An error of type~\ref{en:error-end} indicates a system where one of
the types has terminated while the other still expects to send
or receive messages.
\begin{mydef}[Duality]\label{def:duality}
  The dual of a formula $\fora \in \FORMULA$, written $\dual{\fora}$
  (resp.\ of a session type $T \in \TYPES$, written $\dual{T}$), is
  defined recursively as follows:
    \[
  \begin{array}{c@{\quad}c}
    \dual{\fora} \defi
    \begin{cases}
      \dual{\fora_1} \land \dual{\fora_2}  & \text{if } \fora =  \fora_1 \land \fora_2  
      \\
      \dual{\fora_1} \lor \dual{\fora_2}   & \text{if } \fora =  \fora_1 \lor \fora_2  
      \\
      \mmbox{\coanydir{a}}{\dual{\fora'}}  & \text{if } \fora =  \mmbox{\any}{\fora'} 
      \\
      \mmdiamond{\coanydir{a}}{\dual{\fora'}}  & \text{if } \fora = \mmdiamond{\any}{\fora'} 
      \\
      \mmnu{\varX} \dual{\fora'}  & \text{if } \fora =  \mmnu{\varX} \fora' 
      \\
      \fora  & \text{if } \fora = \truek, \falsek, \text{ or } \varX
            \xqedhere{196pt}{\qed}
    \end{cases}
    &
    \dual{T} \defi
    \begin{cases}
      \cocochoiceSetNoIdx{i}{I}{\dual{T_i}} & \text{if } T = \choice
      \\
      \rec{\varX} \dual{T'} & \text{if } T = \rec{\varX} T'
      \\
      \varX & \text{if } T = \varX
      \\
      \tend & \text{if } T = \tend
          \end{cases}
      \end{array}
  \]
  \renewcommand{\qed}{}
\end{mydef}
In Definition~\ref{def:duality},
notice that the dual of a formula only rename labels.
\begin{lemma}
  For all $T \in \TYPESC$ and $\fora \in \FORMULAC$,
  $T \models \fora \iff \dual{T} \models \dual{\fora}$.
\end{lemma}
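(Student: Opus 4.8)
The plan is to reduce the statement to two facts about how $\dual{\cdot}$ interacts with the operational and with the logical structure, and then close by induction on $\fora$. First I would record that duality is involutive on directions ($\dual{\dual{\Op}} = \Op$), on types ($\dual{\dual{T}} = T$), and on formulae ($\dual{\dual{\fora}} = \fora$), by routine structural inductions; this lets me prove only the left-to-right implication and recover the converse by instantiating it at $\dual{T}$ and $\dual{\fora}$ and using involution. Next I would prove two substitution lemmas, $\dual{T\subs{U}{\varX}} = \dual{T}\subs{\dual{U}}{\varX}$ for types and $\dual{\fora\subs{\forb}{\varX}} = \dual{\fora}\subs{\dual{\forb}}{\varX}$ for formulae, both by structural induction in the same spirit as Lemma~\ref{lem:compo}.

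The operational core is a transition-correspondence lemma: $T \semarrow{\anydir{a}} T' \iff \dual{T} \semarrow{\coanydir{a}} \dual{T'}$ for all $T, T' \in \TYPESC$. I would prove this by induction on the derivation of the transition using the rules of Figure~\ref{fig:lts-types}: the choice cases \inferrule{T-out} and \inferrule{T-in} are immediate from the action of $\dual{\cdot}$ on choices, which flips both the choice constructor and the direction of each guard, while the \inferrule{T-rec} case needs the type substitution lemma to commute $\dual{\cdot}$ with the unfolding $T\subs{\rec{\varX} T}{\varX}$. Combined with involution, this yields a bijection $T' \mapsto \dual{T'}$ between the $\anydir{a}$-derivatives of $T$ and the $\coanydir{a}$-derivatives of $\dual{T}$, which is exactly what the two modal cases of $\models$ need.

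I would then prove the lemma by induction on the structure of $\fora$. The constant and Boolean cases are immediate from the induction hypothesis and the clause-wise definition of $\dual{\cdot}$. For $\fora = \mmbox{\anydir{a}}{\fora'}$ and $\fora = \mmdiamond{\anydir{a}}{\fora'}$ I would push the universal, respectively existential, quantifier over derivatives through the derivative bijection above and apply the induction hypothesis to $\fora'$ at each derivative.

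The fixpoint case $\fora = \mmnu{\varX} \fora'$ is the main obstacle, because the semantics unfolds it through the approximants $\approxi{\mmnu{\varX} \fora'}{n}$, which are not structural subformulae of $\fora$, so the outer induction hypothesis does not apply directly. I would resolve this by first showing, by induction on $n$ using the formula substitution lemma, that duality commutes with approximation: $\dual{\approxi{\mmnu{\varX} \fora'}{n}} = \approxi{\mmnu{\varX} \dual{\fora'}}{n}$. To then establish $T \models \approxi{\mmnu{\varX} \fora'}{n} \iff \dual{T} \models \dual{\approxi{\mmnu{\varX} \fora'}{n}}$ for every $n$, I would strengthen the statement proved by structural induction on $\fora$ to open formulae under a substitution $\sigma$ of closed formulae for the free variables, carrying the hypothesis that $S \models \sigma(\varX) \iff \dual{S} \models \dual{\sigma(\varX)}$ for every variable and every $S \in \TYPESC$; the fixpoint clause then extends $\sigma$ by $\varX \mapsto \approxi{\cdots}{n-1}$, whose required property follows from the commutation identity together with an inner induction on $n$. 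With the per-approximant equivalence in hand, the definition of $\models$ on greatest fixpoints gives $T \models \mmnu{\varX} \fora' \iff \dual{T} \models \mmnu{\varX} \dual{\fora'} = \dual{\fora}$, completing the induction.
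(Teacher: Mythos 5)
Your proof is correct, and it follows the same underlying idea as the paper, which disposes of this lemma in one line ("direct from the definitions; labels are renamed uniformly"): your transition-correspondence lemma $T \semarrow{\anydir{a}} T' \iff \dual{T} \semarrow{\coanydir{a}} \dual{T'}$ is precisely the formal content of that remark, and the rest (involution, the substitution lemmas, and the approximant-indexed treatment of $\mmnu{\varX}\fora'$) is a sound elaboration of details the paper leaves implicit. No gap; you have simply proved carefully what the authors assert directly.
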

\begin{proof}
  Direct from the definitions of $\dual{T}$ and $\dual{\fora}$ (labels
  are renamed uniformly).
\end{proof}

\begin{restatable}{theorem}{thmcharformduality}
  \label{thm:char-form-duality}
  For all $T \in \TYPES \qst \dual{\Anyform{T}{\bigtOp}} =
  \Anyform{\dual{T}}{\dual{\bigtOp}}$.
\end{restatable}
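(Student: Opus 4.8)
The plan is to establish the identity by structural induction on $T$, with $\bigtOp$ fixed and the induction hypothesis applied at the same $\bigtOp$ to the immediate subterms. There are five cases, matching the clauses of Definition~\ref{def:char-formula}: the base cases $T = \tend$ and $T = \varX$, the recursion $T = \rec{\varX} T'$, and the two choice constructors $T = \choice$ and $T = \cochoice$. Throughout I would rely on a single auxiliary fact: action duality $\any \mapsto \dual{\any}$ is an involutive bijection on $\Actions$ that swaps $\snd{a}$ and $\rcv{a}$; in particular it fixes $\Actions$ setwise and satisfies $\dual{\compset{A}} = \compset{\dual{A}}$ for every $A \subseteq \Actions$. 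I would isolate this as a preliminary lemma so that the inductive cases reduce to line-by-line comparison.

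The variable and recursion cases are immediate from Definition~\ref{def:duality}: formula duality fixes $\varX$ and commutes with $\mmnu{\varX}{}$, so $\dual{\Anyform{\rec{\varX} T'}{\bigtOp}} = \mmnu{\varX} \dual{\Anyform{T'}{\bigtOp}}$, which by the induction hypothesis equals $\mmnu{\varX} \Anyform{\dual{T'}}{\dual{\bigtOp}} = \Anyform{\rec{\varX}\dual{T'}}{\dual{\bigtOp}} = \Anyform{\dual{\rec{\varX} T'}}{\dual{\bigtOp}}$. For $T = \tend$ both sides are $\mmbox{\Actions}{\falsek} = \bigwedge_{\any \in \Actions}\mmbox{\any}{\falsek}$; dualising relabels each modality to $\mmbox{\dual{\any}}{\falsek}$, and since $\any \mapsto \dual{\any}$ permutes $\Actions$ the conjunction is unchanged, so $\dual{\mmbox{\Actions}{\falsek}} = \mmbox{\Actions}{\falsek} = \Anyform{\dual{\tend}}{\dual{\bigtOp}}$.

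The crux is the two choice cases, where the key observation is that type duality, formula duality, and the parameter flip $\bigtOp \mapsto \dual{\bigtOp}$ are synchronised, so that a \emph{primal} choice dualises to a primal choice and a co-choice to a co-choice once the parameter is flipped. Concretely, by Definition~\ref{def:duality} (whose single choice clause I read as ranging over both choice forms), $\dual{\choice} = \cocochoiceSetNoIdx{i}{I}{\dual{T_i}}$ flips both the top constructor and the direction of every prefix; since $\Op$ is tied to $\bigtOp$ by convention, relative to $\dual{\bigtOp}$ this term is again a primal choice, so the first clause of Definition~\ref{def:char-formula} governs both sides. On the left, dualising $\bigwedge_{i \in I}\mmdiamond{\opfun{\bigOp}{a_i}}{\Anyform{T_i}{\bigtOp}}$ relabels each diamond to its dual action $\dual{\Op} a_i$ and, by the induction hypothesis, rewrites each $\Anyform{T_i}{\bigtOp}$ as $\Anyform{\dual{T_i}}{\dual{\bigtOp}}$; on the right, unfolding $\Anyform{\dual{\choice}}{\dual{\bigtOp}}$ yields literally the same conjunction of diamonds. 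The case $T = \cochoice$ is symmetric: $\dual{\cochoice}$ is a primal choice of $\bigtOp$, hence a co-choice of $\dual{\bigtOp}$, so the second clause applies on both sides, and one checks the three conjuncts in turn — the box-conjunction and the diamond-disjunction dualise componentwise (using $\dual{\truek} = \truek$), while the guard $\mmbox{ \compset{ \{ \opfun{\bigOp}{a_i} \st i \in I\} }}{\falsek}$ transforms correctly thanks to $\dual{\compset{A}} = \compset{\dual{A}}$.

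I expect the only genuine difficulty to be notational bookkeeping in the $\cochoice$ case: one must verify, against the convention pairing $\Op$ with $\bigtOp$, that dualising a co-choice and then forming its characteristic formula over $\dual{\bigtOp}$ indeed selects the second clause with consistently dualised labels, and in particular that the negated action set in the last conjunct is sent to the negated set of dualised actions rather than to some other set. Once the bijectivity of $\any \mapsto \dual{\any}$ on $\Actions$ is recorded, every case collapses to a direct comparison of the two resulting formulae, and the identity follows.
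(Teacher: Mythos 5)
Your proposal is correct and follows essentially the same route as the paper's proof: structural induction on $T$ with one case per clause of Definition~\ref{def:char-formula}, applying the induction hypothesis to the continuations and observing that formula duality merely relabels modalities by the involution $\any \mapsto \dual{\any}$. The only difference is presentational — you isolate the bijectivity of action duality (and the identity $\dual{\compset{A}} = \compset{\dual{A}}$) as an explicit preliminary lemma, which the paper leaves implicit in its direct equational computation.
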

\begin{proof}
  By structural induction on $T$, see Appendix~\ref{proof:thmcharformduality}.
\end{proof}

Theorem~\ref{thm:safety} follows straightforwardly from~\cite{CDY2014}
and allows us to obtain a sound and complete model-checking based
condition for safety, cf.\ Theorem~\ref{thm:safety-statements}.
\begin{restatable}[Safety]{theorem}{thmsafety}
\label{thm:safety}
$T \spar U$ is safe $\iff$ $(T \subtype \dual{U} \lor \dual{T} \subtype U)$.
\end{restatable}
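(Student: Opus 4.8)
The plan is to reduce the statement to the \emph{operational preciseness} of subtyping established in~\cite{CDY2014}, pivoting on the canonical fact that every type is safe against its dual: $V \spar \dual{V}$ is safe for all $V \in \TYPESC$. This is immediate from Definition~\ref{def:duality}, since at each synchronisation point one side is an internal choice and the other the matching external choice carrying exactly the same labels, so none of the error clauses~\ref{en:error-same}--\ref{en:error-end} can fire. From~\cite{CDY2014} I would also take the characterisation of $\subtype$ as safe substitutability: $A \subtype B$ holds iff $A$ may replace $B$ in every safe system, i.e.\ for all $C \in \TYPESC$, $C \spar B$ safe implies $C \spar A$ safe (soundness gives one inclusion, completeness the other).

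For the ($\Leftarrow$) direction I would treat each disjunct by instantiating substitutability at a canonical safe system. If $T \subtype \dual{U}$, taking $C \defi U$ and replacing the supertype $\dual{U}$ by $T$ turns safety of $U \spar \dual{U}$ into safety of $U \spar T$, hence of $T \spar U$. For the second disjunct I would first invoke the order-reversal of duality on subtyping, $A \subtype B \iff \dual{B} \subtype \dual{A}$, which follows from $\subtype \,=\, \supertype^{-1}$ together with Theorem~\ref{thm:char-form-duality} and Corollary~\ref{cor:TsubU}; this rewrites $\dual{T} \subtype U$ as $\dual{U} \subtype T$, so substituting $\dual{U}$ for its supertype $T$ in the safe system $\dual{T} \spar T$ yields safety of $\dual{T} \spar \dual{U}$. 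Transporting this back to $T \spar U$ is the delicate point, discussed below.

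For the ($\Rightarrow$) direction I would argue the contrapositive and lean on completeness from~\cite{CDY2014}. Assuming the disjunction fails, in particular $T \not\subtype \dual{U}$, completeness supplies a context witnessing non-substitutability, and the crucial step is to show that the partner $U$ itself is already such a witness, i.e.\ that the synchronising run of $T \spar U$ reaches a configuration matching one of clauses~\ref{en:error-same}--\ref{en:error-end}. Concretely I would set up a coinductive correspondence between the configurations reachable under $\semarrowC$ and the candidate derivation of $T \subtype \dual{U}$: a failure of that derivation at a reachable pair localises the error to either mismatched choice polarities (clause~\ref{en:error-same}), an unmatched label where an internal and an external choice meet (clause~\ref{en:error-miss}), or a premature $\tend$ (clause~\ref{en:error-end}).

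The main obstacle I anticipate is twofold. First, I must bridge the universally quantified preciseness statement of~\cite{CDY2014} to the single relevant instance, justifying that $U$ (respectively $\dual{U}$) is the worst-case context, so that reaching an error with \emph{some} context forces an error already with $U$; this is exactly where the explicit match between Definition~\ref{def:duality} and the synchronous semantics of Definition~\ref{def:synch-semantics} does the real work, by showing the subtyping game for $T \subtype \dual{U}$ and the reduction of $T \spar U$ trace the same set of pairs. Second, for the ($\Leftarrow$) direction I must reconcile the dualised safe system $\dual{T} \spar \dual{U}$ obtained from the second disjunct with safety of $T \spar U$ itself, again via this configuration-level correspondence. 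The remaining work—unfolding recursion under the equi-recursive convention and checking the error clauses case by case—is routine.
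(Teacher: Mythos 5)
Your route differs substantially from the paper's. For ($\Leftarrow$) the paper does not go through preciseness at all: it argues by coinduction directly on the derivation of $T \subtype \dual{U}$, reading each rule of Definition~\ref{def:ab-subtype} against the synchronous semantics (the choice rules force the sender's label set to be contained in the receiver's, so no error clause can fire, and every reduct $T_i \spar U_i$ is again related), and it dispatches ($\Rightarrow$) in one line by observing that the error clauses are exactly the negations of the subtyping rules of~\cite[Table 7]{CDY2014}. Your substitutability argument for the first disjunct is sound if the type-level soundness statement of~\cite{CDY2014} is taken as a black box, but that statement is itself established by precisely the coinduction the paper performs, so the detour saves nothing; likewise, on the ($\Rightarrow$) side the appeal to completeness buys nothing, since showing that $U$ itself is a witness \emph{is} the entire content of the direction. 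The coinductive correspondence you sketch at the end is the right argument, but it must be driven by the failure of \emph{both} disjuncts, not just $T \subtype \dual{U}$, to localise a reachable error.

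The genuine gap is the step you flag as delicate: transporting safety of $\dual{T} \spar \dual{U}$ back to $T \spar U$. That bridge is false, not merely unproved. Safety is not invariant under dualising both components of a system, because clause~\ref{en:error-miss} is asymmetric in the sender/receiver roles: the system $\snd{a}.\tend \spar (\rcv{a}.\tend \outchoicetop \rcv{b}.\tend)$ is safe, yet its componentwise dual $\rcv{a}.\tend \spar (\snd{a}.\tend \inchoicetop \snd{b}.\tend)$ matches clause~\ref{en:error-miss} (the sender owns the label $b$ that the receiver lacks). So the second disjunct cannot be reduced to the first by order-reversal and dualisation; it needs its own coinduction on the derivation of $\dual{T} \subtype U$, read directly against the configurations of $T \spar U$ itself (the paper's written proof of ($\Leftarrow$) is also terse here, treating only $T \subtype \dual{U}$ explicitly, but its direct method is the one that extends, whereas your dualisation bridge does not).
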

\begin{proof}
  Direction
  $(\Longrightarrow)$ follows from ~\cite[Table 7]{CDY2014} and
  direction
  $(\Longleftarrow)$ is by coinduction on the derivations of $T
  \subtype \dual{U}$ and $\dual{T} \subtype U$. 
    See Appendix~\ref{app:safety} for details.
\end{proof}
\noindent Finally we achieve: 
\begin{theorem}\label{thm:safety-statements}
  The following statements are equivalent: $\; (a)  {\;\,} T \spar U \text{ is safe }$
  \[
  \begin{array}{l@{\;\,}l@{\qquad\qquad}l@{\;\,}l}
                        (b)
    &
    \dual{U} \models \Anyform{T}{\inchoicetop}
    \lor
    U \models \Anyform{\dual{T}}{\inchoicetop}
    &
    (d)
    &
    U \models \Anyform{\dual{T}}{\outchoicetop}
    \lor
    \dual{U} \models \Anyform{T}{\outchoicetop}
    \\
    (c)
    &
    T \models \Anyform{\dual{U}}{\outchoicetop}
    \lor
    \dual{T} \models \Anyform{{U}}{\outchoicetop}
    &
    (e)
    &
    \dual{T} \models \Anyform{{U}}{\inchoicetop}
    \lor
    {T} \models \Anyform{\dual{U}}{\inchoicetop}
  \end{array}
  \]
\end{theorem}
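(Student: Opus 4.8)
The plan is to use Theorem~\ref{thm:safety} as the backbone: it already gives $(a) \iff (T \subtype \dual{U} \lor \dual{T} \subtype U)$, so it remains only to rewrite the subtyping disjunction $T \subtype \dual{U} \lor \dual{T} \subtype U$ into each of the forms (b)--(e). The two ingredients I would extract first are the two ``directions'' of the characteristic-formula characterisation of subtyping, both available from Corollary~\ref{cor:TsubU}: for all $X,Y \in \TYPESC$, \quad (I) $X \subtype Y \iff Y \models \Anyform{X}{\inchoicetop}$ \quad and \quad (II) $X \subtype Y \iff X \models \Anyform{Y}{\outchoicetop}$.

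Applying (I) to each disjunct of $T \subtype \dual{U} \lor \dual{T} \subtype U$ gives $\dual{U} \models \Anyform{T}{\inchoicetop} \lor U \models \Anyform{\dual{T}}{\inchoicetop}$, which is exactly (b); applying (II) to each disjunct gives $T \models \Anyform{\dual{U}}{\outchoicetop} \lor \dual{T} \models \Anyform{U}{\outchoicetop}$, which is exactly (c). Combined with Theorem~\ref{thm:safety}, this settles $(a) \iff (b)$ and $(a) \iff (c)$.

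To reach (d) and (e) I would transport models across dualisation, using the lemma $T \models \fora \iff \dual{T} \models \dual{\fora}$, Theorem~\ref{thm:char-form-duality} (namely $\dual{\Anyform{T}{\bigtOp}} = \Anyform{\dual{T}}{\dual{\bigtOp}}$), and the involutions $\dual{\dual{T}} = T$ and $\dual{\inchoicetop} = \outchoicetop$. Each disjunct of (d) then collapses onto a disjunct of (b): for instance $\dual{U} \models \Anyform{T}{\outchoicetop} \iff U \models \dual{\Anyform{T}{\outchoicetop}} = U \models \Anyform{\dual{T}}{\inchoicetop}$, the second disjunct of (b), and symmetrically $U \models \Anyform{\dual{T}}{\outchoicetop}$ collapses onto $\dual{U} \models \Anyform{T}{\inchoicetop}$, the first disjunct of (b). The same bookkeeping sends each disjunct of (e) onto a disjunct of (c). This yields $(b) \iff (d)$ and $(c) \iff (e)$, closing the chain $(d) \iff (b) \iff (a) \iff (c) \iff (e)$.

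The argument is pure rewriting, so the only real obstacle is keeping the dualities straight: one must track which of (I)/(II) is applied on which side, and make sure every $\dual{(\cdot)}$ introduced on a type, a formula, or on $\bigtOp$ is cancelled by the matching involution when it disappears. A tidy way to remove the risk of error is to first establish the auxiliary fact $X \subtype Y \iff \dual{Y} \subtype \dual{X}$ by chaining (I), the duality lemma, Theorem~\ref{thm:char-form-duality}, and (II); rewriting with it turns $T \subtype \dual{U} \lor \dual{T} \subtype U$ into $U \subtype \dual{T} \lor \dual{U} \subtype T$, and then (I) and (II) deliver (e) and (d) directly, exactly as they delivered (b) and (c) from the original disjunction.
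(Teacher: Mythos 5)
Your proof is correct and follows essentially the same route as the paper, whose own proof is just the one-line citation of Theorem~\ref{thm:safety}, Corollary~\ref{cor:TsubU}, and Theorem~\ref{thm:char-form-duality} --- exactly the ingredients you assemble, with the duality bookkeeping for (d) and (e) carried out correctly. Nothing to add.
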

\begin{proof}
  By direct applications of Theorem~\ref{thm:safety}, then
  Corollary~\ref{cor:TsubU} and
  Theorem~\ref{thm:char-form-duality}.
        \end{proof}

\section{Alternative algorithms for subtyping}\label{sec:algos}
In order to compare the cost of checking the subtyping relation via
characteristic formulae to other approaches, we present two
other algorithms: the original algorithm as given by Gay and
Hole in~\cite{GH05} and an adaptation of Kozen, Palsberg, and
Schwartzbach's algorithm~\cite{KPS95} for recursive subtyping for the
$\lambda$-calculus.
\subsection{Gay and Hole's algorithm}
\label{sub:gay-hole-algo}
\begin{figure}[t]
  \newcommand{\arraytempdist}{0.3cm}
    \[
  \resizebox{\textwidth}{!}{$
    \begin{array}{c}
      \inference{RL}
      {
        \judge{\GHenvi, \rec{\varX} T \GHsubalgo  U  }{T \subs{\rec{\varX} T}{\varX}}{ U}
      }
      {
        \judge{\GHenvi}{\rec{\varX}T}{ U}
      }
            \quad
            \inference{End}
      {
              }
      { 
        \judge{\GHenvi}{\tend}{\tend}
      }
            \quad
            \inference{RR}
      {
        \judge{\GHenvi, T \GHsubalgo  \rec{\varX} U  }{T}{ U \subs{\rec{\varX} U}{\varX}}
      }
      {
        \judge{\GHenvi}{T}{\rec{\varX} U}
      }
            \\[\arraytempdist]
            \inference{Sel}
      {
        I \subseteq J
        &
        \forall i \in I \qst 
                \judge{\GHenvi}{T_i}{U_i}
      }
      {
        \judge{\GHenvi}{\inchoice}{\inchoiceSet{j}{J}{U}}
      }
            \quad
      \inference{Assump}
      {
        T \GHsubalgo U \in \GHenvi
      }
      {
        \judge{\GHenvi}{T}{U}
      }
      \quad
            \inference{Bra}
      {
        J \subseteq I
        &
        \forall j \in J \qst 
                \judge{\GHenvi}{T_j}{U_j}
      }
      {
        \judge{\GHenvi}{\outchoice}{\outchoiceSet{j}{J}{U}}
      }
    \end{array}
    $}
  \]
      \caption{Algorithmic subtyping rules~\cite{GH05}}
  \label{fig:gay-hole-algo}
\end{figure}

 The inference rules of Gay and Hole's algorithm are given in
Figure~\ref{fig:gay-hole-algo} (adapted to our setting).
The rules essentially follow those of Definition~\ref{def:ab-subtype}
but deal explicitly with recursion.
They use judgments $\judge{\GHenvi}{T}{U}$ in which $T$ and $U$ are
(closed) session types and $\GHenvi$ is a sequence of assumed
instances of the subtyping relation, i.e., $\GHenvi = T_1 \GHsubalgo
U_1 , {\scriptstyle  \ldots}, T_k \GHsubalgo U_k$, saying that each pair $T_i
\GHsubalgo U_i$ has been visited.
To guarantee termination, rule $\inferrule{Assump}$
should always be used if it is applicable.

\begin{theorem}[Correspondence{~\cite[Corollary 2]{GH05}}]
  \label{thm:gay-hole-algo}
  $T \subtype U$ if and only if $\judge{\emptyset}{T}{U}$ is derivable
  from the rules in Figure~\ref{fig:gay-hole-algo}.
\end{theorem}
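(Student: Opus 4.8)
The plan is to prove the two implications separately, treating the system of Figure~\ref{fig:gay-hole-algo} as a finite, assumption-guarded search for a derivation and relating it to the coinductive relation $\subtype = \absub{\,\inchoicetop}$ of Definition~\ref{def:ab-subtype}. Throughout I work up to the equi-recursive identification of $\rec{\varX}V$ with $V\subs{\rec{\varX}V}{\varX}$, so that the rules $\inferrule{RL}$ and $\inferrule{RR}$ are pure bookkeeping that only expose the top constructor via $\unfold{\cdot}$.

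For soundness (the ``if'' direction), I would assume a finite derivation $\pi$ of $\judge{\emptyset}{T}{U}$ and extract from it a relation $\mathcal{R} \subseteq \TYPESC \times \TYPESC$ which I then show to satisfy the closure conditions of Definition~\ref{def:ab-subtype}, hence to be contained in the largest such relation $\subtype$. Concretely, let $\mathcal{R}$ consist of every pair $(T',U')$ such that a judgment $\judge{\GHenvi}{T'}{U'}$ occurs as the conclusion of some node of $\pi$ (closed under unfolding). I verify, by case analysis on the rule applied at each node, that $\mathcal{R}$ is backward-closed under the coinductive rules: for a $\inferrule{Sel}$, $\inferrule{Bra}$, or $\inferrule{End}$ node the required premise-pairs are literally the children and hence lie in $\mathcal{R}$, and the side conditions $I \subseteq J$ (resp.\ $J \subseteq I$) match $\inferrule{S-\ensuremath{\inchoicetop}}$ (resp.\ $\inferrule{S-\ensuremath{\outchoicetop}}$); for a $\inferrule{RL}$ or $\inferrule{RR}$ node the pair is identified with its unfolding and contributes nothing new; and for an $\inferrule{Assump}$ leaf the pair $(T',U')$ already occurs higher in $\pi$ (it was placed into $\GHenvi$ there), so it is in $\mathcal{R}$ by construction. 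Since $\subtype$ is the largest relation closed under these rules, $\mathcal{R} \subseteq \subtype$, and in particular $(T,U) \in \subtype$.

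For completeness (the ``only if'' direction), I would first establish termination of the rule-directed search and then that it cannot fail when $T \subtype U$. Termination follows from contractiveness: each closed type denotes a regular tree with finitely many distinct subterms up to unfolding, so only finitely many distinct pairs $(T',U')$ can ever be generated; the side condition that $\inferrule{Assump}$ be used whenever applicable then forces any sufficiently deep branch to close, so the search tree is finite. To see that it succeeds, I would use a well-founded induction on this finite search, maintaining the invariant that every generated judgment $\judge{\GHenvi}{T'}{U'}$ satisfies $T' \subtype U'$ and that every assumption recorded in $\GHenvi$ is a pair already known to be in $\subtype$. At each step, after unfolding leading recursions via $\inferrule{RL}$/$\inferrule{RR}$, I perform inversion on the coinductive definition of $\subtype$: if $(T',U')$ is not already in $\GHenvi$, the shapes of $\unfold{T'}$ and $\unfold{U'}$ force a unique applicable structural rule whose index-set side condition holds precisely because $T' \subtype U'$, and whose premises are again pairs in $\subtype$; if $(T',U') \in \GHenvi$, then $\inferrule{Assump}$ closes the branch successfully. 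Hence the search yields a finite derivation of $\judge{\emptyset}{T}{U}$.

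The main obstacle is reconciling the finite, assumption-guarded derivations with the infinitary coinductive relation, i.e.\ justifying $\inferrule{Assump}$ in both directions. For soundness this is exactly why $\mathcal{R}$ must collect \emph{all} pairs appearing in $\pi$, not only those proved under the empty environment, so that an $\inferrule{Assump}$ leaf points back to a genuine member of $\mathcal{R}$; this is where ``closing a cycle'' becomes legitimate coinductively. For completeness the delicate point is coupling termination with the claim that discharging an obligation by $\inferrule{Assump}$ never loses a needed premise, which holds because the discharged pair was itself shown to be in $\subtype$ and coinductive reasoning permits its reuse. Making the equi-recursive unfolding bookkeeping precise, so that ``occurs in $\pi$'' interacts correctly with $\inferrule{RL}$/$\inferrule{RR}$, is where most of the care is needed, though each individual case is routine; the full argument is essentially that of~\cite{GH05}.
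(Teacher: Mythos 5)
The paper does not prove this statement at all: it is imported verbatim as~\cite[Corollary 2]{GH05}, so there is no in-paper proof to compare against. Your reconstruction is the standard (and correct) argument, and it is essentially the one Gay and Hole give: soundness by collecting all pairs occurring in a finite derivation into a relation and showing it is a post-fixed point of the rules of Definition~\ref{def:ab-subtype} (with \inferrule{Assump} leaves discharged by pointing back to the ancestor node that placed the pair into $\GHenvi$), and completeness by termination of the \inferrule{Assump}-first search over the finitely many subterm pairs together with an invariant that every generated goal is in $\subtype$. The one point I would insist you spell out is the chain \inferrule{Assump} $\to$ \inferrule{RL}/\inferrule{RR} $\to$ \inferrule{Assump}: you need contractiveness to guarantee that every pair in your relation $\mathcal{R}$ is, up to equi-recursive unfolding, eventually the conclusion of a structural rule (\inferrule{Sel}, \inferrule{Bra}, or \inferrule{End}) somewhere in $\pi$, otherwise backward closure is not established; you flag this as ``bookkeeping'' and it is indeed routine, but it is the only place the argument could silently fail.
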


\noindent
Proposition~\ref{prop:gay-complexity}, a contribution of this
paper, states the algorithm's complexity.
\begin{proposition}\label{prop:gay-complexity}
  For all $T, U \in \TYPESC$, the problem of deciding whether or not
  $\judge{\emptyset}{T}{U}$ is derivable has an $\bigo{n^{2^n}}$ time
  complexity, in the worst case; where $n$ is the number of nodes in
  the parsing tree of the $T$ or $U$ (which ever is bigger).
    \end{proposition}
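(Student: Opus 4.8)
The plan is to exhibit an explicit family of session types on which Gay and Hole's algorithm builds a derivation tree of doubly-exponential size, and then argue that this bound is also an upper bound. For the lower bound, I would first analyse how the rules of Figure~\ref{fig:gay-hole-algo} behave on recursion. The key observation is that rules $\inferrule{RL}$ and $\inferrule{RR}$ unfold a recursive type \emph{explicitly} (replacing $\rec{\varX} T$ by $T \subs{\rec{\varX} T}{\varX}$) and record the pair $T \GHsubalgo U$ in the environment $\GHenvi$, while $\inferrule{Assump}$ stops a branch only when the \emph{exact same pair} $T \GHsubalgo U$ has already been visited. The crucial point is that the syntactic form of the types encountered keeps changing as nested recursions are unfolded: unfolding an outer recursion variable substitutes a whole recursive subterm, so the pairs appearing along a branch are not drawn from a small fixed set but from a set whose size can itself be exponential in $n$. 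I would therefore design $T$ (and take $U$ to be, say, a copy or close variant of $T$ so that every branch is explored rather than failing early) with $k$ nested recursion binders $\rec{\var{x_1}} \ldots \rec{\var{x_k}}$ each guarding a binary choice that refers back to the enclosing variables, so that the number of \emph{syntactically distinct} unfolded subterms reachable before $\inferrule{Assump}$ applies grows like a tower.

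The key combinatorial step is to count the distinct judgments $\judge{\GHenvi}{T'}{U'}$ that can appear. First I would bound the number of distinct types $T'$ reachable from $T$ by repeated unfolding: with $k$ nested binders the unfolding can produce on the order of $n^{2^{\bigo{n}}}$ distinct subterms, because each level of unfolding can substitute the already-unfolded (hence exponentially large) lower levels. Since the environment $\GHenvi$ grows by one recorded pair at each $\inferrule{RL}$/$\inferrule{RR}$ step and $\inferrule{Assump}$ fires only on an exact repeat, a single root-to-leaf branch can have length equal to the number of distinct pairs, i.e.\ doubly exponential in $n$. Multiplying by the branching introduced at each choice gives the overall $\bigo{n^{2^n}}$ bound on the size of the derivation, and hence on the running time.

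For the matching upper bound I would argue that the algorithm always terminates within this many steps: along any branch, rule $\inferrule{Assump}$ must fire as soon as a pair recurs, so no branch is longer than the total number of distinct pairs $(T', U')$ built from subterms-of-unfoldings of $T$ and $U$; that number is itself at most doubly exponential in $n$ by the same counting, and the tree's branching factor is bounded by the (linear) width of the choices, so the whole tree has size $\bigo{n^{2^n}}$. Combining the two directions gives the stated worst-case complexity.

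I expect the main obstacle to be the lower-bound construction: one must design the nested recursive types carefully so that (i) the unfolded subterms are genuinely pairwise distinct (otherwise $\inferrule{Assump}$ prunes the tree early and collapses the bound), and (ii) the subtyping side-conditions $I \subseteq J$ in $\inferrule{Sel}$/$\inferrule{Bra}$ are met so that every branch is actually explored rather than the derivation failing. Getting the distinctness invariant right — formalising precisely why unfolding $k$ nested binders yields a doubly-exponential number of inequivalent judgments, rather than merely an exponential one — is the delicate heart of the argument; the upper bound, by contrast, is a routine counting of reachable judgments once the distinctness analysis is in place.
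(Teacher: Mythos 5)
Your diagnosis of where the cost comes from is not the one that actually holds, and this breaks both halves of your argument. You claim that the pairs visited along a branch ``are not drawn from a small fixed set but from a set whose size can itself be exponential in $n$'', and you build both the lower bound and the upper bound on a doubly-exponential count of \emph{distinct judgments}. This contradicts the key fact the paper's proof leans on, namely \cite[Lemma 10]{GH05}: for contractive recursive types the number of derivations (equivalently, of semantically distinct pairs that can be recorded in $\GHenvi$ before $\inferrule{Assump}$ must fire) is bounded by the number of sub-terms of $T$ and $U$, hence is $\bigo{n}$ --- this is exactly the regularity property that makes the algorithm terminate. The genuine source of the blow-up is elsewhere: the rules $\inferrule{RL}$/$\inferrule{RR}$ manipulate \emph{explicit syntactic representations}, and a single unfolding $T \subs{\rec{\varX} T}{\varX}$ of a term of size $m$ can produce a term of size $\bigo{m^2}$. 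Since at most $\bigo{n}$ unfoldings occur before the assumption rule closes the branch, the sizes of the terms handled along the derivation grow as $n, n^2, n^4, \ldots$, giving the chain $\bigo{\sum_{1 \leq i \leq k} n^{2^i}}$ with $k = \bigo{n}$, which is dominated by $n^{2^n}$. In short: few judgments, each involving astronomically large syntax --- not astronomically many judgments.

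Even setting that aside, your upper-bound arithmetic does not deliver the stated bound: a derivation tree whose branches have doubly-exponential length and whose branching factor is linear in $n$ has size exponential in the branch length, i.e.\ far beyond $\bigo{n^{2^n}}$, so ``multiplying by the branching'' cannot yield the claimed complexity. Finally, the lower-bound construction you flag as the delicate heart of the argument is not needed at all: the proposition asserts only a worst-case $\bigo{\cdot}$ upper bound, and the paper proves exactly that and nothing more. The construction would in any case founder on the same point, since the distinctness invariant you want (doubly-exponentially many pairwise inequivalent judgments before $\inferrule{Assump}$ applies) is ruled out by the sub-term bound.
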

\begin{proof}
  Assume the bigger session type is $T$ and its size is
  $n$ (the number of nodes in its parsing tree).
    Observe that the algorithm in Figure~\ref{fig:gay-hole-algo} needs to
  visit every node of $T$ and relies on explicit unfolding of
  recursive types.
    Given a type of size $n$, its unfolding is of size $\bigo{n^2}$, in the
  worst case.
    Hence, we have a chain $\bigo{n} + \bigo{n^2} + \bigo{n^4} + \ldots$, or
  $\bigo{\sum_{1 \leq i \leq k} n^{2^i}}$,
  where $k$ is a bound on the number of derivations needed for the
  algorithm to terminate.
    According to~\cite[Lemma 10]{GH05}, the number of derivations is
  bounded by the number of sub-terms of $T$, which is $\bigo{n}$.
    Thus, we obtain a worst case time complexity of $\bigo{n^{2^n}}$.
  \end{proof}

 \subsection{Kozen, Palsberg, and Schwartzbach's algorithm}
\label{sub:kozen-et-al}
Considering that the results of~\cite{KPS95} ``\emph{generalise to an
  arbitrary signature of type constructors (\ldots)}'',
we adapt Kozen et al.'s algorithm,
originally designed for subtyping recursive types in the
$\lambda$-calculus. Intuitively, the algorithm reduces the problem of subtyping
to checking the language emptiness of an automaton given by
the product of two (session) types.
The intuition of the theory behind the algorithm is that ``\emph{two
  types are ordered if no common path detects a counterexample}''.
We give the details of our instantiation below.

The set of type constructors over $\Actions$, written
$\cons_\Actions$, is defined as follows:
\[
\cons_\Actions \defi
 \{ \tend \} 
\cup \{ \choicecons{\inchoicetop}{A} \st \emptyset \subset A \subseteq \Actions \}
\cup \{ \choicecons{\outchoicetop}{A} \st \emptyset \subset A \subseteq \Actions \}
\]

\begin{mydef}[Term automata]\label{def:term-automaton}
  A term automaton over $\Actions$ is a tuple
  $
  \automaton = (\States, \, \cons_\Actions, \, \state_0, \, \edge, \, \labs)
  $
  where
          \begin{itemize}
  \item $\States$ is a (finite) set of states,
      \item $\state_0 \in \States$ is the initial state,
      \item $\edge : \States \times \Actions \rightarrow \States$
    is a (partial) function (the \emph{transition function}), and
      \item $\labs : \States \rightarrow \cons_\Actions$ is a (total)
    labelling function
  \end{itemize}
  such that for any $\state \in \States$, 
  if $\labs(\state) \in \{\choicecons{\inchoicetop}{A} , \choicecons{\outchoicetop}{A} \} $, 
  then
  $\edge(\state, \any)$ is defined for all $\any \in A$;
    and 
  for any $\state \in \States$ such that $\labs(\state) = \tend$,
  $\edge(\state, \any)$ is undefined for all $\any \in \Actions$.
    We decorate $\States$, $\edge$, etc.\ with a superscript, e.g.,
  $\automaton$, where necessary.
\end{mydef}

We assume that session types have been ``translated'' to term
automata, the transformation is straightforward (see,~\cite{DY13} for
a similar transformation).
Given a session type $T \in \TYPESC$, we write $\autof{T}$ for its
corresponding term automaton.

\begin{mydef}[Subtyping]\label{def:kozen-sub}
  $\ksub$ is the smallest binary relation on $\cons_\Actions$ such that:
  \[
  \tend \ksub \tend
  \qquad
  \choicecons{\inchoicetop}{A} \ksub \choicecons{\inchoicetop}{B} \iff A \subseteq B
  \qquad
  \choicecons{\outchoicetop}{A} \ksub \choicecons{\outchoicetop}{B} \iff B \subseteq A
  \qedhere
  \]
\end{mydef}
Definition~\ref{def:kozen-sub} essentially maps the rules of
Definition~\ref{def:ab-subtype} to type constructors.
The order $\ksub$ is used in the product automaton
to identify final states, see below.

\begin{mydef}[Product automaton]\label{def:product-automaton}
  Given two term automata $\autoa$ and $\autob$ over $\Actions$, their
  product automaton $\autoprod{\autoa}{\autob}  = (\PStates, \, \Pstate_0, \, \Pedge, \, \PFinals)$
  is such that
          \begin{itemize}
  \item $\PStates = \States^{\autoa} \times \States^{\autob}$ are the states of $\autoprod{\autoa}{\autob}$,
  \item $\Pstate_0 = (\state_0^{\autoa}, \state_0^{\autob})$ is the initial state,
  \item $\Pedge : \PStates \times \Actions  \rightarrow \PStates$ is the partial function which for
    $\state_1 \in \States^{\autoa}$ and $\state_2 \in \States^{\autob}$ gives
    \[
    \Pedge( ( \state_1, \state_2 ), \any ) = ( \edge^{\autoa}(\state_1, \any ) , \edge^{\autob}(\state_2, \any ) )
    \]
  \item $\PFinals \subseteq \PStates$ is the set of \emph{accepting} states:
    $
    \PFinals = \{ \, ( \state_1, \state_2 ) \st  \labs^{\autoa}(\state_1) \nksub \labs^{\autob}(\state_2)  \, \}
    $
  \end{itemize}
Note that $\Pedge( ( \state_1, \state_2 ), \any )$ is defined iff
$\edge^{\autoa}(\state_1, \any)$ and $\edge^{\autob}(\state_2, \any)$ are
defined.
\end{mydef}

Following~\cite{KPS95}, we obtain
Theorem~\ref{def:kozen-correspondence}.
\begin{theorem}\label{def:kozen-correspondence}
  Let $T, U \in \TYPESC$,
  $
  T \subtype U$ iff the language of $\autoprod{\autof{T}}{\autof{U}}$ is empty.
\end{theorem}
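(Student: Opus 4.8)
The plan is to prove the biconditional by relating the accepting runs of the product automaton $\autoprod{\autof{T}}{\autof{U}}$ to failures of the coinductive subtyping relation $\subtype$. The key observation is that a state $(\state_1,\state_2)$ of the product is accepting precisely when $\labs^{\autoa}(\state_1) \nksub \labs^{\autob}(\state_2)$, i.e.\ when the type constructors at the two states are \emph{not} in the local ordering $\ksub$ of Definition~\ref{def:kozen-sub}. Since $\Pedge$ is defined only on common actions and $\autoprod{\autof{T}}{\autof{U}}$ has no explicit alphabet-acceptance beyond reaching $\PFinals$, the language of the product is nonempty iff some accepting state is reachable from $\Pstate_0 = (\state_0^{\autof{T}}, \state_0^{\autof{U}})$. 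The whole theorem thus reduces to: \emph{an accepting state is reachable in the product iff $T \not\subtype U$.}

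First I would make precise how paths in the product correspond to pairs of synchronised paths in the two term automata, and how these in turn correspond to the unfoldings explored by the coinductive rules of Definition~\ref{def:ab-subtype}. A path $\Pstate_0 \semarrow{\any_1} \cdots \semarrow{\any_n} (\state_1,\state_2)$ exists exactly when both $\autof{T}$ and $\autof{U}$ admit the common trace $\any_1 \cdots \any_n$ leading to states labelled by the constructors under comparison; because $\ksub$ is defined to mirror exactly the side-conditions $I \subseteq J$ (for $\inchoicetop$) and $J \subseteq I$ (for $\outchoicetop$) of rules $\inferrule{S-\bigtOp}$ and $\inferrule{S-\dual{\bigtOp}}$, a local failure $\labs^{\autoa}(\state_1) \nksub \labs^{\autob}(\state_2)$ corresponds precisely to one of the coinductive rules failing to apply at the pair of subterms reached along that trace.

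For the ($\Leftarrow$) direction, I would prove the contrapositive: if the language is empty, then no accepting state is reachable, so along every common trace the reached constructors satisfy $\ksub$; I would then exhibit the relation $\mathcal{R} = \{\,(T',U') \mid (\state_{T'},\state_{U'}) \text{ reachable in } \autoprod{\autof{T}}{\autof{U}}\,\}$ and verify that it is closed under the rules of Definition~\ref{def:ab-subtype}, hence contained in the greatest such relation $\subtype$, giving $T \subtype U$. The local compatibility at each reachable pair, guaranteed by the absence of accepting states, is exactly what is needed to discharge the premises $I \subseteq J$ and $T_i \subtype U_i$ coinductively. For the ($\Rightarrow$) direction, I would argue that if $T \not\subtype U$, then by the coinductive characterisation there is a finite trace witnessing a rule violation; this trace lifts to a path in the product reaching a state where the local orderings disagree, i.e.\ an accepting state, so the language is nonempty.

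The main obstacle I expect is handling recursion and the equi-recursive view soundly: term automata are finite-state, so cycles in $\autof{T}$ and $\autof{U}$ collapse the infinite unfoldings explored coinductively, and I must be careful that reachability of a \emph{finite} accepting path faithfully captures a \emph{coinductive} failure. The crux is that a coinductive relation fails only through a finitely-reachable local mismatch — there is no infinite ``failure by non-termination'' — precisely because each failure of Definition~\ref{def:ab-subtype} is a violation of an \emph{immediate} side-condition ($\ksub$) at some pair of states, and $\edge^{\autoa}, \edge^{\autob}$ being deterministic partial functions means the synchronised exploration of the two automata is itself a finite deterministic reachability problem. Establishing this finite-witness property for coinductive failure, and checking it transports correctly across the $\autof{\cdot}$ translation, is where the real work lies; the rest follows the structure of~\cite{KPS95} with $\ksub$ substituted for their type-constructor ordering.
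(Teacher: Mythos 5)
Your proposal is correct, but it takes a genuinely different route from the paper, which offers no self-contained argument at all: the paper simply invokes the general result of Kozen, Palsberg, and Schwartzbach, relying on their remark that the construction ``generalises to an arbitrary signature of type constructors'', and leaves the instantiation implicit. What you supply is the direct proof that this citation elides: the observation that $\ksub$ encodes exactly the local side-conditions of rules \inferrule{S-$\bigtOp$}, \inferrule{S-$\dual{\bigtOp}$} and \inferrule{S-end}; the $(\Leftarrow)$ direction via exhibiting the set of reachable pairs of $\autoprod{\autof{T}}{\autof{U}}$ as a post-fixed point of the subtyping functional (which works because, when $\labs(\state_1)\ksub\labs(\state_2)$ holds, the product transition is defined on every action demanded by the corresponding rule's premises); and the $(\Rightarrow)$ direction via a finite witness for coinductive failure. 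Your identification of the finite-witness property as the crux is accurate, and your justification is the right one: the functional acts on the finite set of state pairs, and since at most one rule applies to any pair of constructors with uniquely determined premises, exclusion from the greatest fixpoint propagates back along a single deterministic path to an immediate $\nksub$-mismatch. One point worth making explicit if you write this out in full: in the original setting of Kozen et al.\ the constructor ordering must be twisted by the \emph{parity} of the path because the arrow type is contravariant in its first argument; here no constructor is contravariant, so the ordering $\ksub$ is applied uniformly at every reachable pair, which is why your simpler formulation of the accepting condition is the correct instantiation rather than a verbatim copy of theirs. Your approach buys a verifiable, self-contained correctness argument; the paper's buys brevity at the cost of leaving the (non-trivial) transport of the $\lambda$-calculus result to session types unchecked.
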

Theorem~\ref{def:kozen-correspondence} essentially says that $T
\subtype U$ iff one cannot find a ``common path'' in $T$ and $U$ that
leads to nodes whose labels are not related by $\ksub$, i.e., one
cannot find a counterexample for them \emph{not} being in the
subtyping relation.
\newcommand{\automatadist}{\;\,\,}
\begin{example}
    Below we show the constructions for $T_1$~\eqref{ex:intro-example-u-1} 
  and $U_1$~\eqref{ex:intro-example-subs}.
  \[
  \begin{array}{c@{\automatadist}c@{\automatadist}c@{\automatadist}c}
    \begin{tikzpicture}[mycfsm]
      \node[state, fill=gray!15] (q0) {$\choicecons{\outchoicetop}{\{ \rcv{request} \}}$};
      \node[state, below of=q0] (q1) {$\choicecons{\inchoicetop}{ \{ \snd{ok} \} }$};
      \node[state, below of=q1] (q2) {$\tend$};
            \node[below of=q2,yshift=0.5cm,font=\tiny] {$\autof{T_1}$};
            \path
      (q0) edge node [right] {$\rcv{request}$} (q1)
      (q1) edge node [left] {$\snd{ok}$} (q2)
      ;
    \end{tikzpicture}
    &
    \begin{tikzpicture}[mycfsm]
      \node[state, fill=gray!15] (q0) {$\choicecons{\outchoicetop}{\{ \rcv{request} \}}$};
      \node[state, below of=q0] (q1) {$\choicecons{\inchoicetop}{ \{ \snd{ok} , \snd{ko} \} }$};
      \node[state, below of=q1] (q2) {$\tend$};
            \node[below of=q2,yshift=0.5cm,font=\tiny] {$\autof{U_1}$};
            \path
      (q0) edge [bend right] node [left] {$\rcv{request}$} (q1)
      (q1) edge node [left] {$\snd{ok}$} (q2)
      (q1) edge [bend right] node [right] {$\snd{ko}$} (q0)
      ;
    \end{tikzpicture}
    &
    \begin{tikzpicture}[mycfsm]
      \node[state, fill=gray!15] (q0) {$
        \choicecons{\outchoicetop}{\{ \rcv{request} \}}
        \ksub
        \choicecons{\outchoicetop}{\{ \rcv{request} \}}
        $};
      \node[state, below of=q0] (q1) {$
        \choicecons{\inchoicetop}{ \{ \snd{ok} \} }
        \ksub
        \choicecons{\inchoicetop}{ \{ \snd{ok} , \snd{ko} \} }
        $};
      \node[state, below of=q1] (q2) {$
        \tend
        \ksub
        \tend
        $};
            \node[below of=q2,yshift=0.5cm,font=\tiny] {$\autoprod{\autof{T_1}}{\autof{U_1}}$ };
            \path
      (q0) edge node [left] {$\rcv{request}$} (q1)
      (q1) edge node [left] {$\snd{ok}$} (q2)
            ;
    \end{tikzpicture}
    &
    \begin{tikzpicture}[mycfsm]
      \node[state, fill=gray!15] (q0) {$
        \choicecons{\outchoicetop}{\{ \rcv{request} \}}
        \ksub
        \choicecons{\outchoicetop}{\{ \rcv{request} \}}
        $};
      \node[state, below of=q0, accepting] (q1) {$
        \choicecons{\inchoicetop}{ \{ \snd{ok} , \snd{ko} \} }
        \nksub
        \choicecons{\inchoicetop}{ \{ \snd{ok} \} }
        $};
      \node[state, below of=q1] (q2) {$
        \tend
        \ksub
        \tend
        $};
            \node[below of=q2,yshift=0.5cm,font=\tiny] {$ \autoprod{\autof{U_1}}{\autof{T_1}}$};
            \path
      (q0) edge node [left] {$\rcv{request}$} (q1)
      (q1) edge node [left] {$\snd{ok}$} (q2)
            ;
    \end{tikzpicture}
      \end{array}
  \]
  Where initial states are shaded and accepting states are denoted by
  a double line.
    Note that the language of $\autoprod{\autof{T_1}}{\autof{U_1}}$ is
  empty (no accepting states).
\end{example}

 \begin{proposition}\label{prop:kozen-complexity}
  For all $T, U \in \TYPESC$, the problem of deciding whether or not
  the language of $\autoprod{\autof{T}}{\autof{U}}$ is empty has a
  worst case complexity of $\bigo{\lvert T \rvert \times \lvert U
    \rvert}$;
      where $\lvert T \rvert$ stands for the number of states in
  the term automaton $\autof{T}$.
    \end{proposition}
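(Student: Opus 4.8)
The plan is to show that checking emptiness of $\autoprod{\autof{T}}{\autof{U}}$ reduces to a single graph reachability computation over a structure whose size is $\bigo{\lvert T \rvert \times \lvert U \rvert}$, and that each local operation performed during that computation costs constant time once the alphabet $\Actions$ is fixed.

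\emph{Bounding the size of the product.} First I would bound the number of states and transitions of $\autoprod{\autof{T}}{\autof{U}}$. By Definition~\ref{def:product-automaton}, $\PStates = \States^{\autof{T}} \times \States^{\autof{U}}$, hence $\lvert \PStates \rvert = \lvert T \rvert \times \lvert U \rvert$. For the transitions, $\Pedge$ is a partial function and, since $\edge^{\autof{T}}$ and $\edge^{\autof{U}}$ each assign at most one successor per action, every product state has at most $\lvert \Actions \rvert$ outgoing transitions. Treating the fixed alphabet $\Actions$ as a constant, the number of transitions of the product is therefore also $\bigo{\lvert T \rvert \times \lvert U \rvert}$.

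\emph{Reducing emptiness to reachability.} Next, since $\Pedge$ is deterministic and a string is accepted exactly when its (unique) run reaches a state of $\PFinals$, the language of $\autoprod{\autof{T}}{\autof{U}}$ is empty iff no state in $\PFinals$ is reachable from $\Pstate_0$. Deciding whether a visited product state $(\state_1, \state_2)$ lies in $\PFinals$ amounts to testing $\labs^{\autof{T}}(\state_1) \nksub \labs^{\autof{U}}(\state_2)$; by Definition~\ref{def:kozen-sub} this is an inclusion check between the two action sets labelling the constructors, costing $\bigo{\lvert \Actions \rvert}$, i.e.\ constant time under a fixed alphabet. I would then run a single traversal (BFS or DFS) from $\Pstate_0$, generating product states and their $\Pedge$-successors on the fly, marking each state as visited, and reporting non-emptiness as soon as an accepting state is met. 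Each product state is expanded once and each transition followed once, so the traversal is linear in the number of states and transitions of the product, namely $\bigo{\lvert T \rvert \times \lvert U \rvert}$; combining this with the accepting-state tests yields the claimed bound.

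\emph{Main obstacle.} The only genuine subtlety is the bookkeeping around the alphabet: both the per-state out-degree and the membership test in $\PFinals$ carry a factor $\lvert \Actions \rvert$, so the bound $\bigo{\lvert T \rvert \times \lvert U \rvert}$ relies on $\Actions$ being fixed and hence absorbed as a constant (otherwise one obtains $\bigo{\lvert \Actions \rvert \times \lvert T \rvert \times \lvert U \rvert}$). Everything else --- that the product need not be materialised eagerly but can be explored lazily, and that emptiness coincides with reachability of a final state in a deterministic automaton --- is routine.
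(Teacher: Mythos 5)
Your proposal is correct, and it is in substance the same argument the paper relies on --- the difference is that the paper's proof is essentially a citation, while yours is self-contained. The paper simply invokes \cite[Theorem 18]{KPS95}, which establishes the $\bigo{n^2}$ bound for the original recursive-type setting, and notes that the result transfers to this instantiation \emph{provided} membership in $\ksub$ is cheap (it assumes $\lvert A \rvert \ll \lvert \States^{\automaton} \rvert$ for the labels $\choicecons{\inchoicetop}{A}$, $\choicecons{\outchoicetop}{A}$). What that cited theorem boils down to is exactly what you spell out: the product has $\lvert T \rvert \times \lvert U \rvert$ states and, for a fixed alphabet, $\bigo{\lvert T \rvert \times \lvert U \rvert}$ transitions; emptiness of its language is reachability of $\PFinals$ from $\Pstate_0$ in a deterministic partial automaton; and a single lazy BFS/DFS decides this in time linear in the explored structure. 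Your explicit identification of the hidden $\lvert \Actions \rvert$ factor --- both in the out-degree and in the $\nksub$ test on labels --- is precisely the caveat the paper states informally, so the two accounts agree on where the bound could degrade. The only thing your writeup buys beyond the paper's is transparency: a reader need not consult \cite{KPS95} to verify the claim. Nothing is missing.
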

\begin{proof}
  Follows from the fact that the algorithm in~\cite{KPS95} has a
  complexity of $\bigo{n^2}$, see~\cite[Theorem 18]{KPS95}.
    This complexity result applies also to our instantiation, assuming that
  checking membership of $\ksub$ is relatively
  inexpensive, i.e., $\lvert A \rvert \ll \lvert
  \States^\automaton \rvert$ for each $\state$ such that
  $\labs^\automaton(\state) \in \{\choicecons{\inchoicetop}{A} ,
  \choicecons{\outchoicetop}{A} \}$.
  \end{proof}

\section{Experimental evaluation}
\label{sec:tool}
Proposition~\ref{prop:gay-complexity} states that Gay and Hole's
classical algorithm has an exponential complexity; while the other
approaches have a quadratic complexity
(Propositions~\ref{prop:char-complexity}
and~\ref{prop:kozen-complexity}).
The rest of this section presents several experiments that give a
better perspective of the \emph{practical} cost of these approaches.
\begin{figure}[t]
  \centering
    \begin{tikzpicture}
    [node distance = -0.2cm and 0cm, spy using outlines={rectangle, magnification=2.5, size=4.8cm, connect spies}]
            \node (randomlin) at (0,0)
    {\includegraphics[width=\myplotsize\textwidth]{./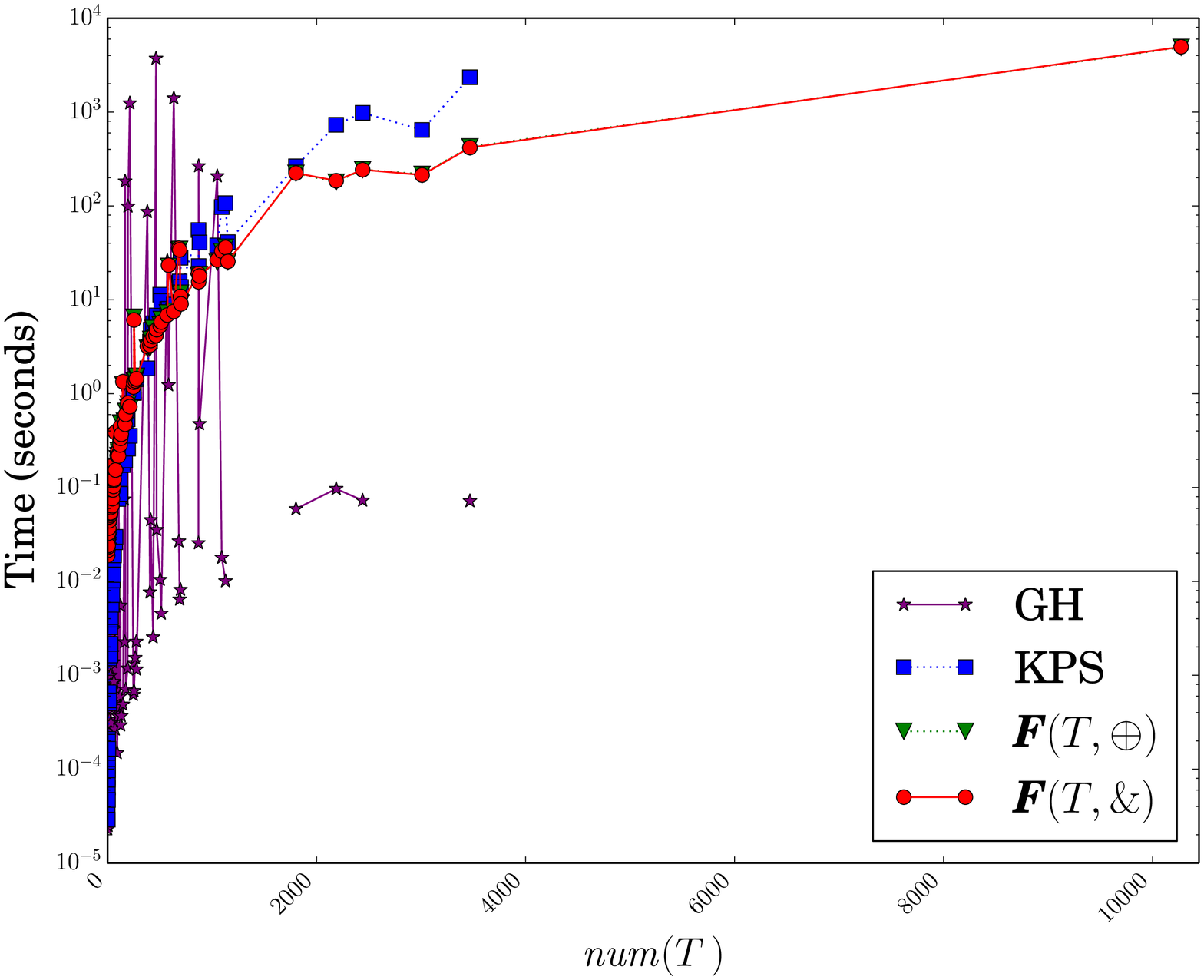}};
        \node[align=center, below = of randomlin,yshift=0cm]
    (nodea)
    {\titlelabel{a}{Arbitrary session types (lin.\ scale)}};
        \node[right = of randomlin] (randomlog) 
    {\includegraphics[width=\myplotsize\textwidth]{./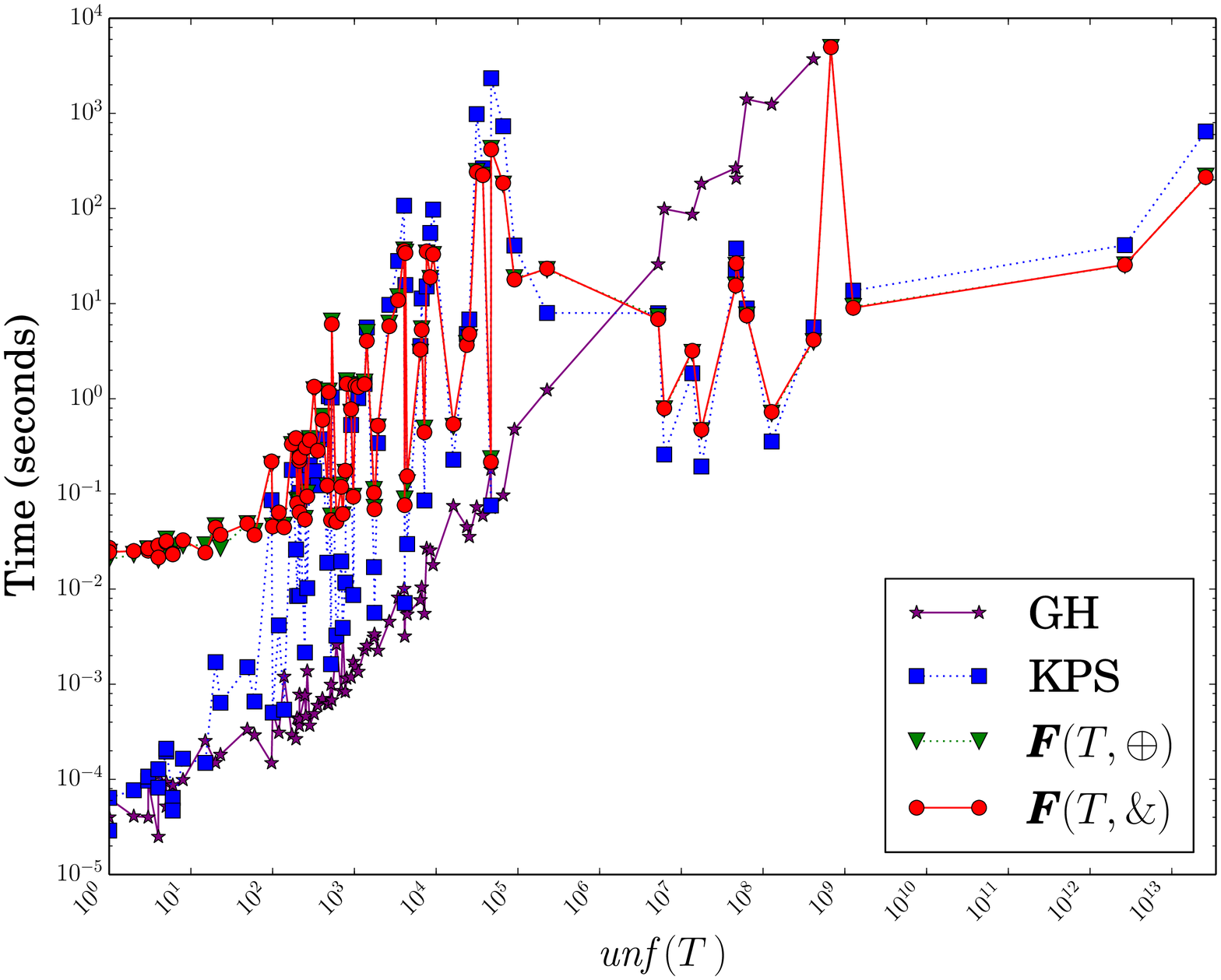}};
        \node[align=center, below = of randomlog,yshift=0cm]
    {\titlelabel{b}{Arbitrary session types (log.\ scale)}};
        \node[below = of nodea] (norec) 
    {\includegraphics[width=\myplotsize\textwidth]{./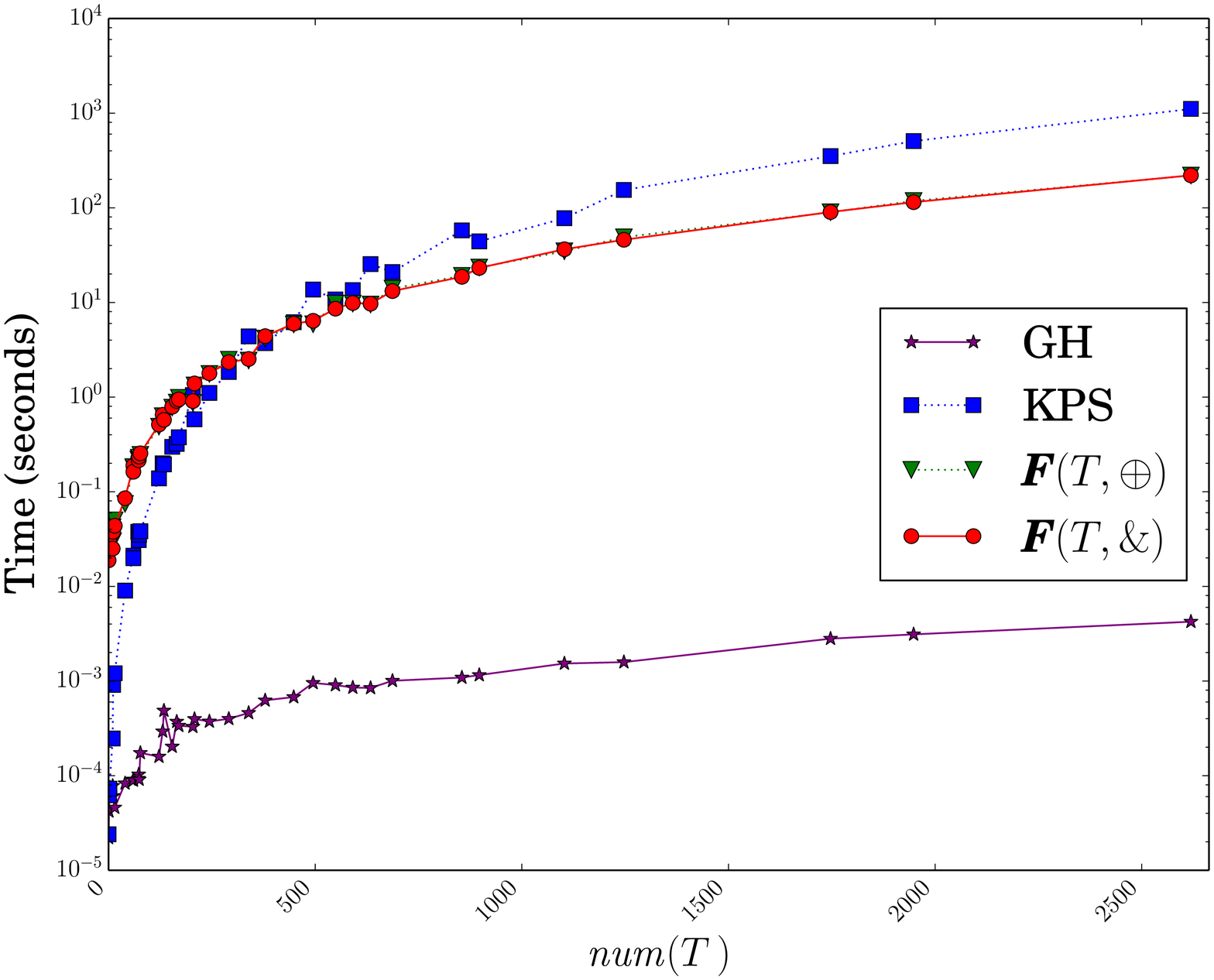}};
        \node[align=center, below = of norec,yshift=0cm]
    {\titlelabel{c}{No recursive definition (lin.\ scale)}};
      \end{tikzpicture}
    \caption{Benchmarks (1)}
  \label{fig:plots-1}
\end{figure}
 \subsection{Implementation overview and metrics}
We have implemented three different approaches to checking whether two
given session types are in the subtyping relation given in
Definition~\ref{def:ab-subtype}.
The tool~\cite{tool}, written in Haskell, consists of three main parts:
($i$) A module that translates session types to the mCRL2 specification
language~\cite{groote2014modeling} and generates a characteristic
($\mu$-calculus) formula (cf.\ Definition~\ref{def:char-formula}),
respectively;
($ii$) A module implementing the algorithm of~\cite{GH05} (see
Section~\ref{sub:gay-hole-algo}), which relies on the Haskell
$\mathtt{bound}$ library to make session types unfolding
as efficient as possible.
($iii$) A module implementing our adaptation of Kozen et
al.'s algorithm~\cite{KPS95}, see Section~\ref{sub:kozen-et-al}.
Additionally, we have developed an accessory tool which generates
arbitrary session types using Haskell's QuickCheck
library~\cite{quickcheck}.

The tool invokes the mCRL2 toolset~\cite{mcrl2} (release version {\tt
  201409.1}) to check the validity of a $\mu$-calculus formula on a
given model.
We experimented invoking mCRL2 with several parameters and
concluded that the default parameters
gave us the best performance overall.
Following discussions with mCRL2 developers, we
have notably experimented with a parameter that pre-processes the
$\mu$-calculus formula to ``insert dummy fixpoints in modal
operators''.
This parameter gave us better performances in some cases, but dramatic
losses for ``super-recursive'' session types.
Instead, an addition of ``dummy fixpoints'' while generating
the characteristic formulae gave us the best results
overall.\footnote{This optimisation was first suggested on the mCRL2
  mailing list.}
The tool is thus based on a slight modification of
Definition~\ref{def:char-formula} where a modal operator
$\mmbox{\any}{\fora}$ becomes $\mmbox{\any}{\mmnu{\var{t}}\fora}$
(with $\var{t}$ fresh and unused) and similarly for
$\mmdiamond{\any}{\fora}$.
Note that this modification does not change the semantics of the
generated formulae.

We use the following functions to measure the size of a session type.
\[
\resizebox{\textwidth}{!}{$
  \begin{array}{ll}
    \nummsg{T}  \, \defi 
    &
    \unfoldP{T}{X}  \, \defi  
    \\
    \quad
    \left\{
      \!
      \begin{array}{ll}
        0 
        & \text{if } T = \tend \; \text{or} \; T = \varX 
        \\
        \nummsg{T'} 
        &  \text{if } T = \rec{x} T'
        \\
        \lvert I \rvert \!+\!   \sum_{i \in I} \nummsg{T_i}
        & \text{if } T = {\choice}
      \end{array}
    \right.
    \;
    &
    \quad
    \left\{
      \!
      \begin{array}{ll}
        0 
        & \text{if } T = \tend \; \text{or} \; T = \varX
        \\
        (1 \!+\! \varocc{T'}{\varX}) \!\times\! \unfold{T'} 
        &  \text{if } T = \rec{x} T'
        \\
        \lvert I \rvert \!+\!   \sum_{i \in I} \unfold{T_i}
        & \text{if } T = {\choice} 
      \end{array}
    \right.
      \end{array}
  $}
\]
Function $\nummsg{T}$ returns the \emph{number of messages} in $T$.
Letting $\varocc{T}{\varX}$ be the number of times variable $\varX$
appears \emph{free} in session type $T$, function $\unfold{T}$ returns
the number of messages in the unfolding of $T$.
Function $\unfold{T}$ takes into account the structure of a type wrt.\
recursive definitions and calls (by unfolding once every recursion
variable).
\begin{figure}[t]
  \centering
    \begin{tikzpicture}
    [node distance = -0.2cm and -0cm, spy using outlines={rectangle, magnification=2.5, size=4.5cm, connect spies}] 
        \node (recsend) at (0,0)
    {\includegraphics[width=\myplotsize\textwidth]{./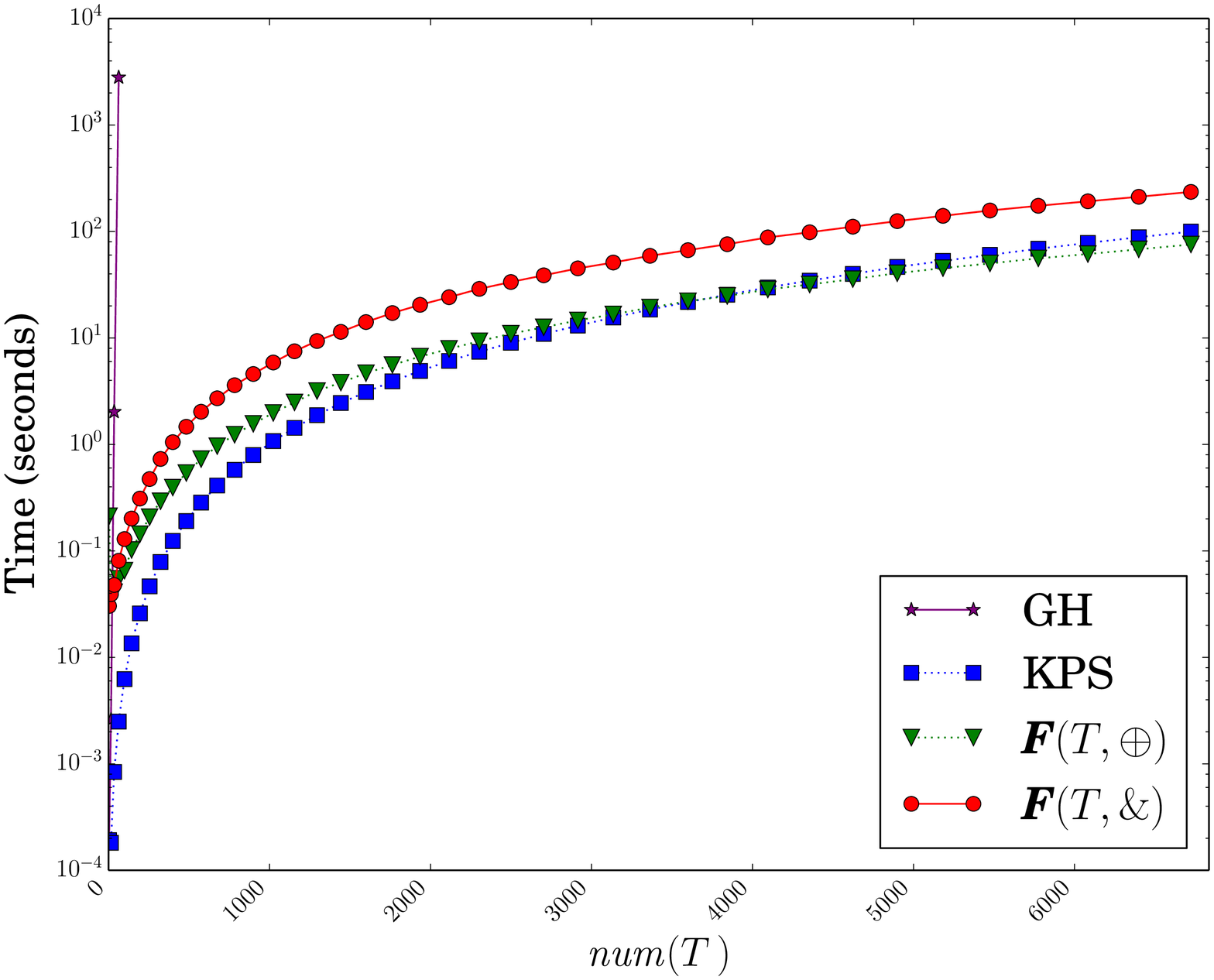}};
        \node[align=center, below = of recsend,yshift=0cm]
    (noded)
    {\titlelabel{d}{Recursive-Send (lin.\ scale)}};
            \node[below = of noded] (unfolded) 
    {\includegraphics[width=\myplotsize\textwidth]{./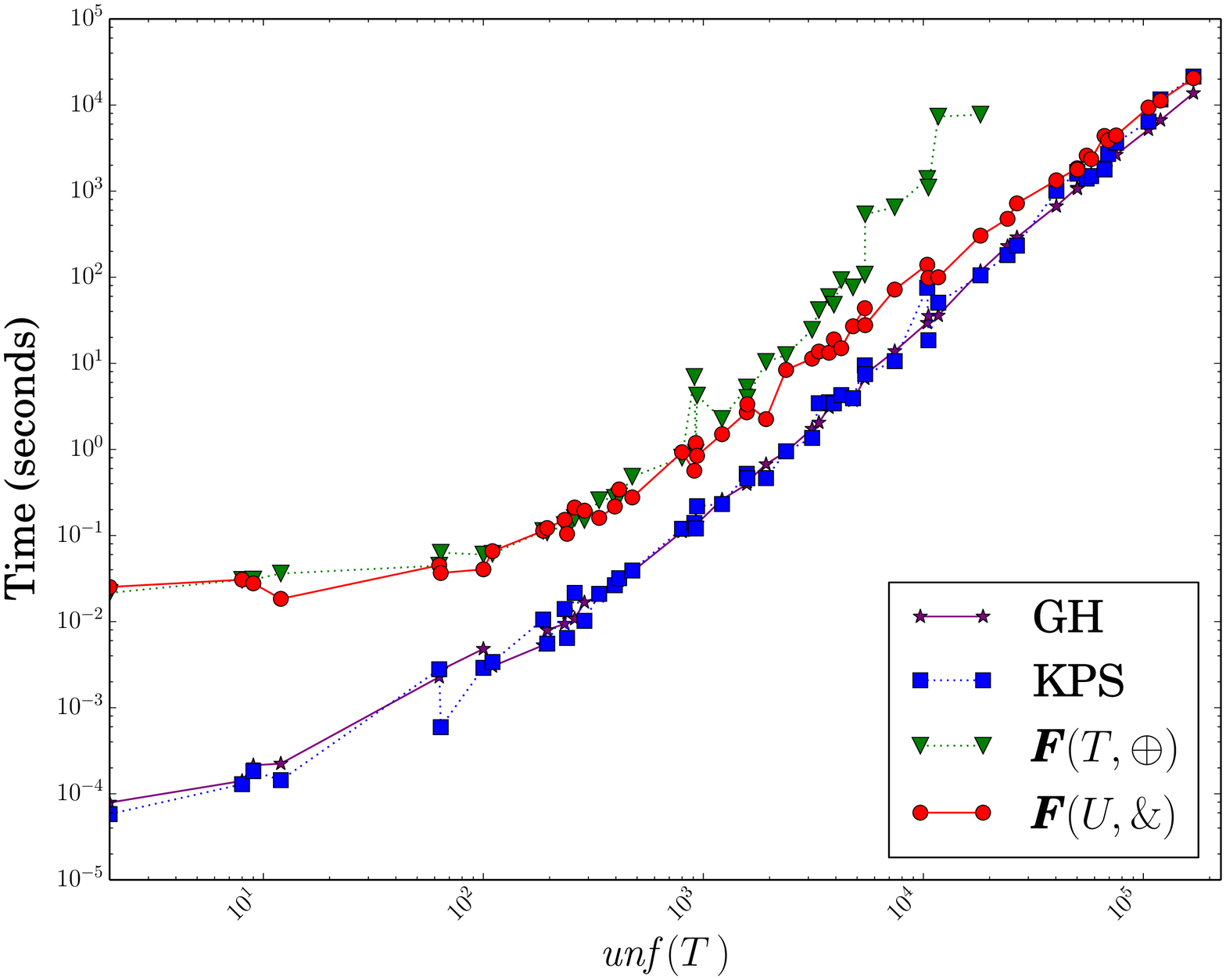}};
        \node[align=center, below = of unfolded,yshift=0cm]
    {\titlelabel{f}{One-time unfolded (log.\ scale)}};
        \node[left = of unfolded] (recrcv) 
    {\includegraphics[width=\myplotsize\textwidth]{./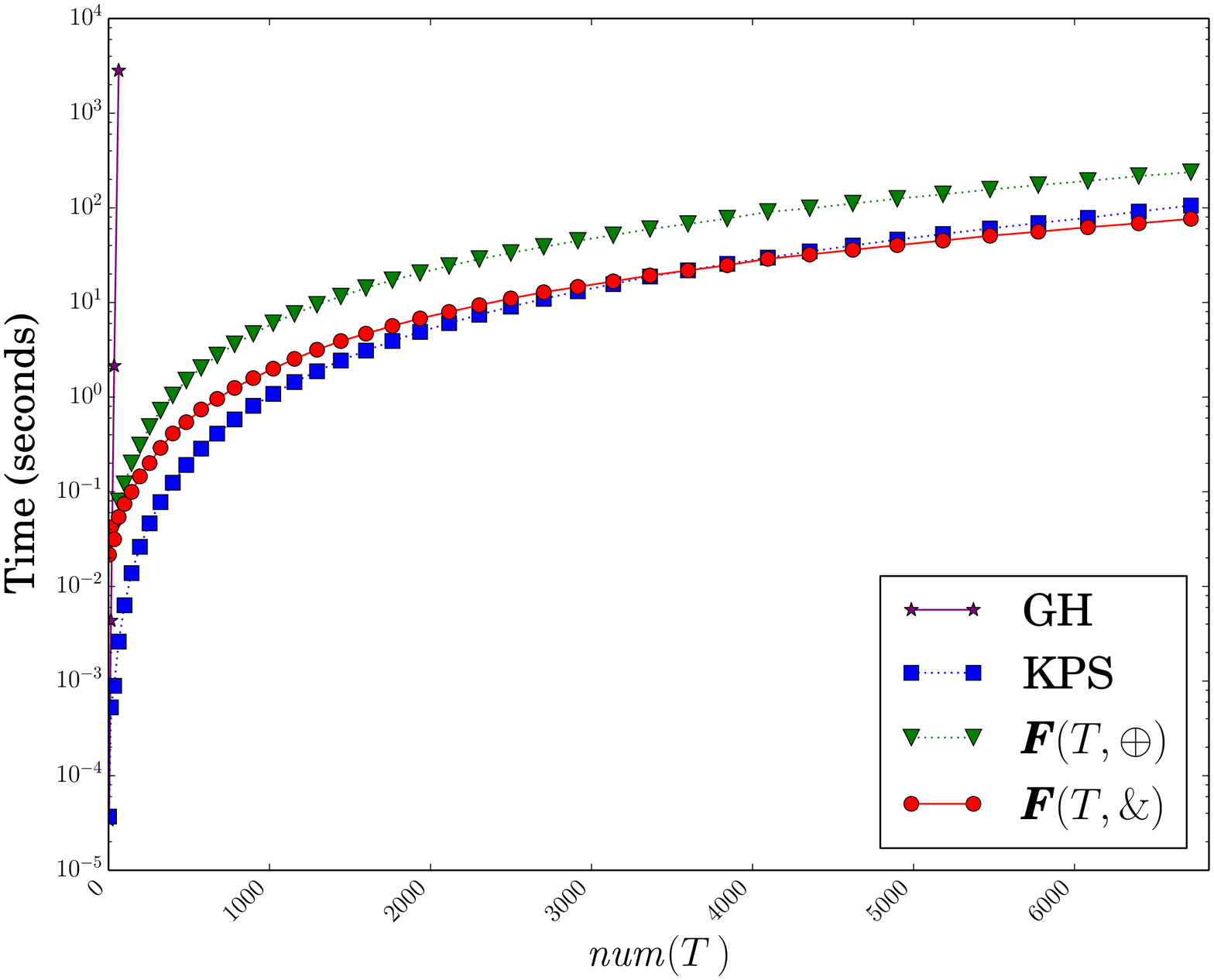}};
        \node[align=center, below = of recrcv,yshift=0cm]
    {\titlelabel{e}{Recursive-Receive (lin.\ scale)}};
      \end{tikzpicture}
    \caption{Benchmarks (2)}
  \label{fig:plots-2}
\end{figure}
 \subsection{Benchmark results}
The first set of benchmarks compares the performances
of the three approaches when the pair of types given are identical,
i.e., we measure the time it takes for an algorithm to check whether
$T \subtype T$ holds.
The second set of benchmarks considers types that are ``unfolded'', so
that types have different sizes.
Note that checking whether two equal types are in the subtyping
relation is one of the most costly cases of subtyping since every
branch of a choice must be visited.

Our results below show the performances of four algorithms:
($i$) our Haskell implementation of Gay and Hole's algorithm (GH),
($ii$) our implementation of Kozen, Palsberg, and Schwartzbach's
algorithm (KPS),
($iii$) an invocation to mCRL2 to check whether $U \models
\subform{T}$ holds, and
($iv$) an invocation to mCRL2 to check whether $T \models \SUPform{U}$
holds.

All the benchmarks were conducted on an 3.40GHz Intel i7 computer
with 16GB of RAM.
Unless specified otherwise, the tests have been executed with a
timeout set to $2$ hours ($7200$ seconds). A gap appears in the plots
whenever an algorithm reached the timeout.
Times ($y$-axis) are plotted on a \emph{logarithmic} scale, the
scale used for the size of types ($x$-axis) is specified below
each plot.

{\bf Arbitrary session types}
Plots (a) and (b) in Figure~\ref{fig:plots-1} shows how the algorithms
perform with arbitrary session types (randomly generated by our tool).
Plot (a) shows clearly that the execution time of KPS, $T \models
\SUPform{T}$, and $T \models \subform{T}$ mostly depends on
$\nummsg{T}$; while plot (b) shows that GH is mostly affected by the
number of messages in the unfolding of a type ($\unfold{T}$).

Unsurprisingly, GH performs better for smaller session types, but
starts reaching the timeout when $\nummsg{T} \approx 700$.
The other three algorithms have roughly similar performances, with the
model checking based ones performing slightly better for large session
types.
Note that both $T \models \SUPform{T}$ and $T \models \subform{T}$
have roughly the same execution time.

{\bf Non-recursive arbitrary session types}
Plot (c) in Figure~\ref{fig:plots-1} shows how the algorithms perform
with arbitrary types that do \emph{not} feature any recursive
definition (randomly generated by our tool), i.e.,
the types are of the form:
\[
\textstyle
T \coloneqq 
\tend
\bnfsep
\inchoice
\bnfsep
\outchoice
\]
The plot shows that GH performs much better than the other
three algorithms (terminating under $1$s for each invocation).
Indeed this set of benchmarks is the best case scenario for GH: there
is no recursion hence no need to unfold types.
Observe that the model checking based algorithms perform better than
KPS for large session types.
Again, $T \models \SUPform{T}$ and $T \models \subform{T}$ behave
similarly.

{\bf Handcrafted session types}
Plots (d) and (e) in Figure~\ref{fig:plots-2} shows how the algorithms
deal with ``super-recursive'' types, i.e., types of the form:
\[
\textstyle
T  \coloneqq  
\recND{x}_1 . \any_1 . \ldots \recND{x}_k . \any_k
\left\{
  \;
  \bigOp_{1 \leq i \leq k} 
  \any_i .
  \{ 
  \bigOp_{1 \leq j \leq k} \any_j . \var{x}_j
  \}
  \;
\right\}
\]
where $\nummsg{T} = k(k+2)$ for each $T$.
Plot (d) shows the results of experiments with $\bigtOp$ set to
$\inchoicetop$ and $\Op$ to $\sendop$; while $\bigtOp$ is set to
$\outchoicetop$ and $\Op$ to $\rcvop$ in plot (e).

The exponential time complexity of GH appears clearly in both plots:
GH starts reaching the timeout when  $\nummsg{T} = 80$ ($k=8$).
However, the other three algorithms deal well with larger session
types of this form. 
Interestingly, due to the nature of these session types (consisting of
either only \emph{internal} choices or only \emph{external} choices),
the two model checking based algorithms perform slightly differently.
This is explained by Definition~\ref{def:char-formula} where the
formula generated with $\SUPform{T}$ for an internal choice is larger
than for an external choice, and vice-versa for $\subform{T}$.
Observe that, $T \models \subform{T}$ (resp.\ $T \models \SUPform{T}$)
performs better than KPS for large session types in plot (d) (resp.\
plot (e)).

{\bf Unfolded types}
\newcommand{\unftype}{{V}}
The last set of benchmarks evaluates the performances of the
four algorithms to check whether 
$T = \rec{\varX} \unftype \; \subtype \; \rec{\varX} \left( \unftype
  \subs{\unftype}{\varX} \right) = U$
holds, where $\varX$ is fixed and $\unftype$ (randomly generated) is
of the form:
\[
\textstyle
\unftype \coloneqq 
\inchoiceSet{i}{I}{\unftype}
\bnfsep
\outchoiceSet{i}{I}{\unftype}
\bnfsep
\varX
\]
Plots (f) in Figure~\ref{fig:plots-2} shows the results of our
experiments (we have set the timeout to $6$ hours for these tests).
Observe that $U \models \subform{T}$ starts reaching the timeout quickly. In
this case, the model (i.e., $U$) is generally much larger than the
formula (i.e., $\subform{T}$).
After discussing with the mCRL2 team, this discrepancy seems to
originate from internal optimisations of the model checker that can be
diminished (or exacerbated) by tweaking the parameters of the
tool-set.
The other three algorithms have similar performances. Note that the
good performance of GH in this case can be explained by the fact that
there is only one recursion variable in these types; hence the size of
their unfolding does not grow very fast.

\section{Related work and conclusions}\label{sec:conc} \label{sec:related}
\paragraph{\bf Related work}
Subtyping for recursive types has been studied for many
years.
Amadio and Cardelli~\cite{AC93} introduced the first subtyping
algorithm for recursive types for the $\lambda$-calculus.
Kozen et al.\ gave a quadratic subtyping algorithm in~\cite{KPS95},
which we have adapted for session types, cf.\
Section~\ref{sub:kozen-et-al}.
A good introduction to the theory and history of the field 
is in~\cite{GLP02}.
Pierce and Sangiori~\cite{PS96} introduced subtyping for IO types in
the $\pi$-calculus, which later became a foundation for the algorithm
of Gay and Hole who first introduced subtyping for session
types in the $\pi$-calculus in~\cite{GH05}.
The paper~\cite{DemangeonH11} studied an abstract encoding between
linear types and session types, with a focus on subtyping.
Chen et al.~\cite{CDY2014} studied the notion of \emph{preciseness} of
subtyping relations for session types.
The present work is the first to study the algorithmic aspect of the
problem.

Characteristic formulae for finite processes were first studied
in~\cite{GS86}, then in~\cite{steffen89} for finite-state processes.
Since then the theory has been studied
extensively~\cite{si94,ai97,ai07,cs91,FS05,MOlm98,ails12} for most of
the van Glabbeek's spectrum~\cite{Glabbeek90} and in different
settings (e.g., time~\cite{AIPP00} and probabilistic~\cite{SZ12}).
See~\cite{ails12,ai07} for a detailed historical account of the field.
This is the first time characteristic formulae are applied to the
field of session types.
A recent work~\cite{ails12} proposes a general framework to obtain
characteristic formula constructions for simulation-like relation
``for free''.
We chose to follow~\cite{steffen89} as it was a better fit for session
types as they allow for a straightforward inductive construction of a
characteristic formula.
Moreover,~\cite{steffen89} uses the standard $\mu$-calculus which
allowed us to integrate our theory with an existing model checker.

\paragraph{\bf Conclusions}
In this paper, we gave a first connection between session types and
model checking, through a characteristic formulae approach based on
the $\mu$-calculus.
We gave three new algorithms for subtyping: two are based on model
checking and one is an instantiation of an algorithm for the
$\lambda$-calculus~\cite{KPS95}.
All of which have a quadratic complexity in the worst case and behave
well in practice.

Our approach can be easily: ($i$) adapted to types for the
$\lambda$-calculus (see Appendix~\ref{sec:lambda}) and ($ii$) extended
to session types that carry other (\emph{closed}) session types, e.g.,
see~\cite{GH05,CDY2014}, by simply applying the algorithm recursively
on the carried types. For instance, to check
$
\snd{a\langle \rcv{c} \outchoicetop \rcv{d} \rangle }
\; 
\subtype 
\;
\snd{a\langle \rcv{c} \rangle }   \, \inchoicetop \, \snd{b \langle \tend \rangle } 
$
one can check the subtyping for the outer-most types, while
building constraints, i.e., $\{ \rcv{\!c} \, \outchoicetop \rcv{d} \,
\subtype \, \rcv{c} \}$, to be checked later on, by re-applying the
algorithm.

The present work paves the way for new connections between session
types and modal fixpoint logic or model checking theories.
It is a basis for upcoming connections between model checking and
classical problems of session types, such as the asynchronous
subtyping of~\cite{CDY2014} and multiparty compatibility
checking~\cite{LTY15, DY13}.
We are also considering applying model checking approaches to session
types with probabilistic, logical~\cite{BHTY10}, or
time~\cite{BLY15,BYY14} annotations.
Finally, we remark that~\cite{CDY2014} also establishes that subtyping (cf.\
Definition~\ref{def:ab-subtype}) is \emph{sound} (but not complete)
wrt.\ the \emph{asynchronous} semantics of session types,
which models programs that communicate through FIFO buffers.
Thus, our new conditions (items $(b$)-$(e)$ of
Theorem~\ref{thm:safety-statements}) also imply safety $(a)$ in the
asynchronous setting.

\paragraph{\bf Acknowledgements}
We would like to thank Luca Aceto, Laura Bocchi, and Alceste Scalas
for their invaluable comments on earlier versions of this work.
This work is partially supported by UK EPSRC projects EP/K034413/1,
EP/K011715/1, and EP/L00058X/1; and by EU 7FP project under grant
agreement 612985 (UPSCALE).

\newpage
\bibliographystyle{abbrv}
\bibliography{model}

\appendix

\newpage
\section{Appendix: Proofs}\label{app:proofs}
\subsection{Compositionality} 
\lemcompo*
\begin{proof}\label{proof:lemcompo}
  By structural induction on the structure of $T$.
  \begin{enumerate}
  \item If $T = \tend$, then
    \begin{itemize}
    \item $T \subs{U}{\varX} = \tend$ and
      $\Anyform{T \subs{U}{\varX}}{\bigtOp} =   \mmbox{\Actions}{\falsek}$, and
    \item  $\Anyform{T}{\bigtOp} = 
      \mmbox{\Actions}{\falsek} = \mmbox{\Actions}{\falsek} \subs{ \Anyform{U}{\bigtOp} }{\varX}$.
    \end{itemize}
  \item If $T = \varX$, then 
    \begin{itemize}
    \item  $T \subs{U}{\varX} = U$, hence 
      $\Anyform{T \subs{U}{\varX}}{\bigtOp} = \Anyform{U}{\bigtOp}$, and
    \item $\Anyform{T}{\bigtOp} = \varX$, hence  
      $\Anyform{T}{\bigtOp} \subs{ \Anyform{U}{\bigtOp} }{\varX} = \Anyform{U}{\bigtOp}$.
    \end{itemize}
  \item If $T = \varY ( \neq \varX)$, then
    \begin{itemize}
    \item $\Anyform{\varY \subs{U}{\varX}}{\bigtOp} = \Anyform{\varY}{\bigtOp} = \varY$, and
    \item $\Anyform{\varY}{\bigtOp} \subs{ \Anyform{U}{\bigtOp} }{\varX}
      = \varY \subs{ \Anyform{U}{\bigtOp} }{\varX} = \varY$.
    \end{itemize}
  \item If $T = \choice$, then
    \begin{align*}
      \Anyform{T \subs{U}{\varX}}{\bigtOp} 
      & = \Anyform{ \choiceSetNoIdx{i}{I}{ T_i \subs{U}{\varX} }}{\bigtOp}
      \\
      & = \bigwedge_{i \in I} \mmdiamond{\opfun{\bigtOp}{a_i}}{\Anyform{T_i \subs{U}{\varX} }{\bigtOp}}
      \\
      \text{\tiny\it (I.H.)} \quad
      & =  \bigwedge_{i \in I} \mmdiamond{\opfun{\bigtOp}{a_i}}{ \left(\Anyform{T_i  }{\bigtOp}
          \subs{ \Anyform{U}{\bigtOp}}{\varX} \right)}
      \\
            & = 
      \left( 
        \bigwedge_{i \in I} \mmdiamond{\opfun{\bigtOp}{a_i}}{\Anyform{T_i  }{\bigtOp}}
      \right)
        \subs{ \Anyform{U}{\bigtOp}}{\varX} 
            \\
      &= \Anyform{T}{\bigtOp} \subs{ \Anyform{U}{\bigtOp} }{\varX}
    \end{align*}
          \item  If $T = \cochoice$, then
    \begin{align*}
      \Anyform{T \subs{U}{\varX}}{\bigtOp} 
      & = \Anyform{ \cochoiceSetNoIdx{i}{I}{ T_i \subs{U}{\varX} }}{\bigtOp}
      \\
      & = 
      \bigwedge_{i \in I} \mmbox{\opfun{\bigtOp}{a_i}}{\Anyform{  T_i \subs{U}{\varX} }{\bigtOp}}
      \, \land \,
      \\
      & \qquad 
      \bigvee_{i \in I} \mmdiamond{\opfun{\bigtOp}{a_i}}{\truek}
      \, \land \,
      \mmbox{ \compset{ \{ \opfun{\bigtOp}{a_i} \st i \in I\} }}{\falsek}
      \\
      \text{\tiny\it (I.H.)} \quad
      & = 
      \bigwedge_{i \in I} \mmbox{\opfun{\bigtOp}{a_i}}{
        \left(  \Anyform{  T_i  }{\bigtOp} \subs{\Anyform{U}{\bigtOp}}{\varX} \right)
      }
      \, \land \,
      \\
      & \qquad 
      \bigvee_{i \in I} \mmdiamond{\opfun{\bigtOp}{a_i}}{\truek}
      \, \land \,
      \mmbox{ \compset{ \{ \opfun{\bigtOp}{a_i} \st i \in I\} }}{\falsek}
      \\
      & =  \left( \bigwedge_{i \in I} \mmbox{\opfun{\bigtOp}{a_i}}{
          \Anyform{  T_i  }{\bigtOp}
        }
        \, \land \,
      \right.
      \\
      & \qquad 
      \left.
        \bigvee_{i \in I} \mmdiamond{\opfun{\bigtOp}{a_i}}{\truek}
        \, \land \,
        \mmbox{ \compset{ \{ \opfun{\bigtOp}{a_i} \st i \in I\} }}{\falsek}
      \right) \subs{\Anyform{U}{\bigtOp}}{\varX} 
      \\
      & = \Anyform{T}{\bigtOp} \subs{ \Anyform{U}{\bigtOp} }{\varX}
    \end{align*}
          \item If $ T = \rec{\varY} T'$ ($\varX \neq \varY$), we have
    \begin{align*}
      \Anyform{ \rec{\varY} T' \subs{U}{\varX}}{\bigtOp} 
      & =  \mmnu{\varY} \Anyform{T'  \subs{U}{\varX} }{\bigtOp}
      \\
      \text{\tiny\it (I.H.)} \quad
      & = \mmnu{\varY} \Anyform{T'}{\bigtOp}  \subs{\Anyform{U}{\bigtOp}}{\varX}
      \\ 
      & = \Anyform{T}{\bigtOp} \subs{ \Anyform{U}{\bigtOp} }{\varX}
    \end{align*}
  \end{enumerate}
\end{proof}

\subsection{Extensions and approximations}
The proofs in this section follow closely the proof techniques
in~\cite{steffen89}.
\begin{mydef}[Extended subtyping]
  Let $T, U \in \TYPES$, $\fora \in \FORMULA$, and
  $\vec{\varX} = (\varX_1 , \ldots , \varX_n)$ be
  a vector containing all the free variables in $T$, $U$, or $\fora$.
    We define the \emph{extended subtyping} $\subtype_e$
  and the \emph{extended satisfaction relation}, $\models_e$, by
  \begin{enumerate}
  \item $T \subtype_e U \iff \forall \vec{V} \in \TYPES^n \qst 
    T\subs{\vec{V}}{\vec{\varX}} \subtype 
    U\subs{\vec{V}}{\vec{\varX}} $
  \item $ T \models_e \fora
    \iff 
    \forall  \vec{V} \in \TYPES^n 
    \forall \vec{\forb} \in \FORMULA^n 
    \qst
    \vec{V} \models \vec{\forb}
    \implies
    T\subs{\vec{V}}{\vec{\varX}}
    \models \phi \subs{\vec{\forb}}{\vec{\varX}}
    $
  \end{enumerate}
  where $  \vec{V} \models \vec{\forb}$ is understood component wise.
  \end{mydef}

\begin{mydef}[Subtyping approximations]\label{def:subtype-approx}
  Let $T, U \in \TYPES$ and
  $\vec{\varX} = (\varX_1 , \ldots , \varX_n)$ be
  a vector containing all the free variables in $T$ or $U$.
    The extended $k$-limited subtyping, $\subtype_{e,k}$ is defined
  inductively on $k$ as follows: $T \subtype_{e,0} U$ always holds; if $k
  \geq 1$, then $T \subtype_{e,k} U$ holds iff for all $\vec{V} \in
  \TYPES^n$, $ T\subs{\vec{V}}{\vec{\varX}} \subtype_{e,k}
  U\subs{\vec{V}}{\vec{\varX}} $ can be derived from the following
  rules:
  \[
  \begin{array}{c}
        \inference{S-out}
    {
      I \subseteq J
      &
      \forall i \in I \qst T_i \absub{\bigOp}_{e,k-1} U_i
    }
    {
      \choice \absub{\bigOp}_{e,k} \choiceSet{j}{J}{U}
    }
        \qquad
    \inference{S-in}
    {
      J \subseteq I
      &
      \forall j \in J \qst T_j \absub{\bigOp}_{e,k-1} U_j
    }
    {
      \cochoice \absub{\bigOp}_{e,k}   \cochoiceSet{j}{J}{U}
    }
        \\[1pc]
    \inference{S-end}
    {}
    {\tend \absub{\bigOp}_{e,k} \tend}
      \end{array}
  \]
    Recall that we are assuming an equi-recursive view of types.
\end{mydef}

\begin{restatable}{lemma}{lemextsubtype}
  \label{lem:ext-subtype}
  $ T \absub{\bigOp}_e U \iff \forall k \qst T \absub{\bigOp}_{e,k} U$
\end{restatable}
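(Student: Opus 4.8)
The plan is to prove the two implications separately, treating the equivalence as the standard fact that the coinductively-defined relation $\absub{\bigOp}_e$ coincides with the intersection $\bigcap_{k} \absub{\bigOp}_{e,k}$ of its finite approximants, adapted to open types through the closing-substitution machinery of Definition~\ref{def:subtype-approx}. Throughout I rely on contractivity: since every type is finite and guarded, unfolding a closed type always exposes a unique top-level constructor ($\choice$, $\cochoice$, or $\tend$), so that for a given pair $(T,U)$ exactly one of the rule shapes of Definition~\ref{def:ab-subtype} (equivalently of Definition~\ref{def:subtype-approx}) can apply, and it does so uniformly for every $k \geq 1$.

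For the forward direction ($\Rightarrow$), I would argue by induction on $k$ that $\absub{\bigOp}_e\ \subseteq\ \absub{\bigOp}_{e,k}$. The base case $k=0$ is immediate, since $T \absub{\bigOp}_{e,0} U$ holds for all $T,U$. For the inductive step, assume $T \absub{\bigOp}_e U$ and fix an arbitrary closing substitution $\vec V$; then $T\subs{\vec V}{\vec \varX} \absub{\bigOp} U\subs{\vec V}{\vec \varX}$ holds, and on closed types $\absub{\bigOp}_e$ collapses to $\absub{\bigOp}$. Because $\absub{\bigOp}$ is closed under the coinductive rules, I inspect the matching rule at top level and read off that the corresponding components are again related by $\absub{\bigOp}_e$; the induction hypothesis upgrades these to $\absub{\bigOp}_{e,k-1}$, which is exactly what the level-$k$ rules require to conclude $T\subs{\vec V}{\vec \varX} \absub{\bigOp}_{e,k} U\subs{\vec V}{\vec \varX}$, and hence $T \absub{\bigOp}_{e,k} U$.

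For the backward direction ($\Leftarrow$), I would reduce to the closed case by fixing an arbitrary closing substitution $\vec V$, and then use coinduction. Define the candidate relation $R \defi \{(T',U') \in \TYPESC \times \TYPESC : \forall k,\ T' \absub{\bigOp}_{e,k} U'\}$; the pair $(T\subs{\vec V}{\vec \varX}, U\subs{\vec V}{\vec \varX})$ lies in $R$ by instantiating the hypothesis at $\vec V$ separately for each $k$. I then show $R$ is a post-fixpoint of the functional induced by the rules of Definition~\ref{def:ab-subtype}, i.e.\ that every pair in $R$ is the conclusion of some rule whose premises again lie in $R$. Given $(T',U') \in R$, the instance $T' \absub{\bigOp}_{e,1} U'$ already fixes the top-level rule shape; by the uniformity noted above the same shape is used at every level, so from $T' \absub{\bigOp}_{e,k+1} U'$ (for each $k$) I extract that the component pairs $(T_i,U_i)$ satisfy $T_i \absub{\bigOp}_{e,k} U_i$. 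Quantifying over $k$ yields $(T_i,U_i) \in R$ for every component, witnessing that $R$ is closed under the rules. By maximality of $\absub{\bigOp}$ we conclude $R \subseteq \absub{\bigOp}$, whence $T\subs{\vec V}{\vec \varX} \absub{\bigOp} U\subs{\vec V}{\vec \varX}$; as $\vec V$ was arbitrary, $T \absub{\bigOp}_e U$.

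The main obstacle is bookkeeping the open-type layer: both directions must commute the universal quantifier over closing substitutions $\vec V$ with the quantifier over $k$ and with the structural case analysis. The delicate point is the backward direction, where one must ensure that the rule used to justify $T' \absub{\bigOp}_{e,k} U'$ does not drift with $k$, so that the premises can be collected into a single member of $R$ rather than a $k$-indexed family of incompatible decompositions. Contractivity and the syntactic determinacy of the exposed constructor are precisely what rule this out, and this is the step I would write most carefully; the remaining manipulations are routine and mirror the corresponding development in~\cite{steffen89}.
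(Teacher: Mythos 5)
Your proof is correct, and the forward direction (induction on $k$, using that $\absub{\bigOp}$ is closed under the defining rules) is exactly the ``straightforward'' half the paper waves at. For the converse, however, you take a slightly different route from the one the paper hints at. The paper justifies the backward direction by the fact that session types have finitely many states: on a finite set of reachable state pairs the descending chain of approximants $\absub{\bigOp}_{e,k}$ must stabilise at some finite $k_0$, so $\bigcap_k \absub{\bigOp}_{e,k}$ is itself a fixpoint of the rule functional and hence contained in the greatest one. You instead show directly that $R = \{(T',U') : \forall k,\ T' \absub{\bigOp}_{e,k} U'\}$ is a post-fixpoint, and the load-bearing observation is the one you flag as delicate: contractivity plus the syntax-directedness of the rules means the rule instance justifying $T' \absub{\bigOp}_{e,k} U'$ (its shape, the index sets, and the component pairs) is uniquely determined by $T'$ and $U'$ and so cannot drift with $k$; this is precisely $\omega$-co-continuity of the functional. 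Your argument is therefore somewhat more general --- it needs only finite branching and determinacy of the rules, not literal finite-stateness --- while the paper's stabilisation argument additionally yields that the intersection is attained at a finite index $k_0$. Both are sound; the bookkeeping of closing substitutions $\vec V$ is handled correctly in your write-up, since the quantifier over $\vec V$ commutes harmlessly with the quantifier over $k$ in each direction.
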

\begin{proof}
  The ($\Rightarrow$) direction is straightforward, while the converse
  follow from the fact the session types we consider have only a
  finite number of states.
  \end{proof}

\begin{mydef}[Semantics approximations]
  Let $T \in \TYPES$ and $\vec{\varX} = (\varX_1 , \ldots , \varX_n)$
  be a vector containing all the free variables in $T$.
    The extended $k$-limited satisfaction relation $\models_{e,k}$ is
  defined inductively as follows on $k$:
    $T \models_{e,0} \fora$ always holds;
  if $k \geq 1$, then $\models_{e,k}$ is given by:
    \[
    \begin{array}{l@{\quad \mathit{iff} \quad}l}
      \multicolumn{1}{l}{T \ekmodels \truek}
      \\
      T \ekmodels \fora_1 \land \fora_2 & T \ekmodels \fora_1 \text{ and } T \ekmodels \fora_2
      \\
      T \ekmodels \fora_1 \lor \fora_2 & T \ekmodels \fora_1 \text{ or } T \ekmodels \fora_2
      \\
      T \ekmodels \mmbox{\any}{\fora}
      & 
      \forall \vec{V} \in \TYPES^n \;
      \forall T' \qst \text{if } 
      T\subs{\vec{V}}{\vec{\varX}}  \semarrow{\any} T' \text{ then } T' \models_{e,k-1} \fora
      \\
      T \ekmodels \mmdiamond{\any}{\fora}
      & 
      \forall \vec{V} \in \TYPES^n \;
      \exists T' \qst
      T\subs{\vec{V}}{\vec{\varX}}  \semarrow{\any} T' \text{ and } T' \models_{e,k-1} \fora
      \\
      T \ekmodels \mmnu{\varX} \fora 
      &
      \forall n \qst T \ekmodels \approxi{ \mmnu{\varX} \fora }{n}
      \ensuremath{$\qedhere$}
    \end{array}
    \]
    \end{mydef}

\begin{restatable}{lemma}{lemextmodels}
  \label{lem:ext-models}
  $ T \models_e \fora \iff \forall k \geq 0 \qst T \ekmodels \fora$
\end{restatable}
\begin{proof}
  The $(\Rightarrow)$ direction is straightforward, while the
  $(\Leftarrow)$ direction follows from the fact that a session type induce a
  finite LTS.
  \end{proof}

\begin{restatable}[Fixpoint properties]{lemma}{lemfixpointprop}
  \label{lem:fix-point-prop}
  Let $T \in \TYPES$ and $\fora \in \FORMULA$, then we have:
  \begin{enumerate}
  \item 
    \label{enum:fix-mu}
    $T \ekmodels \mmnu{\varX} \fora \iff T \ekmodels \phi \subs{\mmnu{\varX} \fora}{\varX}$
  \item 
    \label{enum:fix-rec}
    $\rec{\varX} T \ekmodels \fora \iff T\subs{\rec{\varX}T}{\varX} \ekmodels \phi$
  \item 
    \label{enum:fix-tr}
    $ 
    \rec{\varX} T 
    \; \absub{\bigOp}_{e,k} \;
    T \subs{\rec{\varX} T}{\varX}
    \; \absub{\bigOp}_{e,k} \;
    \rec{\varX} T 
    $
  \end{enumerate}
\end{restatable}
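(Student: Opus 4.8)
The plan is to prove the three fixpoint properties by unwinding the definitions of $\ekmodels$ and $\absub{\bigOp}_{e,k}$, relying on the equi-recursive view of types and on the unfolding behaviour already baked into these approximation relations. Each of the three items is essentially a calculation, but they have different flavours, so I would treat them separately.

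For item~\ref{enum:fix-mu}, the key is the definition of $\ekmodels$ on the greatest fixpoint: $T \ekmodels \mmnu{\varX} \fora$ iff $T \ekmodels \approxi{\mmnu{\varX}\fora}{n}$ for all $n$. First I would expand the right-hand side $\fora\subs{\mmnu{\varX}\fora}{\varX}$ and observe, using the definition of the $n^{th}$ approximation, that $\approxi{\mmnu{\varX}\fora}{n+1} = \fora\subs{\approxi{\mmnu{\varX}\fora}{n}}{\varX}$. The goal then reduces to showing that quantifying over all approximations of $\mmnu{\varX}\fora$ is equivalent to quantifying over all approximations of $\fora\subs{\mmnu{\varX}\fora}{\varX}$; these two families of approximants interleave (the $(n+1)^{st}$ of one equals an unfolding of the $n^{th}$ of the other), so the two infinite conjunctions coincide. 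This is the standard ``fixpoint unrolling'' argument and should be routine once the approximants are lined up.

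For item~\ref{enum:fix-rec}, I would argue that $\rec{\varX}T$ and its unfolding $T\subs{\rec{\varX}T}{\varX}$ are indistinguishable by $\ekmodels$. Under the equi-recursive view the two types induce the same LTS (rule \inferrule{T-rec} in Figure~\ref{fig:lts-types} makes $\rec{\varX}T$ fire exactly the transitions of its unfolding), so their sets of $\any$-derivatives agree. The cleanest route is induction on $k$: the modal cases $\mmbox{\any}{\fora}$ and $\mmdiamond{\any}{\fora}$ appeal to the fact that $\rec{\varX}T \semarrow{\any} T'$ iff $T\subs{\rec{\varX}T}{\varX} \semarrow{\any} T'$, and the Boolean and fixpoint cases follow directly from the inductive hypothesis (note the fixpoint case reduces to item~\ref{enum:fix-mu} at level $k$). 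The subtlety to watch is the substitution $\subs{\vec{V}}{\vec{\varX}}$ hidden in the extended relation: I must check that unfolding commutes with this closing substitution, which it does because $\varX$ is bound in $\rec{\varX}T$ and hence does not appear among the free variables $\vec{\varX}$.

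For item~\ref{enum:fix-tr}, I would show both $\absub{\bigOp}_{e,k}$-inequalities by the same LTS-equivalence observation, now transported to the subtyping approximation of Definition~\ref{def:subtype-approx}. Since $\rec{\varX}T$ and $T\subs{\rec{\varX}T}{\varX}$ have the same top-level structure (same constructor, same indexing set $I$, same continuations up to the unfolding), the reflexivity-like derivation goes through: at each $k$ one applies the matching rule $\inferrule{S-out}$, $\inferrule{S-in}$, or $\inferrule{S-end}$ with $I = J$ and discharges the premises $T_i \absub{\bigOp}_{e,k-1} U_i$ by the induction hypothesis (the continuations being literally equal up to unfolding). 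I expect the \textbf{main obstacle} to be purely bookkeeping rather than conceptual: making precise that ``unfolding preserves the induced LTS / type structure'' interacts correctly with the universally-quantified closing substitution $\subs{\vec{V}}{\vec{\varX}}$ in both $\ekmodels$ and $\absub{\bigOp}_{e,k}$, and keeping the index $k$ decrementing correctly through the modalities. Once that commutation is established as a small preliminary lemma, all three items fall out by induction on $k$ with the fixpoint case of~\ref{enum:fix-rec} deferring to~\ref{enum:fix-mu}.
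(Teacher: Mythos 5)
Your proposal is correct and follows essentially the same route as the paper, which disposes of the lemma in one line: item~1 is a direct consequence of the definition of $\models_{e,k}$ via the interleaving of approximants, and items~2 and~3 follow from the equi-recursive view of types (i.e., the invariance of the induced LTS and of the top-level constructor under unfolding). Your expansion into an induction on $k$ with the commutation-with-closing-substitution check is just a more explicit rendering of the same argument.
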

\begin{proof}
  The first property is a direct consequence of the definition of
  $\models_{e,k}$, while the last two properties follow from the
  equi-recursive view of types.
\end{proof}

\subsection{Main results}

\thmmaintheorem*
\begin{proof}\label{proof:thmmaintheorem}
  Direct consequence of Lemma~\ref{lem:main-lemma}.
\end{proof}

\begin{restatable}[Main lemma]{lemma}{lemmainlemma}
  \label{lem:main-lemma}
    $\forall T, U \in \TYPES \qst 
  T \absub{\bigOp}_{e} U
  \iff
  U \models_e \Anyform{T}{\bigtOp}$
\end{restatable}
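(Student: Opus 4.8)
The plan is to reduce the infinitary statement to a family of finite statements, one per approximation level, and then pass to the limit using the two approximation lemmas. By Lemma~\ref{lem:ext-subtype} we have $T \absub{\bigOp}_e U \iff \forall k \geq 0 \qst T \absub{\bigOp}_{e,k} U$, and by Lemma~\ref{lem:ext-models} we have $U \emodels \Anyform{T}{\bigtOp} \iff \forall k \geq 0 \qst U \ekmodels \Anyform{T}{\bigtOp}$. Hence it suffices to establish, for all $T, U \in \TYPES$, the per-level equivalence
\[
T \absub{\bigOp}_{e,k} U \iff U \ekmodels \Anyform{T}{\bigtOp}
\qquad\text{for all } k \geq 0,
\]
after which quantifying over $k$ on both sides and chaining the two displayed biconditionals yields the lemma. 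I would prove the per-level equivalence by induction on $k$, and within the inductive step by case analysis on the head constructor of $T$; by contractivity and the equi-recursive view every type has a head normal form that is $\tend$ or a (co)choice, so the case analysis is well founded.

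The base case $k=0$ is immediate, since $T \absub{\bigOp}_{e,0} U$ and $U \models_{e,0} \Anyform{T}{\bigtOp}$ both hold by definition. For $k \geq 1$, assuming the equivalence at level $k-1$ for all types, I treat the head cases. If $T = \tend$, then $\Anyform{T}{\bigtOp} = \mmbox{\Actions}{\falsek}$; rule \inferrule{S-end} forces $U$ to be $\tend$ under the equi-recursive view, and $U \ekmodels \mmbox{\Actions}{\falsek}$ holds exactly when $U$ has no outgoing transition, i.e.\ $U = \tend$, so the two sides coincide. If $T$ is a $\bigtOp$-choice $\choice$, then $\Anyform{T}{\bigtOp} = \bigwedge_{i \in I} \mmdiamond{\opfun{\bigOp}{a_i}}{\Anyform{T_i}{\bigtOp}}$: the conjunction of diamonds says precisely that $U$ offers at least the branches $\{a_i \st i \in I\}$ (so $I \subseteq J$, matching rule \inferrule{S-out}) and that each $a_i$-derivative $U_i$ — unique, since the LTS of Figure~\ref{fig:lts-types} is deterministic — satisfies $\Anyform{T_i}{\bigtOp}$ at level $k-1$, which by the induction hypothesis is $T_i \absub{\bigOp}_{e,k-1} U_i$. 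The external-choice case $\cochoice$ is the mirror image: the box conjuncts give the derivative conditions, the diamond disjunction forces $U$ to offer at least one branch (non-emptiness of $J$), and the conjunct $\mmbox{\compset{\{\opfun{\bigOp}{a_i}\st i \in I\}}}{\falsek}$ forbids any branch outside $I$ (so $J \subseteq I$, matching rule \inferrule{S-in}); again the induction hypothesis at level $k-1$ converts each derivative obligation into $T_j \absub{\bigOp}_{e,k-1} U_j$.

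The remaining work — and the part I expect to be most delicate — is the recursive case together with the extended (open-type) bookkeeping. For $T = \rec{\varX} T'$ we have $\Anyform{T}{\bigtOp} = \mmnu{\varX}\Anyform{T'}{\bigtOp}$, and I would unfold once and chain: by Lemma~\ref{lem:fix-point-prop}(\ref{enum:fix-mu}), $U \ekmodels \mmnu{\varX}\Anyform{T'}{\bigtOp}$ iff $U \ekmodels \Anyform{T'}{\bigtOp}\subs{\mmnu{\varX}\Anyform{T'}{\bigtOp}}{\varX}$, and by compositionality (Lemma~\ref{lem:compo}) this last formula equals $\Anyform{T'\subs{\rec{\varX}T'}{\varX}}{\bigtOp}$; on the subtyping side Lemma~\ref{lem:fix-point-prop}(\ref{enum:fix-tr}) lets me replace $\rec{\varX}T'$ by its unfolding $T'\subs{\rec{\varX}T'}{\varX}$ at the same level $k$. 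Since the unfolding has a (co)choice or $\tend$ at its head by contractivity, the already-treated head cases close the equivalence. The crucial point is that all of these steps preserve the single index $k$, so the alignment of levels across the fixpoint never shifts — this is exactly why the approximation relations were defined to unfold at a fixed $k$. Finally, the extended quantifiers over substitutions $\vec{V}$ (and, on the model side, over $\vec{\forb}$ with $\vec{V} \models \vec{\forb}$) are handled uniformly by compositionality: substituting a type $V_j$ for a free variable $\varX_j$ on the left corresponds exactly to substituting its characteristic formula $\Anyform{V_j}{\bigtOp}$ for $\varX_j$ on the right, which is precisely why the statement is phrased for the extended relations. The main obstacle is thus not any individual case but keeping the level index and the variable substitutions synchronised through the fixpoint step; once Lemmas~\ref{lem:compo} and~\ref{lem:fix-point-prop} are available, each case is a direct unwinding of the definitions.
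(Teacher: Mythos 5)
Your proposal is correct and follows essentially the same route as the paper's proof: reduce to the $k$-indexed equivalence via Lemmas~\ref{lem:ext-subtype} and~\ref{lem:ext-models}, induct on $k$ with a case analysis on the head of $T$, and discharge the recursive case by one-step unfolding using Lemma~\ref{lem:fix-point-prop} together with compositionality (Lemma~\ref{lem:compo}). The only cosmetic difference is that the paper also records the trivial case $T = \varX$ explicitly and symmetrically unfolds $U$ when it is recursive, both of which your appeal to the extended relations and head normal forms covers implicitly.
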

\begin{proof}
  According to Lemmas~\ref{lem:ext-subtype} and~\ref{lem:ext-models},
  it is enough to show that
  \begin{equation}\label{eq:main-equi}
    \forall k \geq 0 
    \qst
    \forall U, T \in \TYPES
        \qst
    T \absub{\bigOp}_{e,k} U 
    \; \iff  \;
    U \ekmodels \Anyform{T}{\bigtOp}
  \end{equation}
  We show this by induction on $k$.
  If $k = 0$, the result holds trivially, let us show that it also holds for $k \geq 1$.
  We distinguish four cases according to the structure of $T$.
  \begin{enumerate}
  \item If $T = \varX$, then must have $U = \varX$, by definition of
  $\absub{\bigOp}_e$ and $\models_e$.
    \item If $T = \rec{\varX} T'$, then by Lemma~\ref{lem:fix-point-prop}, we have
    \begin{enumerate}
    \item 
      $ 
      U \ekmodels  \Anyform{T}{\bigtOp} 
      \iff
      U \ekmodels  \Anyform{T'}{\bigtOp}  \subs{\Anyform{T}{\bigtOp}}{\varX}
      $
    \item
      $ 
      T 
      \; \absub{\bigtOp}_{e,k} \;
      T' \subs{T}{\varX}
      \; \absub{\bigtOp}_{e,k} \;
      T 
      $
    \end{enumerate}
        Applying Lemma~\ref{lem:compo}, it is enough to show that:
    \[
    \forall T, U \in \TYPES 
    \qst
    T' \subs{\rec{\varX} T'}{\varX}
    \absub{\bigtOp}_{e,k}
    U
    \; \iff \;
    U \ekmodels \Anyform{T' \subs{\rec{\varX} T'}{\varX}}{\bigtOp}
    \]

        Hence, since we have assumed that the types are guarded, 
    we only have to deal with the cases where
    $T = \choice$, $T = \cochoice$, and $T = \tend$.
        
    On the other hand, considering both sides of the
    equivalence~\eqref{eq:main-equi}, we notice that $U$ cannot
    be a variable. Thus, let us assume that
    $U = \rec{\varX} U'$, by Lemma~\ref{lem:fix-point-prop}, we have
       \begin{enumerate}
    \item 
      $ 
      U \ekmodels  \Anyform{T}{\bigtOp} 
      \iff
      U' \subs{U}{\varX}
      \ekmodels  \Anyform{T}{\bigtOp}
      $
    \item
      $ 
      U
      \; \absub{\bigtOp}_{e,k} \;
      U' \subs{U}{\varX}
      \; \absub{\bigtOp}_{e,k} \;
      U 
      $
    \end{enumerate}
    Hence, applying Lemma~\ref{lem:compo} again, this case reduces to
    the cases where $U$ is of the form: $\choiceSet{j}{J}{U}$,
    $\cochoiceSet{j}{J}{U}$, or $\tend$.
          \item  $T = \tend$
    \begin{itemize}
    \item $(\Rightarrow)$ Assume $\tend = T \absub{\bigtOp}_{e,k} U$,
      then by Definition~\ref{def:subtype-approx}, we have $U = \tend$.
            By Definition~\ref{def:char-formula}, we have
      $\Anyform{\tend}{\bigtOp} = \mmbox{\Actions}{\falsek}$,
      and we 
      have $\tend \ekmodels  \mmbox{\Actions}{\falsek}$ since $ U = \tend \nsemarrow$.
          \item $(\Leftarrow)$ Assume $U \ekmodels \Anyform{\tend}{\bigtOp}$.
      By Definition~\ref{def:char-formula}, we have
      $U \ekmodels  \mmbox{\Actions}{\falsek}$,
            which holds iff $U \nsemarrow$, hence we must have $U = \tend$.
            Finally, by Definition~\ref{def:subtype-approx}, we have $\tend
      \ekabsub \tend$.
    \end{itemize}
      \item $T = \choice$
        \begin{itemize}
    \item $(\Rightarrow)$ Assume $\choice \absub{\bigtOp}_{e,k} U$.
            By  Definition~\ref{def:subtype-approx}, $ U = \choiceSet{j}{J}{U}$
      with $I \subseteq J$ (note that $\emptyset \neq I$ by assumption)
      and $\forall i \in I \qst T_i \absub{\bigtOp}_{e,k-1} U_i$.
            Hence, $\forall i \in I \qst U \semarrow{\anydir{a_i}} U_i$, and
      by induction hypothesis, we have $U_i \models_{e,k-1} \Anyform{T_i}{\bigtOp}$, for
      all $i \in I$.
      
      By Definition~\ref{def:char-formula}, we have
      $
      \Anyform{T}{\bigtOp} = \bigwedge_{i \in I} \mmdiamond{\opfun{\bigtOp}{a_i}}{\Anyform{T_i}{\bigtOp}}
      $.
      Thus we have to show that for all $i \in I$, $U \semarrow{\opfun{\bigtOp}{a_i}} U_i$
      and $U_i \models_{e,k-1} \Anyform{T_i}{\bigtOp}$; which follows from above.

          \item $(\Leftarrow)$ Assume $U \ekmodels \Anyform{\choice}{\bigtOp}$.
      From Definition~\ref{def:char-formula}, we have
      \[
      \Anyform{T}{\bigtOp} = \bigwedge_{i \in I} \mmdiamond{\opfun{\bigtOp}{a_i}}{\Anyform{T_i}{\bigtOp}}
      \]
            Hence, , $\forall i \in I \qst U \semarrow{\opfun{\bigtOp}{a_i}} U_i$, and 
      $U_i \models_{e,k-1} \Anyform{T_i}{\bigtOp}$, for all $i \in I$.
            Hence, we must have $U  =   \choiceSet{j}{J}{U}$ with $I \subseteq J$ and
            by induction hypothesis, this implies that $T_i \absub{\bigtOp}_{e,k-1} U_i$ for all $i \in I$.
                \end{itemize}
  \item $T = \cochoice$
    \begin{itemize}
    \item $(\Rightarrow)$ Assume $\cochoice \absub{\bigtOp}_{e,k} U$.
      By  Definition~\ref{def:subtype-approx}, $ U = \cochoiceSet{j}{J}{U}$, with
      $J \subseteq I$ and $\forall j \in J \qst T_j \absub{\bigtOp}_{e,k-1} U_j$.
            Hence, by induction hypothesis, we have $U_j \models_{e,k-1}
      \Anyform{T_j}{\bigtOp}$, for all $j \in J$.

      By  Definition~\ref{def:char-formula}, we have
            \begin{equation}\label{eq:formula-cochoice-1}
        {\Anyform{T}{\bigtOp}} \;  =  \;
        \bigwedge_{i \in I} \mmbox{\opfun{\bigtOp}{a_i}}{\Anyform{T_i}{\bigtOp}}
        \; \land \;
        \bigvee_{i \in I} \mmdiamond{\opfun{\bigtOp}{a_i}}{\truek}
        \; \land \;
        \mmbox{ \compset{ \{ \opfun{\bigtOp}{a_i} \st i \in I\} }}{\falsek}
      \end{equation}
            We must show that $U \ekmodels  {\Anyform{T}{\bigtOp}} $.
            Since $J \subseteq I$, we have that $\forall i
      \in I \qst T \semarrow{\opfun{\bigtOp}{a_i}} T_i \implies U
      \semarrow{\opfun{\bigtOp}{a_i}} U_i$, hence the first conjunct
      of~\eqref{eq:formula-cochoice-1} holds (using the induction
      hypothesis, cf.\ above).
            While the second conjunct of~\eqref{eq:formula-cochoice-1} must
      be true from the assumption that $ \emptyset \neq J$.
            Finally, the third conjunct of~\eqref{eq:formula-cochoice-1} is
      false only if $U \semarrow{\opfun{\bigtOp}{a_n}}$ with $n \notin I$,
      which contradicts $J \subseteq I$.
          \item $(\Leftarrow)$ Assume $U \ekmodels \Anyform{\cochoice}{\bigtOp}$.
      From  Definition~\ref{def:char-formula}, we have
            \begin{equation}\label{eq:formula-cochoice-2}
        {\Anyform{T}{\bigtOp}} \;  =  \;
        \bigwedge_{i \in I} \mmbox{\opfun{\bigtOp}{a_i}}{\Anyform{T_i}{\bigtOp}}
        \; \land \;
        \bigvee_{i \in I} \mmdiamond{\opfun{\bigtOp}{a_i}}{\truek}
        \; \land \;
        \mmbox{ \compset{ \{ \opfun{\bigtOp}{a_i} \st i \in I\} }}{\falsek}
      \end{equation}
      Hence, we must have $U = \cochoiceSet{j}{J}{U}$. It follows
      straightforwardly that $\emptyset \neq J \subseteq I$.
            Finally, the fact that for all $j \in J \qst T_j
      \absub{\bigtOp}_{e,k-1} U_j$, follows from the induction
      hypothesis.
            \qedhere
    \end{itemize}
      \end{enumerate}
\end{proof}

 \subsection{Duality and safety in session types}\label{app:safety}
\thmcharformduality*
\begin{proof} \label{proof:thmcharformduality}
  By straightforward induction on the structure of $T$. 
      \begin{enumerate}
  \item The result follows trivially if $T = \tend$ or $T = \varX$.
      \item If $T = \rec{\varX} T'$, then we have $\dual{\Anyform{\rec{\varX} T'}{\bigtOp}} =
    \mmnu{\varX} \dual{ \Anyform{T'}{\bigtOp}}$,
    and
    $
    \Anyform{\dual{T}}{\dual{\bigtOp}} = 
    \mmnu{\varX} \Anyform{\dual{T'}}{\dual{\bigtOp}}
    $. The result follows by induction hypothesis.
      \item If $T = \choice$, then we have 
        \begin{align*}
      \dual{\Anyform{T}{\bigtOp}}
      & = \dual { \bigwedge_{i \in I} \mmdiamond{\opfun{\bigtOp}{a_i}}{\Anyform{T_i}{\bigtOp}} }
      \\
      & =  \bigwedge_{i \in I} \mmdiamond{\opfun{\bigcoOp}{a_i}}{ \dual { \Anyform{T_i}{\bigtOp} } }
      \\
      \text{\tiny\it (I.H.)} \quad
      & = \bigwedge_{i \in I} \mmdiamond{\opfun{\bigcoOp}{a_i}}{  \Anyform{\dual{T_i}}{\dual{\bigtOp}} } 
            =  \Anyform{\dual{T}}{\dual{\bigtOp}}
    \end{align*}
      \item If $T = \cochoice$, then we have 
    \begin{align*}
      \dual{\Anyform{T}{\bigtOp}}
      & =   \dual{
        \bigwedge_{i \in I} \mmbox{\opfun{\bigtOp}{a_i}}{ { \Anyform{T_i}{\bigtOp}} }
      \; \land \;
      \bigvee_{i \in I} \mmdiamond{\opfun{\bigtOp}{a_i}}{\truek}
      \; \land \;
      \mmbox{ \compset{ \{ \opfun{\bigtOp}{a_i} \st i \in I\} }}{\falsek}
      }
      \\
      & = \bigwedge_{i \in I} \mmbox{\opfun{\bigcoOp}{a_i}}{ \dual{ \Anyform{T_i}{\bigtOp}} }
      \; \land \;
      \bigvee_{i \in I} \mmdiamond{\opfun{\bigcoOp}{a_i}}{\truek}
      \; \land \;
      \mmbox{ \compset{ \{ \opfun{\bigcoOp}{a_i} \st i \in I\} }}{\falsek}
      \\
      \text{\tiny\it (I.H.)} \quad
      & = \bigwedge_{i \in I} \mmbox{\opfun{\bigcoOp}{a_i}}{ \Anyform{ \dual{T_i}}{ \dual{\bigtOp}} }
      \; \land \;
      \bigvee_{i \in I} \mmdiamond{\opfun{\bigcoOp}{a_i}}{\truek}
      \; \land \;
      \mmbox{ \compset{ \{ \opfun{\bigcoOp}{a_i} \st i \in I\} }}{\falsek}
      \\
      & = \Anyform{\dual{T}}{\dual{\bigtOp}}
    \end{align*}
      \end{enumerate}
  \end{proof}

\thmsafety*
\begin{proof}
  $(\Longleftarrow)$ We prove that if $T \subtype \dual{U}$ then 
$T \spar U$ is safe by coinduction on the derivation of $T \subtype \dual{U}$
(recall that  $\subtype$ stands for $\absub{\,\inchoicetop}$). \\[1mm]
{\bf Case [S-end]} Obvious since $T =\dual{U}=\tend$ and 
$T \spar U \not \semarrow{}$. \\[1mm] 
{\bf Case [S-$\bigtOp$]} Suppose $T = \inchoice$. Then
$\dual{U}=\inchoiceSet{j}{J}{\dual{U}}$ such that $I \subseteq J$ and
$T_i \subtype \dual{U}_i$ for all $i \in I$. For all $a_i$ such that
$i \in I$, $T \semarrow{!a_i} T_i$ implies $U \semarrow{?a_i} U_i$.
Hence by [S-COM], we have $T \spar U\semarrow{} T_i \spar U_i$. Then
by coinduction hypothesis,
$T_i  \spar U_i$ is safe. \\[1mm]
{\bf Case [S-$\dual{\bigtOp}$]} Similar to the above case.\\[1mm]
$(\Longrightarrow)$ We prove 
$(\neg (T \subtype \dual{U}) \land \neg (\dual{T} \subtype U))$ implies 
$T \spar U$ has an error. Since the error rule coincides 
with the negation rules of subtyping in~\cite[Table 7]{CDY2014}, 
we conclude this direction. 
\end{proof}

\newpage
\newcommand{\bott}{\mathtt{bot}}
\newcommand{\topp}{\mathtt{top}}
\newcommand{\tsemarrow}[1]{\xhookrightarrow{\, #1 \,}}
\newcommand{\dirr}{\delta}
\newcommand{\lamsub}[2]{{{\Lambda}}(#1,#2)}
\newcommand{\RTYPES}{\mathcal{T}_R}
\newcommand{\trec}[1]{\mathtt{rec}\, {#1} . }
\newcommand{\tmmnu}[1]{\nu {#1} .\,}

\section{Appendix: Recursive types for the $\lambda$-calculus}
\label{sec:lambda}

\subsection{Recursive types and subtyping}
We consider recursive types for the $\lambda$-calculus below:
\[
t \coloneqq
\topp 
\bnfsep 
\bott 
\bnfsep
t_0 \rightarrow t_1
\bnfsep
\trec{v} t
\bnfsep
v
\]
Let $\RTYPES$ be the set of all closed recursive types.

A type $t$ induces an LTS according to the rules below:
\[
\begin{array}{c@{\qquad\qquad}c}
  \inference{top}
  {}
  {\topp \tsemarrow{\topp} \topp }
  &
  \inference{bot}
  {}
  {\bott \tsemarrow{\bott} \topp}
  \\[0.5cm]
  \inference{arrow}
  {i \in \{0,1\}}
  {t_0 \rightarrow t_1 \; \tsemarrow{i} \; t_i }
  &
  \inference
  {rec}
  {t \subs{\trec{v} t}{v} \tsemarrow{a} t'}
  {\trec{v} t \tsemarrow{a} t'}
\end{array}
\]
where we let $a$ range over $\{ 0, 1, \bott, \topp \}$.

\begin{definition}[Subtyping for recursive types]
  $\subtype \subseteq \RTYPES \times \RTYPES$ is the
  \emph{largest} relation that contains the rules:
  \[
  \coinference{S-bot}
  {t \in \RTYPES}
  {\bott \subtype t} 
  \qquad
  \coinference{S-top}
  {t \in \RTYPES}
  {t \subtype \topp} 
            \qquad
  \coinference{S-arrow}
  {
    t'_0 \subtype t_0
    &
    t_1 \subtype t'_1
  }
  {t_0 \rightarrow t_1  \subtype  t'_0 \rightarrow t'_1}
  \]
  Recall that we are
  assuming an equi-recursive view of types.
  The double line in the rules indicates that the rules
  should be interpreted \emph{coinductively}.
\end{definition}

\subsection{Characteristic formulae for recursive types}

We assume the same fragment of the modal $\mu$-calculus as in
Section~\ref{sub:mucal} but for ($i$) omitting the direction $\Op$ on
labels, i.e., we consider modalities: $\mmbox{a}{\fora}$ and
$\mmdiamond{a}{\fora}$; and ($ii$) using $v$ to range over recursion
variables.

Let $\dirr \in \{ \topp , \bott \}$, and $\dual{\bott} = \topp$,
$\dual{\topp} = \bott$.

\[
\lamsub{t}{\dirr} \defi
\begin{cases}
  \mmdiamond{\dirr}\truek
  &
  \text{if } t = \dirr
  \\
  \truek
  &
  \text{if } t = \dual{\dirr}
  \\
  \mmdiamond{0} \, \lamsub{t_0}{ \dual{\dirr} }
  \; \land \;
  \mmdiamond{1} \, \lamsub{t_1}{\dirr}
  &
  \text{if } t = t_0 \rightarrow t_1
  \\
  \tmmnu{v}  \lamsub{t'}{\dirr}
  & 
  \text{if } t = \trec{v} t'
  \\
  v
  & 
  \text{if } t = v
\end{cases}
\]

\begin{theorem} The following holds:
  \begin{itemize}
  \item $t \subtype t' \iff t' \models \lamsub{t}{\, \topp}$
  \item $t \subtype t' \iff t \; \models \lamsub{t'}{\, \bott}$
  \end{itemize}
\end{theorem}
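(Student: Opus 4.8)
The plan is to mirror, for recursive $\lambda$-types, the development used for session types in Appendix~\ref{app:proofs}: prove a single \emph{direction-parameterised} characterisation and read off both items as its $\dirr = \topp$ and $\dirr = \bott$ instances. Concretely, I would establish
\[
\forall t, t' \in \RTYPES \qst
\big( t' \models \lamsub{t}{\topp} \iff t \subtype t' \big)
\;\text{ and }\;
\big( t' \models \lamsub{t}{\bott} \iff t' \subtype t \big).
\]
Item~1 is then the first equivalence, while item~2 follows from the second by renaming: $t \models \lamsub{t'}{\bott} \iff t \subtype t'$. The two equivalences have to be proved \emph{together}, because the arrow clause of $\lamsub{\cdot}{\cdot}$ switches the direction to $\dual{\dirr}$ on the domain, so each direction appeals to the other.

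Following the session-type proof, I would first introduce the straightforward $\lambda$-analogues of the extended relations $\subtype_e$, $\emodels$ and their $k$-bounded approximations $\subtype_{e,k}$, $\ekmodels$ (as in Definition~\ref{def:subtype-approx} and the definition of semantics approximations), replacing the session-type subtyping rules by $\inferrule{S-bot}$, $\inferrule{S-top}$, and $\inferrule{S-arrow}$. The analogues of Lemmas~\ref{lem:ext-subtype} and~\ref{lem:ext-models} (each coinductive/greatest-fixpoint notion is the intersection of its finite approximants, using that every $t \in \RTYPES$ induces a finite LTS) let me reduce both target equivalences to their $k$-bounded forms. I would also port the fixpoint/unfolding facts of Lemma~\ref{lem:fix-point-prop} — namely $\trec{v}t \ekmodels \fora \iff t\subs{\trec{v}t}{v} \ekmodels \fora$ and $\trec{v}t \;\subtype_{e,k}\; t\subs{\trec{v}t}{v} \;\subtype_{e,k}\; \trec{v}t$ — together with a compositionality lemma in the style of Lemma~\ref{lem:compo}, so that the recursion case reduces to the non-recursive shapes $\topp$, $\bott$, and $t_0 \rightarrow t_1$.

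The core is then an induction on $k$ proving $t \subtype_{e,k} t' \iff t' \ekmodels \lamsub{t}{\topp}$ (and its $\bott$-twin), with $k=0$ trivial and a case analysis on the unfolded shape of $t$ in the inductive step. The cases $t = \topp$ and $t = \bott$ are immediate: $\lamsub{\topp}{\topp} = \mmdiamond{\topp}\truek$ is satisfied exactly by $\topp$, matching $\topp \subtype t' \iff t' = \topp$; while $\lamsub{\bott}{\topp} = \truek$ matches $\bott \subtype t'$ for every $t'$. In the arrow case, $\lamsub{t_0 \rightarrow t_1}{\topp} = \mmdiamond{0}\lamsub{t_0}{\bott} \land \mmdiamond{1}\lamsub{t_1}{\topp}$ forces any model $t'$ to offer both a $0$- and a $1$-transition, hence to be (up to unfolding) an arrow $t'_0 \rightarrow t'_1$ with $t'_0 \ekmodels \lamsub{t_0}{\bott}$ and $t'_1 \ekmodels \lamsub{t_1}{\topp}$ at level $k-1$; invoking the induction hypothesis — crucially the $\bott$-twin on the contravariant domain and the $\topp$-statement on the codomain — turns this into $t'_0 \subtype t_0$ and $t_1 \subtype t'_1$, i.e.\ $t_0 \rightarrow t_1 \subtype t'_0 \rightarrow t'_1$ by $\inferrule{S-arrow}$.

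The main obstacle is the interaction of contravariance with recursion. Because the domain flips the direction to $\dual{\dirr}$ while the greatest-fixpoint construction reuses a single variable $v$ for the bound recursion, the compositionality and unfolding steps only line up when every occurrence of $v$ is reached at the same direction — that is, when $v$ occurs at a consistent (covariant) polarity in the body of $\trec{v}t'$. I would therefore either restrict attention to such types or carry the polarity explicitly through the compositionality lemma, so that the substituted fixpoint formula is the one at the correct direction. Getting this bookkeeping right is the delicate part of the argument; the remaining cases are a routine transcription of the session-type proof.
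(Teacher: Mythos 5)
Your plan coincides with the paper's own proof in all essentials: the published argument is a mutual structural induction on $t$ (for item~1) and $t'$ (for item~2), with the two $\dirr$-instances appealing to one another in the contravariant position of the arrow case, exactly as you describe; the recursive case is dispatched with the single remark that it is ``similar to the proof of Theorem~\ref{thm:main-theorem}'', i.e.\ via the same extended relations, $k$-approximations, compositionality and unfolding lemmas you propose to port. On the $\topp$, $\bott$ and arrow cases you and the paper agree verbatim.

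The obstacle you flag in your final paragraph, however, is not mere bookkeeping: it is a genuine defect that the paper's proof silently skips. Because $\lamsub{v}{\dirr} = v$ irrespective of $\dirr$ while $\lamsub{\trec{v}t'}{\dirr}$ introduces a single binder $\tmmnu{v}$ at the outer direction, a contravariant occurrence of $v$ ends up denoting the formula at direction $\dirr$ where the semantics requires the one at $\dual{\dirr}$. Concretely, take $t = \trec{v}(v \rightarrow \topp)$ and $t' = \bott \rightarrow \topp$. Then $t \subtype t'$, by \inferrule{S-arrow} with premises $\bott \subtype t$ and $\topp \subtype \topp$; but
\[
\lamsub{t}{\topp} \;=\; \tmmnu{v}\big(\mmdiamond{0}{v} \,\land\, \mmdiamond{1}{\mmdiamond{\topp}{\truek}}\big)
\]
and $t' \not\models \lamsub{t}{\topp}$: already the second approximant of the fixpoint requires the $0$-derivative $\bott$ of $t'$ to have a $0$-transition, which it does not. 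So the left-to-right direction of item~1 fails as stated. Your two suggested remedies are exactly what is needed: either restrict attention to types in which each recursion variable occurs at a single (covariant) polarity under its binder --- in which case your $k$-indexed argument goes through --- or make the construction polarity-aware (e.g.\ bind a pair of variables, one per direction, and prove the compositionality lemma with the substituted formula taken at the polarity of each occurrence). Be aware that without one of these fixes the theorem you were asked to prove is false, so the ``restriction'' you mention is not a convenience in the proof but a necessary amendment to the statement; on this point your proposal is more careful than the paper itself.
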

\begin{proof}
  We show only the $(\Leftarrow)$ direction here.
    \begin{enumerate}
  \item \label{en:lambda-top}
    We show $t \subtype t' \Leftarrow t' \models \lamsub{t}{\topp}$
    by induction on $t$.
        \begin{itemize}
    \item If $t = \topp$, then $\lamsub{t}{\topp} =
      \mmdiamond{\topp}\truek$, hence $t' = \topp$.
          \item If $t = \bott$, then $\lamsub{t}{\topp} = \truek$ hence $t'$
      can be any type, as expected.
          \item If $t = t_0 \rightarrow t_1$, then
      \[
      \lamsub{t}{\topp} = 
      \mmdiamond{0} \, \lamsub{t_0}{ \bott }
      \; \land \;
      \mmdiamond{1} \, \lamsub{t_1}{\topp}
      \]
      hence we must have $t' = t'_0 \rightarrow t'_1$  with
      $t'_0 \models  \lamsub{t_0}{ \bott }$
      (hence $t'_0 \subtype t_0$, by IH, see below)
      and
      $t'_1 \models \lamsub{t_1}{\topp}$
            (hence $t_1 \subtype t'_1$ by IH).
    \end{itemize}
  \item \label{en:lambda-bot}
    We show $t \subtype t' \Leftarrow t \models \lamsub{t'}{\bott}$
    by induction on $t'$.
    \begin{itemize}
    \item If $t' = \bott$, then  $\lamsub{t'}{\bott} =  \mmdiamond{\bott}\truek$ 
      and $t = \bott$.
          \item if $t' = \topp$, then $\lamsub{t'}{\bott} = \truek$ and $t$
      can be any type, as expected.
          \item If $t' = t'_0 \rightarrow t'_1$, then
      \[
      \lamsub{t'}{\bott} = 
      \mmdiamond{0} \, \lamsub{t'_0}{ \topp }
      \; \land \;
      \mmdiamond{1} \, \lamsub{t'_1}{\bott}
      \]
      hence we must have $t = t_0 \rightarrow t_1$  with
      $t_0 \models  \lamsub{t'_0}{ \topp }$
      (hence $t'_0 \subtype t_0$, by IH, see above)
      and
      $t_1 \models \lamsub{t'_1}{\bott}$
            (hence $t_1 \subtype t'_1$ by IH).
    \end{itemize}
  \end{enumerate}
  The other direction is similar to the above, while the recursive
  step is similar to the proof of Theorem~\ref{thm:main-theorem}.
\end{proof}

\end{document}